\numberwithin{equation}{section} 
\theoremstyle{plain}
\newtheorem{theo+}           {Theorem}      [section]
\newtheorem{prop+}  [theo+]  {Proposition}
\newtheorem{coro+}  [theo+]  {Corollary}
\newtheorem{lemm+}  [theo+]  {Lemma}
\newtheorem{defi+}  [theo+]  {Definition}
\newtheorem{conj+}  [theo+]  {Conjecture}
\theoremstyle{definition}
\newtheorem{rema+}  [theo+]  {Remark}
\newtheorem{prob+}  [theo+]  {Problem}
\newtheorem{exam+}  [theo+]  {Example}
\newtheorem{ass+} [theo+] {Assumption}
\newenvironment{theorem}{\begin{theo+}}{\end{theo+}}
\newenvironment{proposition}{\begin{prop+}}{\end{prop+}}
\newenvironment{corollary}{\begin{coro+}}{\end{coro+}}
\newenvironment{lemma}{\begin{lemm+}}{\end{lemm+}}
\newenvironment{assumption}  {\begin{ass+}}{\end{ass+}}
\newcommand{\ti}{\mathrm i}
\newcommand{\Id}{\operatorname{Id}}
\newcommand{\Ker}{\operatorname{Ker}}
\begin{document}

\baselineskip 18pt
\larger[2]
\title
[Nearest-neighbour correlation functions and Painlev\'e VI] 
{Nearest-neighbour correlation functions\\ for the supersymmetric XYZ spin chain\\ and Painlev\'e VI}
\author{Christian Hagendorf}
\thanks{The first author acknowledges support from the Fonds de la Recherche Scientifique (F.R.S.-FNRS) and the Wetenschappeleijk Onderzoek-Vlanderen (FWO) through the Belgian Excellence of Science (EOS) project no. 30889451 ``PRIMA -- Partners in Research on Integrable Models and Applications''.}
\address
{Institut de Recherche en Math\'ematique et Physique \\ Universit\'e catholique de Louvain\\
Chemin du Cyclotron 2\\
B-1348 Louvain-la-Neuve\\ Belgium}
\email{christian.hagendorf@uclouvain.be}
\author{Hjalmar Rosengren}
\thanks{The second author is supported by the Swedish Research Council, project no.\ 2020-04221.}
\address
{Department of Mathematical Sciences
\\ Chalmers University of Technology and University of Gothenburg\\SE-412~96 G\"oteborg, Sweden}
\email{hjalmar@chalmers.se}
\urladdr{http://www.math.chalmers.se/{\textasciitilde}hjalmar}

\begin{abstract}
We study nearest-neighbour correlation functions for the ground state of the supersymmetric XYZ spin chain with odd length and periodic boundary conditions. Under a technical assumption related to the $Q$-operator of the corresponding eight-vertex model, we show that they can be expressed exactly in terms of the Painlev\'e VI tau functions $s_n$ and $\bar s_n$ introduced by Bazhanov and Mangazeev. Furthermore, we give an interpretation of the correlation functions in terms of the Painlev\'e VI Hamiltonian.
\end{abstract}

\maketitle

\section{Introduction}

In the theory of lattice models, solvability usually refers to situations when physically relevant quantities can be computed exactly in the infinite-lattice limit. A famous example is Baxter's computation of the ground-state energy per site for the XYZ spin chain \cite{b2}. By contrast, it is much more unusual to encounter exact results for finite-size systems.  In the context of the XYZ chain, one instance of this phenomenon was discovered by Stroganov \cite{st}. Baxter had already observed that, if the model's parameters satisfy
\begin{subequations}
\label{jss}
\begin{equation}J_xJ_y+J_xJ_z+J_yJ_z=0
\end{equation}
and
\begin{equation}
  J_x+J_y+J_z>0,
\end{equation}
\end{subequations}%
then the ground-state energy per site of the infinite chain takes the remarkably simple form
\begin{equation}\label{gse}-\frac{J_x+J_y+J_z}2.\end{equation}
Stroganov found empirically that \eqref{gse} seems to hold exactly also for finite chains of odd length with periodic boundary conditions.
 
One conceptual explanation of why the condition \eqref{jss} is  special was given by Fendley and the first author \cite{fh}. They proved that in this case the XYZ chain is supersymmetric  in the sense that, roughly speaking, the Hamiltonian can be expressed as an anticommutator of nilpotent operators. This property was used in \cite{hl} to give a rigorous proof of Stroganov's observation. 

When $J_x=J_y$, the XYZ chain reduces to the XXZ chain. It is customary to take $J_x=J_y=1$ and write $J_z=\Delta$. The condition \eqref{jss} is then $\Delta=-1/2$. There is a large literature on the XXZ chain with $\Delta=-1/2$ and its relations to combinatorial objects such as loop configurations, plane partitions and alternating sign matrices. We mention here only the Razumov--Stroganov conjecture \cite{rs1}, which was eventually proved by Cantini and Sportiello \cite{cs}.  Generalizing this work to the XYZ chain is difficult and  relations to combinatorics are still not well understood. However, it seems that three-colourings should play a role \cite{h1,h2,r3}.

An  intriguing aspect of the supersymmetric XYZ chain is its relations to the Painlev\'e VI equation, which are  present even for finite chains. Bazhanov and Mangazeev studied the ground-state eigenvalue of the $Q$-operator, in Stroganov's setting of odd length $L=2n+1$ and periodic boundary conditions \cite{bm,bm1,bm4}. They found that, at special values of the spectral parameter, this eigenvalue can be expressed in terms of certain polynomials denoted $s_n$ and $\bar s_n$. It was observed that these polynomials satisfy recursions that can be used to identify them with tau functions of Painlev\'e VI. These recursions were proved in \cite{r4}.  It also appears that the same polynomials are directly related to the eigenvector. In an appropriate normalization and ignoring elementary factors, its square norm is conjecturally given by $s_ns_{-n-1}$, its component with all spins equal given by  $s_n$ and its component with all but one spin equal given by  $\bar  s_n$. Partial proofs of these results have been given in \cite{bh,zj}. 
 
In the present work, we study nearest-neighbour correlation functions for the ground state of the supersymmetric XYZ chain, still in Stroganov's setting. We present an overview of our results on these correlation functions in \S 2. We find that they can be expressed in terms of a single quantity that we denote $f_n$, see Proposition \ref{cfp}. In our main result, Theorem \ref{mt}, we express $f_n$ in terms of the combination $\bar s_n\bar s_{-n-1}/s_ns_{-n-1}$ of Painlev\'e tau functions. Our proof is incomplete, as it is based on a technical assumption related to the eigenvalue of the $Q$-operator. In Theorem \ref{fpqp}, we give  a more direct interpretation of $f_n$ as a Painlev\'e VI Hamiltonian, evaluated at a solution of another instance of Painlev\'e VI, with shifted parameters. 

Our methods are similar to those of Stroganov \cite{st,st2}, who computed the correlation functions in the XXZ limit case, but the details are more involved. We present the details of the computation in \S 3. The XYZ chain can be parametrized by the crossing parameter $\eta$ and the elliptic nome $\tau$, where the supersymmetric case is $\eta=\pi/3$. Using the  Hellmann--Feynman theorem, we can express the quantity $f_n$ in terms of the ground-state energy per site $\varepsilon$, see \eqref{feg}. We then exploit the relation between the XYZ spin chain and the eight-vertex model to deduce an expression for $f_n$ in terms of the  transfer-matrix eigenvalue $\lambda(u)$, see Proposition \ref{fnp}. Next, we apply Baxter's $TQ$-relation, which relates the transfer matrix and the $Q$-operator. It is at this point that we need to make an assumption, Assumption~\ref{ass}. Although $Q$-operators are known both for generic $\eta\neq\pi/3$ and for $\eta=\pi/3$, we do not know any $Q$-operator that is analytic in a neighbourhood of $\eta=\pi/3$. Thus, it is not clear that we can differentiate the $TQ$-relation with respect to $\eta$. However, assuming that this is allowed leads to an expression for $f_n$ in terms of the $Q$-operator eigenvalue $q(u)$, see Corollary \ref{lxc}. In the case of the XXZ chain, one can write down $q(u)$ explicitly and  easily finish the computation \cite{st,st2}. For the XYZ chain, most of the work remains. The key fact is a new differential-difference equation relating $q(u)$ and $q(u+\pi)$, see Theorem \ref{ddt}. We also need some properties of the polynomials $s_n$ and $\bar s_n$ proved in \cite{r4}.

After having proved Theorem \ref{mt}, we turn to Theorem \ref{fpqp} in \S 4. Here, the key fact is the factorization \eqref{hpf} of the Painlev\'e VI Hamiltonian for special parameter values, which is also related to classical solutions of Painlev\'e VI.

We relegate several technical points to the appendices. In Appendix \ref{tfs}, we collect a number of theta function identities that we use throughout the main text. In Appendix \ref{sp}, we present several properties of the polynomials $s_n,\bar s_n$. Finally, we discuss a derivation of the nearest-neighbour correlation functions in the infinite-lattice limit, starting from Baxter's formula for the ground-state energy per site, in Appendix \ref{ills}. It is interesting to note that our expression for $f_n$, given in Theorem \ref{mt}, splits naturally as an elementary term corresponding to the infinite-lattice limit and a term involving Painlev\'e tau functions that gives the finite-length correction.

We conclude this introduction by mentioning several open problems. Clearly, it would be interesting to find explicit expressions for other correlation functions. For instance, the one-point correlation functions are known in the infinite-lattice limit \cite{bk,jmn}, but an exact finite-size result for the supersymmetric case remains to be found. Another natural candidate to investigate is the finite-size emptiness formation probability, whose expression is explicitly known for the limit of the XXZ chain at $\Delta=-1/2$ \cite{ca}.

One major obstruction to studying the supersymmetric XYZ chain seems to be that its ground-state eigenvectors are not known explicitly. By contrast, in the  XXZ case one can write down integral formulas for all their components  \cite{rsz}. Importantly for many applications, these formulas extend to the inhomogeneous six-vertex model.
 If one had similar expressions for the eight-vertex model, one could presumably complete the proof of many properties of the ground state described in \cite{bm4,bh,rs,zj}. They might also be useful for proving our Assumption \ref{ass} and for computing more general correlation functions. We mention in this context the paper \cite{fwz}, which contains integral formulas of the desired type for the eight-vertex-solid-on-solid model.

One important limit of the supersymmetric XYZ chain is obtained by letting $n$ and $\tau/\ti$ tend to infinity with $e^{\pi\ti\tau}\sim n^{-2/3}$  \cite{bm}. This limit is believed to be related to the sine-Gordon model at its supersymmetric point and to polymers on a cylinder \cite{l,fs}. In this limit, Painlev\'e VI should degenerate to Painlev\'e III. It would be interesting to compute the corresponding limit of the  correlation functions.

The present work provides further examples of the relation between the supersymmetric XYZ spin chain and the Painlev\'e VI equation discovered in \cite{bm1}. We stress that we have no conceptual explanation for this relation. Both our incomplete proof that the polynomials $s_n$ and $\bar s_n$ appear in the correlation functions and the proof that they are Painlev\'e tau functions \cite{r4} emerge like miracles at the end of long computations.
   
Another phenomenon that calls for further explanation is the relation between the classical and quantum versions of Painlev\'e VI. To discuss this, let us first recall the elliptic form of Painlev\'e VI, given in \cite{ma} as
\begin{equation}
   \label{cpvi}\frac{d^2q}{d\tau^2}=-\frac 1{4\pi^2}\sum_{j=0}^3\beta_j\wp'(q+\omega_j|1,\tau),
\end{equation}
where $\wp$ is the Weierstrass elliptic function, $\beta_j$ are parameters and
$$(\omega_0,\omega_1,\omega_2,\omega_3)=\left(0,\frac 12,\frac\tau2,\frac{\tau+1}2\right). $$
Writing $p=dq/d\tau$, it is equivalent to the Hamiltonian system
$$\frac{d q}{d \tau}=\frac{\partial H}{\partial p},\qquad \frac{dp}{d\tau}=-\frac{\partial H}{\partial q}, $$
   where
$$H=\frac{p^2}{2}+V(q,\tau)$$
and 
\begin{equation}\label{dtv}
V(q,\tau)=\frac 1{4\pi^2}\sum_{j=0}^3\beta_j\wp(q+\omega_j|1,\tau)
\end{equation}
is the Darboux--Treibich--Verdier potential \cite{d,tv,v}.
 
The  corresponding Schr\"odinger equation
  \begin{equation}
  \label{qpvi}
  2\pi\ti \psi_\tau=\frac 12\psi_{xx}-V(x,\tau)\psi
\end{equation}
has been called the quantum Painlev\'e VI equation (if $\tau\in\ti\mathbb R$ it is a Schr\"odinger equation in  imaginary time, that is, a heat equation). There is a direct link between classical and quantum Painlev\'e VI based on Lax pairs, the so called quantum Painlev\'e--Calogero correspondence \cite{zz}, see also \cite{cd,n,su1,su2}.
 
In \cite{bm}, it was found that the $Q$-operator eigenvalue satisfies the non-station\-ary Lam\'e equation, which is the special case of \eqref{qpvi} where $\beta_0=\beta_1=n(n+1)/2$ and $\beta_2=\beta_3=0$. More generally, in \cite{r4} the second author constructed solutions to \eqref{qpvi} with any values $\beta_j=k_j(k_j+1)/2$, $k_j\in\mathbb Z$. For such parameters, the potential \eqref{dtv} has the so-called finite-gap property \cite{tv,v}. Specializing the variable $x$ to a half-period, the resulting functions of $\tau$ are tau functions of classical Painlev\'e VI in the Picard class, that is, $\beta_j=l_j^2/2$, $l_j\in\mathbb Z$. This link between quantum and classical Painlev\'e VI seems different from the quantum Painlev\'e--Calogero correspondence of \cite{zz}. It would be interesting to know if there is a hidden connection.

\section{Statement of results}
\label{srs}

\subsection{Nearest-neighbour correlation functions} We write $V=\mathbb C|{\uparrow}\rangle \oplus \mathbb C|{\downarrow}\rangle$.
The Pauli matrices acting on $V$ are
$$\sigma^x=\left[\begin{matrix}0&1\\1&0\end{matrix}\right],\qquad  \sigma^y=\left[\begin{matrix}0&-\ti \\ \ti &0\end{matrix}\right],\qquad \sigma^z=\left[\begin{matrix}1&0\\0&-1\end{matrix}\right]. $$
The XYZ spin chain is defined by the Hamiltonian
\begin{equation}\label{ham}\mathbf H=-\frac 12\sum_{j=1}^L\left(J_x\,\sigma_j^x\sigma_{j+1}^x+J_y\,\sigma_j^y\sigma_{j+1}^y+J_z\,\sigma_j^z\sigma_{j+1}^z\right) \end{equation}
acting on $V^{\otimes L}$.
Here, $J_x$, $J_y$ and $J_z$ are real anisotropy parameters and the lower indices of the Pauli matrices indicate on which tensor factor they act. We will only consider periodic boundary conditions, that is, the index $L+1$ should be understood as $1$. Moreover, we will always assume that $L$ is odd and write $L=2n+1$.

We are interested in the supersymmetric case \eqref{jss}. We write the lowest eigenvalue of $\mathbf H$ as $L\varepsilon$, where $\varepsilon$ is the ground-state energy per site. As was mentioned in the introduction, it is known that
\begin{equation}\label{gsf}
  \varepsilon=-\frac{J_x+J_y+J_z}2.
\end{equation}

Let $\Psi$ be an eigenvector of $\mathbf H$ with eigenvalue $L\varepsilon$.
We write
$$\langle \mathbf A\rangle=\frac{\langle\Psi|\mathbf A|\Psi\rangle}{\langle\Psi|\Psi\rangle} $$
for the expectation value of an observable $\mathbf A$ with respect to the corresponding ground-state. Here, $\langle \Phi|\Phi'\rangle$ denotes the standard Hermitian scalar product of two vectors $\Phi,\Phi'\in V^{\otimes L}$. We are interested in the nearest-neighbour correlation functions
\begin{equation}\label{cfd}C^x=\langle \sigma_j^x\sigma_{j+1}^x\rangle,\qquad  C^y=\langle \sigma_j^y\sigma_{j+1}^y\rangle,\qquad C^z=\langle \sigma_j^z\sigma_{j+1}^z\rangle.\end{equation}

\begin{lemma}\label{igsl}
The correlation functions \eqref{cfd} are independent of the choice of ground-state vector $\Psi$, as well as of the index $j$.
\end{lemma}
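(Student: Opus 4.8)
The plan is to reduce both assertions to one observation: on the ground-state eigenspace $E_0=\Ker(\mathbf H-L\varepsilon)$, each operator $\sigma_j^a\sigma_{j+1}^a$ acts as a scalar. Granting this, if $\mathbf A:=\sigma_j^a\sigma_{j+1}^a$ restricts to $c\,\Id_{E_0}$, then $\langle\Psi|\mathbf A|\Psi\rangle=c\langle\Psi|\Psi\rangle$ for every nonzero $\Psi\in E_0$, so $\langle\sigma_j^a\sigma_{j+1}^a\rangle=c$ does not depend on $\Psi$; and independence of $j$ follows because the cyclic shift $\mathbf T$ commutes with $\mathbf H$, hence preserves $E_0$, while $\mathbf T^{-1}\sigma_j^a\sigma_{j+1}^a\,\mathbf T=\sigma_{j+1}^a\sigma_{j+2}^a$, so the scalars attached to neighbouring bonds must agree.

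To prove that $\mathbf A$ is scalar on $E_0$ I would use the spin-flip symmetries $\mathbf R^a:=\sigma_1^a\cdots\sigma_L^a$, $a\in\{x,y,z\}$. Each $\mathbf R^a$ commutes with $\mathbf H$ and thus preserves $E_0$, and each $\mathbf A=\sigma_j^a\sigma_{j+1}^a$ commutes with every $\mathbf R^b$ — trivially if $a=b$, and because the two anticommutations occurring at sites $j$ and $j+1$ cancel if $a\neq b$. The structural input is that $E_0$ is two-dimensional, which is known from the supersymmetry analysis of the odd-length chain in \cite{fh,hl}. Since $L$ is odd, $\mathbf R^x$ and $\mathbf R^z$ anticommute, so on the two-dimensional space $E_0$ they are a pair of anticommuting Hermitian involutions; diagonalising $\mathbf R^x$ as $+\Id$ on a line $\mathbb C\Psi_+$ and $-\Id$ on $\mathbb C\Psi_-$ (both eigenvalues occur, since $\mathbf R^x$ cannot be $\pm\Id$ on $E_0$ because $\mathbf R^z$ is invertible and anticommutes with it), one gets $\mathbf A\Psi_\pm=c_\pm\Psi_\pm$ from $[\mathbf A,\mathbf R^x]=0$, and then $c_+=c_-$ from $[\mathbf A,\mathbf R^z]=0$ together with $\mathbf R^z\Psi_+\in\mathbb C\Psi_-$. (Equivalently, $E_0$ carries the unique two-dimensional irreducible representation of the group generated by $\mathbf R^x$ and $\mathbf R^z$, and one invokes Schur's lemma.) The common scalar is real because $\mathbf A$ is Hermitian.

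The one genuinely nontrivial ingredient — and the step I expect to be the main obstacle were it not available to be quoted — is the two-fold degeneracy of the ground state: without $\dim E_0=2$, the operator $\mathbf A$ would only be guaranteed block-scalar on $E_0$, and the correlation functions could in principle depend on the chosen ground-state vector. Everything else is elementary symmetry bookkeeping. I would therefore open the proof by recalling the needed facts about the ground state from \cite{fh,hl} and then run the short argument above.
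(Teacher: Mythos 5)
Your approach is sound and genuinely different from the paper's. The paper works with the explicit parity decomposition $V^{\otimes L}=W^+\oplus W^-$ into states with an even/odd number of down spins (i.e.\ the eigenspaces of your $\mathbf R^z$), quotes from \cite{hl} that the ground space of $\mathbf H\big|_{W^+}$ is one-dimensional and that $\Psi^+$ is translation invariant, and checks directly that $\sigma_j^a\sigma_{j+1}^a$ preserves $W^\pm$ and commutes with $\mathbf F=\mathbf R^x$. You repackage essentially the same ingredients representation-theoretically: the two-dimensional ground space carries the irreducible representation of the group generated by the anticommuting Hermitian involutions $\mathbf R^x,\mathbf R^z$ (oddness of $L$ enters exactly here), and Schur's lemma does the rest; for $j$-independence you invoke the shift symmetry rather than translation invariance of a particular eigenvector, which the paper only mentions parenthetically as an alternative. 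Your route needs only $\dim E_0=2$ as external input rather than the finer statement about $W^+$, a mild economy. (Minor point: your $\mathbf T$ for the cyclic shift collides with the paper's transfer matrix.)

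One step is misstated, although the argument survives. Since $\sigma_j^a\sigma_{j+1}^a$ does not commute with $\mathbf H$, it does not preserve $E_0$, so it cannot ``restrict to $c\,\Id_{E_0}$''; if it did, every $C^a$ would equal $\pm 1$ (as $\mathbf A^2=\Id$), contradicting for instance the XXZ values recorded in the paper. For the same reason, $[\mathbf A,\mathbf R^x]=0$ only places $\mathbf A\Psi_+$ in the $2^{L-1}$-dimensional $+1$-eigenspace of $\mathbf R^x$ inside $V^{\otimes L}$, not in $\mathbb C\Psi_+$, so the step $\mathbf A\Psi_\pm=c_\pm\Psi_\pm$ does not follow as written. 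The fix is to run your argument for the compression $P\mathbf A P\big|_{E_0}$, where $P$ is the orthogonal projection onto $E_0$: since $\mathbf R^x$, $\mathbf R^z$ and the shift commute with $\mathbf H$, they commute with $P$, hence $P\mathbf A P\big|_{E_0}$ commutes with the irreducible two-dimensional representation and equals $c\,\Id_{E_0}$ by Schur's lemma. This gives exactly what you need, namely $\langle\Psi|\mathbf A|\Psi\rangle=\langle\Psi|P\mathbf A P|\Psi\rangle=c\,\langle\Psi|\Psi\rangle$ for all $\Psi\in E_0$, and the same compression argument transfers the scalar from one bond to the next. With that substitution the proof is complete.
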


\begin{proof}
Write $V^{\otimes L}=W^+\oplus W^-$, where $W^+$ and $W^-$ are spanned by states with, respectively, an even and an odd number of down spins.
This decomposition is preserved by $\mathbf H$. It is proved in \cite{hl} that the ground-state eigenspace of $\mathbf H\big|_{W^+}$ is one-dimensional. Let $\Psi^+$ be a vector in this space, normalized so that $\langle\Psi^+|\Psi^+\rangle=1$. Then, the whole eigenspace is spanned by $\Psi^+$ and
$\Psi^-=\mathbf F\Psi^+\in W^-$, where $\mathbf F=\prod_{j=1}^L\sigma^x_j$ is the spin-flip operator.
Let $\Psi=\alpha\Psi^++\beta\Psi^-$ be an arbitrary ground-state vector. By orthogonality,
$$\langle \Psi|\Psi\rangle=|\alpha|^2 \langle\Psi^+|\Psi^+\rangle+|\beta|^2 \langle\Psi^-|\Psi^-\rangle=|\alpha|^2+|\beta|^2.$$
Moreover, for $a\in\{x,y,z\}$,
\begin{align*}\langle\Psi|\sigma_j^a\sigma_{j+1}^a|\Psi\rangle&=|\alpha|^2\langle\Psi^+|\sigma_j^a\sigma_{j+1}^a|\Psi^+\rangle
+|\beta|^2\langle\Psi^-|\sigma_j^a\sigma_{j+1}^a|\Psi^-\rangle\\
&=(|\alpha|^2+|\beta|^2)\langle\Psi^+|\sigma_j^a\sigma_{j+1}^a|\Psi^+\rangle,
 \end{align*}
where we used in the first step that the operator $\sigma_j^a\sigma_{j+1}^a$ preserves the spaces $W^\pm$ and in the second step that it commutes with $\mathbf F$. This shows that the correlation functions \eqref{cfd} do not change if we replace $\Psi$ by $\Psi^+$. The final statement follows since $\Psi^+$ is translation invariant \cite{hl} (or, alternatively, since by the first statement we could replace $\Psi^+$ by any one of its translates).  
\end{proof}

Our goal is to compute the correlation functions \eqref{cfd} exactly.
As a first step, we express them in terms of a single quantity that we denote $f_n$.
It will be convenient to introduce the normalized discriminant
 \begin{equation}\label{capz}Z=\frac{(J_x-J_y)^2(J_y-J_z)^2(J_x-J_z)^2}{J_x^2J_y^2J_z^2}=-4\frac{(J_x+J_y+J_z)^3}{J_xJ_yJ_z}-27, \end{equation}
 where the second equality depends on \eqref{jss}. 

\begin{proposition}\label{cfp}
For the supersymmetric periodic XYZ spin chain of odd length $L=2n+1$, 
we can write
\begin{equation}\label{caj}C^{a}=1+\frac{J_xJ_yJ_z}{J_a^2(J_x+J_y+J_z)}\,f_n,\qquad a\in\{x,y,z\},\end{equation}
where
 $f_n$ is a rational function of $Z$. 
\end{proposition}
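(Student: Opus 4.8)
The plan is to show that all three correlation functions $C^x, C^y, C^z$ are controlled by a single scalar $f_n$, and that this scalar is a rational function of the discriminant $Z$. The starting observation is that, by Lemma \ref{igsl}, we may use the translation-invariant normalized ground state $\Psi^+$ in $W^+$, so $C^a = \langle\Psi^+|\sigma_j^a\sigma_{j+1}^a|\Psi^+\rangle$ is independent of $j$, and hence
$$C^a = \frac{1}{L}\sum_{j=1}^L\langle\Psi^+|\sigma_j^a\sigma_{j+1}^a|\Psi^+\rangle = -\frac{2}{L J_a}\,\Big\langle \text{the } J_a\text{-part of } \mathbf H\Big\rangle.$$
Adding these up and using $\langle \mathbf H\rangle = L\varepsilon = -L(J_x+J_y+J_z)/2$ gives one linear relation $J_xC^x + J_yC^y + J_zC^z = J_x+J_y+J_z$. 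To extract each $C^a$ individually I would invoke the Hellmann--Feynman theorem: differentiating the eigenvalue equation $\mathbf H\Psi^+ = L\varepsilon\Psi^+$ with respect to $J_a$ (keeping the constraint \eqref{jss} in mind, i.e. treating two of the couplings as independent) yields $\langle \partial\mathbf H/\partial J_a\rangle = L\,\partial\varepsilon/\partial J_a$, which by \eqref{ham} expresses $C^a$ in terms of derivatives of the ground-state energy. Since the constraint \eqref{jss} ties the three couplings together, the two \emph{independent} pieces of data are: the value $\varepsilon = -(J_x+J_y+J_z)/2$, which is already known exactly, and \emph{one} genuinely new quantity — the finite-size piece that does not follow from $\varepsilon$ alone. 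Calling that quantity (suitably normalized) $f_n$, the Hellmann--Feynman computation then forces each $C^a$ into the form \eqref{caj}: the ``$1+$'' comes from the $\varepsilon$ contribution and the second term, with its $J_xJ_yJ_z/(J_a^2(J_x+J_y+J_z))$ prefactor, is the symmetric way of distributing $f_n$ among the three directions consistently with the linear relation above. One should check that the prefactors in \eqref{caj} do satisfy $\sum_a J_a\cdot J_xJ_yJ_z/(J_a^2(J_x+J_y+J_z)) = 0$ using \eqref{jss}, so that \eqref{caj} is compatible with $\sum J_aC^a = \sum J_a$; this is the identity $J_yJ_z/J_x + J_xJ_z/J_y + J_xJ_y/J_z = (J_xJ_y+J_yJ_z+J_xJ_z)^2/(J_xJ_yJ_z) - 2(J_x+J_y+J_z) = -2(J_x+J_y+J_z)$, and one verifies the prefactor sum likewise vanishes.

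It remains to argue that $f_n$ is a rational function of $Z$. The point is a scaling-and-symmetry argument. The Hamiltonian \eqref{ham} is homogeneous of degree one in $(J_x,J_y,J_z)$, so the correlation functions $C^a$ — being ratios $\langle\mathbf A\rangle$ — are homogeneous of degree zero, i.e. invariant under common rescaling of the couplings. Combined with \eqref{caj} this shows $f_n$ is a degree-zero homogeneous function of $(J_x,J_y,J_z)$. Moreover $\mathbf H$, and hence the ground-state eigenspace and all the $C^a$, are manifestly symmetric under permutations of the pairs $(J_x,\sigma^x)$, etc.; tracking this symmetry through \eqref{caj} shows $f_n$ is a \emph{symmetric} function of $J_x,J_y,J_z$. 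On the constraint surface \eqref{jss}, the ring of degree-zero symmetric functions is generated by a single quantity, which one can take to be $Z$ as defined in \eqref{capz} (the two expressions in \eqref{capz} agree precisely because of \eqref{jss}); any other symmetric degree-zero invariant, e.g. $(J_x+J_y+J_z)^3/(J_xJ_yJ_z)$, is an affine function of $Z$. Finally, $f_n$ is not merely some function of $Z$ but a \emph{rational} one: the ground state of a finite chain is an algebraic (in fact, for fixed rational $Z$, one expects rational after suitable normalization) object, and more concretely one can see rationality by running the argument through the Baxter $TQ$-formalism as sketched in the introduction — the transfer-matrix and $Q$-operator eigenvalues at the relevant special points are polynomials with rational dependence on the elliptic data, and the combination entering $f_n$ collapses to a rational function of the single modular-type invariant, which on the supersymmetric locus is $Z$. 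I would state this last rationality claim as following from the detailed computation carried out in \S 3 (culminating in Theorem \ref{mt}, where $f_n$ is identified with $\bar s_n\bar s_{-n-1}/(s_n s_{-n-1})$ up to elementary factors, manifestly rational in $Z$), rather than proving it from scratch here.

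The main obstacle is precisely this last point: showing \emph{rationality} (as opposed to mere algebraicity or analyticity) of $f_n$ in $Z$ is not something one gets for free from the elementary Hellmann--Feynman and symmetry considerations — those only pin down the \emph{shape} \eqref{caj} and the fact that $f_n$ is \emph{some} symmetric degree-zero function of the couplings. A clean self-contained argument would need either (i) an explicit finite-dimensional algebraic characterization of the ground state showing its matrix elements are rational in $Z$ after normalization, or (ii) the $TQ$-route, where one must know enough about the analytic structure of $\lambda(u)$ and $q(u)$ at the supersymmetric point to conclude the relevant combination is rational. Since the paper develops exactly machinery (ii) in \S 3, the honest proof of Proposition \ref{cfp} is really a corollary of that development; here I would present the Hellmann--Feynman/symmetry part in full and defer the rationality in $Z$ to the explicit formula obtained later, noting that the two together establish the proposition.
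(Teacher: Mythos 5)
Your overall strategy (the eigenvalue relation, Hellmann--Feynman, and a homogeneity/symmetry argument) is the paper's, but there are two genuine gaps. First, the form \eqref{caj} is asserted rather than derived. Differentiating with respect to the couplings while ``treating two of them as independent'' produces only derivatives tangent to the cone \eqref{jss}; and since that cone is homogeneous, the eigenvalue relation $\sum_a J_a(C^a-1)=0$ is itself one of these tangential relations. So your system says precisely that the vector $\left(C^a-1\right)_a$ is orthogonal to the two-dimensional tangent plane of the cone, i.e.\ proportional to $\nabla e_2=(J_y+J_z,\,J_x+J_z,\,J_x+J_y)$ where $e_2=J_xJ_y+J_xJ_z+J_yJ_z$; on $e_2=0$ this gradient equals $-J_xJ_yJ_z\,(J_x^{-2},J_y^{-2},J_z^{-2})$, which is exactly the prefactor in \eqref{caj}. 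Made explicit, this argument does yield \eqref{caj} with a single undetermined scalar, but you never make it: your consistency check of the prefactor sum is not a substitute, since any vector in that one-dimensional orthogonal complement would pass it, and you would still need to verify that the tangential relations are independent. The paper instead works in the elliptic parametrization and uses the eigenvalue relation, the $\tau$-derivative (tangential, with known right-hand side) and the $\eta$-derivative (transverse to the supersymmetric locus $\eta=\pi/3$, with unknown right-hand side $\varepsilon_\eta$), solves the $3\times3$ system \eqref{cls} explicitly, and checks nondegeneracy of its determinant via theta-function identities (Lemma \ref{fdl}); the transverse derivative is what turns $f_n$ from an abstract scalar into the concrete quantity \eqref{feg} that drives the rest of the paper.

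Second, and more seriously, your treatment of rationality in $Z$ is circular: you defer it to Theorem \ref{mt}, which is proved only under the unproven Assumption \ref{ass} and is logically downstream of Proposition \ref{cfp}, whereas the proposition itself is unconditional. The missing argument is elementary. Because the ground-state eigenvalue $-L(J_x+J_y+J_z)/2$ is known exactly and is linear in the couplings, the eigenvector equation is a linear system whose coefficients are homogeneous polynomials in $(J_x,J_y,J_z)$, so $\Psi$ can be normalized to have polynomial components; hence each $C^a$, and therefore $f_n$, is a homogeneous rational function of the couplings of degree zero. Combined with invariance under permutations of the couplings---which is not quite ``manifest'': the paper proves it in Lemma \ref{cfsl} by conjugating with $U^{\otimes L}$ for a suitable $U\in\mathrm{SU}(2)$---and with $e_2=0$, this forces $f_n$ to be a rational function of $(J_x+J_y+J_z)^3/J_xJ_yJ_z$, i.e.\ of $Z$. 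No transfer-matrix or $Q$-operator input is needed for this step.
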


For small $n$, it is straightforward to compute the ground-state eigenvectors of the supersymmetric XYZ Hamiltonian and infer $f_n$. To give some examples, we find for $n=0,\dots,5$ the expressions
\begin{align*}
f_0&=0,\\
f_1&=1,\\
f_2&=\frac{Z+27}{Z+25},\\
f_3&=\frac{(Z+24)(Z+27)}{(Z+21)(Z+28)},\\
f_4&=\frac{Z^3+74Z^2+1807Z+14520}{Z^3+72Z^2+1701Z+13068},\\
f_5&=\frac{(Z+27)(Z^4+96Z^3+3420Z^2+53404Z+306735)}{(Z^2+44Z+429)(Z^3+77Z^2+1991Z+17303)}.
%f_6&=\frac{Z+27}{Z+26}\\
%&\quad\times\frac{Z^6+144Z^5+8580Z^4+270380Z^3+4743315Z^2+43792632Z+165473308}{Z^6+143Z^5+8450Z^4+263630Z^3+4568915Z^2+41532257Z+153653786}
\end{align*}

Since the correlation functions only depend on the Hamiltonian up to normalization, 
we may use the parametrization
\begin{equation}\label{jsp}J_x=1+\zeta,\qquad J_y=1-\zeta,\qquad J_z=\frac{\zeta^2-1}{2}.
\end{equation}
Then, \eqref{caj} takes the form
\begin{equation}\label{cf}C^x=1-\frac{(1-\zeta)^2f_n}{\zeta^2+3},\quad C^y=1-\frac{(1+\zeta)^2f_n}{\zeta^2+3},\quad  C^z=1-\frac{4f_n}{\zeta^2+3},\end{equation}
where $f_n$ is a function of 
\begin{equation}\label{zz}Z=\frac{\zeta^2(\zeta^2-9)^2}{(\zeta^2-1)^2}.\end{equation}
The symmetry of $Z$ under permutations of the anisotropy parameters corresponds to
the fact that $Z(\zeta)=Z(\zeta')$ if and only if
$\zeta'\in\{\pm\zeta,\pm\gamma,\pm \delta\}$, where
$$\gamma=\frac{\zeta+3}{\zeta-1},\qquad \delta=\frac{\zeta-3}{\zeta+1}.$$
That is, 
\begin{equation}\label{fns}f_n(\pm\zeta)=f_n(\pm\gamma)=f_n(\pm\delta). \end{equation}
Indeed, it is easy to check that the transformations $\zeta\mapsto\zeta'$ permute
the parameters \eqref{jsp} up to normalization.

\subsection{Correlation functions and tau functions}
Our main result gives an explicit formula for  $f_n$ in terms of the polynomials $s_n$ and $\bar s_n$ introduced by Bazhanov and Mangazeev \cite{bm}. As was conjectured in \cite{bm1} and proved in \cite{r4}, they can be identified with tau functions of Painlev\'e VI and, hence, satisfy a Toda-type recursion. For $s_n=s_n(z)$, $n\in\mathbb Z$,  this recursion takes the form
\begin{multline}\label{srec}
  8(2n+1)^2s_{n+1}s_{n-1}+2z(z-1)(9z-1)^2(s_n''s_n-(s_n')^2)+2(3z-1)^2(9z-1)s_ns_n'\\-\big(4(3n+1)(3n+2)+n(5n+3)(9z-1)\big)s_n^2=0
\end{multline}
with starting values $s_0=s_1=1$. To obtain $\bar s_n$, one should replace $n(5n+3)$ by $(n-1)(5n+4)$ and the starting values by $\bar s_0=1$, $\bar s_1=3$. It was conjectured in \cite{mb} that, in a certain natural normalization of the eigenvector $\Psi$, its square norm is essentially given by $s_n(\zeta^{-2})s_{-n-1}(\zeta^{-2})$. A partial proof of this conjecture is given in \cite{zj}. Thus, it is natural to expect this product in the denominator of $f_n$. More surprisingly, it appears that the numerator can be expressed in terms of $\bar s_n\bar s_{-n-1}$. Although we have checked the following result for chains up to length $L=11$, we have only proved it under a technical assumption that will be explained below.

\begin{theorem}\label{mt}
If \emph{Assumption \ref{ass}} holds, then
  \begin{equation}\label{fd}f_n=\frac{(\zeta^2+3)(\zeta^2-3)}{(\zeta^2-1)^2}-\frac{2\zeta^2(\zeta^2+3)}{(2n+1)^2(\zeta^2-1)^2}\frac{\bar s_n(\zeta^{-2})\bar s_{-n-1}(\zeta^{-2})}{s_n(\zeta^{-2})s_{-n-1}(\zeta^{-2})}. \end{equation}
\end{theorem}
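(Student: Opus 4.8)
The plan is to follow the roadmap laid out in the introduction: first reduce $f_n$ to a spectral quantity, then transfer the computation from the transfer matrix to the $Q$-operator, and finally rewrite the $Q$-operator answer in terms of the polynomials $s_n,\bar s_n$ using the properties established in \cite{r4}. Concretely, I would start from the Hellmann--Feynman theorem applied to the Hamiltonian \eqref{ham} in the parametrization \eqref{jsp}: differentiating the ground-state energy $L\varepsilon$ with respect to a suitable parameter and using \eqref{gsf} expresses a linear combination of the correlation functions \eqref{cfd}, hence $f_n$ via \eqref{cf}, in terms of $\varepsilon$ and its derivative. The next step is to pass to the eight-vertex model: writing the Hamiltonian as the logarithmic derivative of the transfer matrix at the shift point, one expresses $f_n$ through the ground-state transfer-matrix eigenvalue $\lambda(u)$ and its derivatives at that point, in the elliptic parametrization by crossing parameter $\eta$ and nome $\tau$, specializing to $\eta=\pi/3$ at the end. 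This is Proposition \ref{fnp} in the paper's outline.

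The third step invokes Baxter's $TQ$-relation $\lambda(u)q(u)=\phi(u-\eta)q(u+2\eta)+\phi(u+\eta)q(u-2\eta)$ to trade $\lambda$ for the $Q$-operator eigenvalue $q(u)$. Since $f_n$ involves an $\eta$-derivative of $\lambda$, one must differentiate the $TQ$-relation in $\eta$ at $\eta=\pi/3$; this is precisely where Assumption \ref{ass} enters, granting that an appropriate $Q$-operator is analytic in $\eta$ near $\pi/3$ so that the differentiation is legitimate. The outcome (Corollary \ref{lxc}) is a formula for $f_n$ in terms of $q(u)$, $q(u\pm 2\eta)$ and their first $\eta$- and $u$-derivatives, all evaluated at $\eta=\pi/3$ and at special values of $u$. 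With $\eta=\pi/3$ one has $2\eta=\pi-(-\pi/3)$, so the shifts $u\mapsto u\pm 2\eta$ can be related to $u\mapsto u+\pi$ and $u\mapsto -u$; at this point the key new input, the differential-difference equation of Theorem \ref{ddt} relating $q(u)$ and $q(u+\pi)$, lets one eliminate the shifted arguments and close the expression. Finally, using that $q(u)$ at the relevant special spectral values is, up to elementary theta-function prefactors, given by $s_n$ and $\bar s_n$ of the appropriate arguments (the Bazhanov--Mangazeev specialization, with $z=\zeta^{-2}$ matching \eqref{zz}), together with the recursion \eqref{srec} and the further identities for $s_n,\bar s_n$ collected from \cite{r4}, one simplifies the result to the stated closed form \eqref{fd}. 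The split in \eqref{fd} into the elementary term $(\zeta^2+3)(\zeta^2-3)/(\zeta^2-1)^2$ and the tau-function correction should emerge from separating the "bulk" part of $q$ (corresponding to the infinite-lattice limit) from the polynomial part.

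The main obstacle is the last stage: after Corollary \ref{lxc} one has a fairly unwieldy expression built from $q$ and its derivatives at several points, and turning it into the compact ratio $\bar s_n\bar s_{-n-1}/s_ns_{-n-1}$ is not a routine manipulation. Theorem \ref{ddt} is the crucial lever here, but applying it still requires delicate bookkeeping of the theta-function prefactors attached to $q$, careful tracking of which half-period specializations produce $s_n$ versus $\bar s_n$ and which produce the index-reversed $s_{-n-1},\bar s_{-n-1}$, and repeated use of the quadratic recursions and Wronskian-type identities for these polynomials to collapse derivative combinations. A secondary difficulty is purely structural: one must check that the resulting rational function of $\zeta$ respects the $S_3$-symmetry \eqref{fns}, which is a strong consistency check but not automatic from the $Q$-operator side, where the symmetry is hidden. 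I expect the verification against the explicit small-$n$ data $f_0,\dots,f_5$ listed above to be essential for catching sign and normalization errors along the way.
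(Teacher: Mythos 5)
Your roadmap coincides with the paper's actual proof: Hellmann--Feynman reduction of $f_n$ to $\varepsilon$ and its $\eta$-derivative (Proposition \ref{cfp}), passage to the transfer-matrix eigenvalue (Proposition \ref{fnp}), the $TQ$-relation under Assumption \ref{ass} (Corollary \ref{lxc}), the differential-difference equation of Theorem \ref{ddt}, and the final identification with $s_n,\bar s_n$ via the $t$-polynomials of \cite{r4}. Two details you under-describe: the mechanism that closes the $TQ$-step is Stroganov's cancellation of the $\eta$-derivatives of $Q^\pm$ between the two Wronskian relations at $\eta=\pi/3$ (Lemma \ref{ldl}), so Corollary \ref{lxc} involves only $u$-derivatives of $q$; and the computation naturally lands on the formula \eqref{fng} in the variable $\gamma$, after which the symmetry \eqref{fns} is used constructively (not merely as a consistency check) to obtain \eqref{fd} in $\zeta$.
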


By \eqref{fns}, one can obtain seemingly different expressions for $f_n$ by replacing $\zeta$ in \eqref{fd} with $\gamma$ or $\delta$. 

The decomposition of $f_n$ into two terms has a  natural interpretation. Namely,  in Appendix \ref{ills} we 
argue that $f_\infty=\lim_{n\rightarrow\infty}f_n$ is given by
\begin{equation}\label{finfe}f_\infty=\begin{cases}\displaystyle\frac{(\zeta^2+3)(\zeta^2-3)}{(\zeta^2-1)^2}, & |\zeta|\geq 3,\\[3mm]
\displaystyle\frac{(\delta^2+3)(\delta^2-3)}{(\delta^2-1)^2}=-\frac{(\zeta^2+3)(\zeta^2+6\zeta-3)}{8(\zeta-1)^2}, & -3\leq \zeta\leq 0,\\[3mm]
\displaystyle\frac{(\gamma^2+3)(\gamma^2-3)}{(\gamma^2-1)^2}=-\frac{(\zeta^2+3)(\zeta^2-6\zeta-3)}{8(\zeta+1)^2},& 0\leq \zeta\leq 3.\\
\end{cases} \end{equation}
Hence, when $|\zeta|\geq 3$,
the first term in \eqref{fd} gives the infinite-lattice limit and the second term the  finite length correction. The variants of \eqref{fd} with $\zeta$ replaced by $\gamma$ and $\delta$ have a similar interpretation in the other  parameter regimes. 
In Appendix \ref{ills}, we deduce \eqref{finfe} from Baxter's formula for the ground-state energy per site in the infinite-lattice limit. We do not know how to prove \eqref{finfe} directly from \eqref{fd}. In Figure \ref{ffinf}, we illustrate the convergence of $f_n$ to $f_\infty $.
\begin{figure}[h]
  \centering
  \includegraphics[width=.65\textwidth]{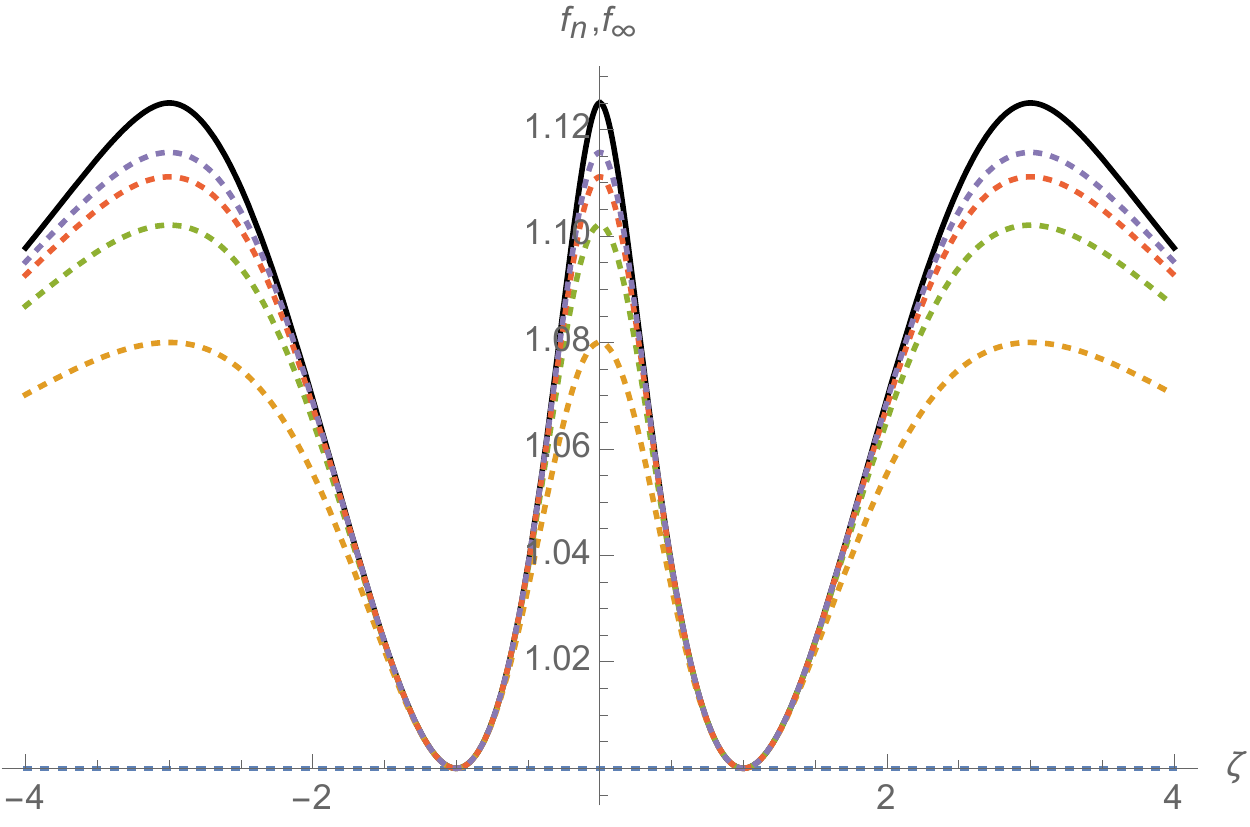}
  \caption{Plots of $f_n$ for $n=1,\dots,5$ (dotted lines) and $f_\infty$ (black solid line) as functions of $\zeta$.}
  \label{ffinf}
\end{figure}

To see that Theorem \ref{mt} gives the correct result for the XXZ chain, we replace $\zeta$ by $\gamma$ in \eqref{fd} and let $\zeta=0$, $\gamma=-3$. Then,  the parameters \eqref{jsp} reduce to $J_x=J_y=1$, $J_z=-1/2$. At the corresponding value $z=1/9$, the recursion \eqref{srec} for $s_n$ reduces to
$$2(2n+1)^2s_{n+1}s_{n-1}= (3n+1)(3n+2)s_n^2$$
and the recursion for $\bar s_n$ is identical. It easily follows that  $\bar s_n(1/9)=3^n s_n(1/9)$. This leads to $f_n=9/8-9/(8L^2)$ and we recover Stroganov's results \cite{st,st2}:
$$C^x=C^y=\frac{5}{8}+\frac 3{8L^2},\qquad C^z=-\frac 12+\frac{3}{2L^2}. $$ One can also verify Theorem \ref{mt} in the limit  $\zeta\rightarrow\infty$, which corresponds to $J_x=J_y=0$ and $J_z =1/2$ (after rescaling the Hamiltonian appropriately). Indeed, it follows from \cite[Thm.\ 4.1]{rsa}  that $s_n(0)\neq 0$ for all $n$, $\bar s_n(0)\neq 0$ for $n\geq 0$ and that $\bar s_{-n-1}(z)$ is divisible by $z^n$ for $n>0$. Hence, the second term in \eqref{fd} tends to zero (except when $n=0$) and we obtain $C^x=C^y=0$, $C^z=1$. This is trivial to check directly since the ground-state vector can be chosen as  $\Psi=\left|\uparrow\right\rangle^{\otimes L}$.

\subsection{The Painlev\'e Hamiltonian} 

The expression \eqref{fd}  can be interpreted as the Painlev\'e VI Hamiltonian,  evaluated at a solution to the same equation with different parameters. To explain this, we recall the algebraic form of Painlev\'e VI,
\begin{align}\notag
\frac{d^2q}{dt^2}&=\frac 12\left(\frac 1q+\frac 1{q-1}+\frac 1{q-t}\right)
\left(\frac{dq}{dt}\right)^2-\left(\frac 1t+\frac 1{t-1}+\frac 1{q-t}\right)\frac{dq}{dt}\\ 
\label{py}&\quad +\frac{q(q-1)(q-t)}{t^2(t-1)^2}\left(\alpha+\beta\frac t{q^2}+\gamma\frac{t-1}{(q-1)^2}+\delta\frac{t(t-1)}{(q-t)^2}\right).\end{align}
As in \cite{ny}, we will write
$$\alpha=\frac{\alpha_1^2}{2},\qquad \beta=-\frac{\alpha_4^2}{2},\qquad
\gamma=\frac{\alpha_3^2}{2},\qquad \delta=\frac{1-\alpha_0^2}{2}$$
and introduce $\alpha_2$ so that
\begin{equation}\label{ac}
 \alpha_0+\alpha_1+2\alpha_2+\alpha_3+\alpha_4=1.
\end{equation}
There is a change of variables that takes \eqref{py} to the elliptic form \eqref{cpvi}, where the correspondence of parameters is
   $$(\beta_0,\beta_1,\beta_2,\beta_3)=
   \left(\frac{\alpha_1^2}{2},\frac{\alpha_4^2}2,\frac{\alpha_3^2}2,\frac{\alpha_0^2}2\right). $$
 
With
$$p=\frac 12\left(\frac{\alpha_4}q+\frac{\alpha_3}{q-1}+\frac{\alpha_0-1}{q-t}
+\frac{t(t-1)}{q(q-1)(q-t)}\frac{dq}{dt} \right),$$
the Painlev\'e equation \eqref{py} is equivalent to the non-stationary Hamiltonian system \cite{m}
\begin{equation}\label{ph}
t(t-1)\frac{dq}{dt}=\frac{\partial H}{\partial p},\qquad t(t-1)\frac{dp}{dt}=-\frac{\partial H}{\partial q},
\end{equation}
where
\begin{multline*}
H=q(q-1)(q-t)p^2-\big\{(\alpha_0-1)q(q-1)+\alpha_3q(q-t)+\alpha_4(q-1)(q-t)\big\}p\\
+\alpha_2(\alpha_1+\alpha_2)(q-t).
\end{multline*}

Starting from one solution of Painlev\'e VI, one can obtain further solutions by applying B\"acklund transformations \cite{o}. For instance, if $(q,p)$ solves \eqref{ph}, then 
\begin{equation}\label{tb}
T(q)=\frac{tp(pq-\alpha_4)}{(pq+\alpha_2)(pq+\alpha_1+\alpha_2)}
\end{equation}
defines another solution with shifted parameters
$$ T(\alpha_0,\alpha_1,\alpha_2,\alpha_3,\alpha_4)
=(\alpha_0-1,\alpha_1,\alpha_2+1,\alpha_3-1,\alpha_4).$$ 
We will apply this transformation to the algebraic solution $(q_0,p_0)$ with $$\alpha_0=\alpha_1=\alpha_3=\alpha_4=0,\qquad \alpha_2=\frac 12,$$ defined by the parametrization
\begin{equation}\label{sps}q_0=\frac{s(s+2)}{2s+1},\qquad 
p_0=-\frac{2s+1}{2(s-1)(s+2)},\qquad
t= \frac{s(s+2)^3}{(2s+1)^3}.\end{equation}
We can then give the following curious interpretation of the function $f_n$. 

\begin{theorem}\label{fpqp}
The expression \eqref{fd} can be written as
\begin{align*}
 f_n&=\frac{(s^2+s+1)(s^2+4s+1)}{(s+1)^4}-\frac{4(2s+1)^3(s^2+4s+1)}{(2n+1)^2s(s+1)^4}\left(H'+\frac{(2n+1)^2}{4}\,t\right),
\end{align*} 
where $t$ is as in \eqref{sps},
\begin{equation}\label{zx}
  Z=\frac{(s-1)^4(s+2)(2s+1)}{s(s+1)^4},
\end{equation}
and where $H'=H'(q_n,p_n,t)$ is the Painlev\'e VI Hamiltonian with  parameters
\begin{equation}\label{shpp}
(\alpha_0,\alpha_1,\alpha_2,\alpha_3,\alpha_4)
=\left(\frac 12-n,0,\frac 12+n,-\frac 12-n,0\right),
\end{equation}
evaluated at  $(q_n,p_n)=T^n(q_0,p_0)$, which solves Painlev\'e VI with parameters $$ (\alpha_0,\alpha_1,\alpha_2,\alpha_3,\alpha_4)
=\left(-n,0,\frac 12+n,-n,0\right).$$
   \end{theorem}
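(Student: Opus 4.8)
The plan is to start from the explicit formula for $f_n$ in Theorem \ref{mt} and rewrite it in terms of the rational parameter $s$, then identify the resulting expression with the Hamiltonian of the shifted Painlev\'e VI system. First I would substitute the parametrization \eqref{sps} into the relation \eqref{zx}, which determines $\zeta^{-2}$ as an algebraic function of $s$ (via \eqref{zz} and the defining equation $Z=Z(s)$); this converts the arguments $\zeta^{-2}$ of the tau functions $s_n,\bar s_n$ into functions of $s$, and also turns the elementary prefactors $(\zeta^2+3)(\zeta^2-3)/(\zeta^2-1)^2$ and $2\zeta^2(\zeta^2+3)/((2n+1)^2(\zeta^2-1)^2)$ into the rational expressions in $s$ appearing in the statement. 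A short but careful algebraic check should confirm that the first elementary term of \eqref{fd} equals $(s^2+s+1)(s^2+4s+1)/(s+1)^4$ and that the coefficient of the tau-function ratio matches $4(2s+1)^3(s^2+4s+1)/((2n+1)^2 s(s+1)^4)$.

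Next I would bring in the factorization \eqref{hpf} of the Painlev\'e VI Hamiltonian at the special parameter values, which the introduction promises as the key tool of \S 4. The idea is that at the parameters \eqref{shpp} the Hamiltonian $H'(q_n,p_n,t)$ — or rather the combination $H'+\tfrac{(2n+1)^2}{4}t$ — factors through the B\"acklund-transformed solution $(q_n,p_n)=T^n(q_0,p_0)$ in a way that makes contact with the tau-function ratio $\bar s_n\bar s_{-n-1}/s_n s_{-n-1}$. Concretely, one expects a Hamiltonian/tau-function identity of the schematic shape $H'+\tfrac{(2n+1)^2}4 t = (\text{rational in }s)\cdot \dfrac{\bar s_n\bar s_{-n-1}}{s_n s_{-n-1}}$, which, once established, reduces Theorem \ref{fpqp} to the purely algebraic matching done in the first step. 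To prove such an identity I would use the explicit parametrization \eqref{sps} of $(q_0,p_0)$, the B\"acklund transformation \eqref{tb} iterated $n$ times, and the known identification of $s_n,\bar s_n$ with Painlev\'e VI tau functions from \cite{r4}; the Toda recursion \eqref{srec} and its $\bar s$-analogue should let one pin down the rational prefactor by induction on $n$, checking the base cases $n=0$ ($f_0=0$) and $n=1$ ($f_1=1$) directly against \eqref{sps}.

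I would organize the proof so that the Painlev\'e-theoretic content is isolated: establish that the shifted parameters \eqref{shpp} are indeed $T$ applied to $(-n,0,\tfrac12+n,-n,0)$ (immediate from the shift rule $T(\alpha_0,\dots,\alpha_4)=(\alpha_0-1,\alpha_1,\alpha_2+1,\alpha_3-1,\alpha_4)$), verify that $(q_n,p_n)=T^n(q_0,p_0)$ solves Painlev\'e VI with parameters $(-n,0,\tfrac12+n,-n,0)$ (iterating the known $T$-action on the base algebraic solution $(q_0,p_0)$), and then express $H'(q_n,p_n,t)$ using the Hamiltonian formula together with the factorization at these parameters. The relation $t=s(s+2)^3/(2s+1)^3$ and $Z$ as in \eqref{zx} are compatible with \eqref{zz} and the definition of $Z$ in terms of $\zeta$, so the change of variables $\zeta\leftrightarrow s$ is consistent across the two theorems; I would record this compatibility explicitly as a lemma so that \eqref{fd} can be transported verbatim.

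The main obstacle I expect is proving the Hamiltonian/tau-function identity — i.e.\ that $H'+\tfrac{(2n+1)^2}4 t$, evaluated at the iterated B\"acklund solution, reproduces exactly the combination $\bar s_n\bar s_{-n-1}/(s_n s_{-n-1})$ up to the stated rational factor. This is where the ``miracle'' lives: it requires knowing precisely how the Bazhanov--Mangazeev polynomials sit inside the Okamoto tau-function framework, and controlling the normalizations (the tau function of Painlev\'e VI is defined only up to elementary factors, and matching the specific polynomial normalizations $s_0=s_1=1$, $\bar s_0=1$, $\bar s_1=3$ against the Hamiltonian requires care). I anticipate that the factorization \eqref{hpf}, together with the classical-solution structure at Picard-type parameters $\beta_j=l_j^2/2$ alluded to in the introduction, provides exactly the bridge, but making the constants match will be the delicate part; everything else reduces to bookkeeping with \eqref{tb}, \eqref{sps}, and the recursion \eqref{srec}.
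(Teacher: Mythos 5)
Your overall architecture matches the paper's: pass from $\zeta$ (or $\gamma$) to the rational parameter $s$ via $\gamma^2=(s+2)(2s+1)/s$, and use the factorization \eqref{hpf} to recognize the Hamiltonian. But the entire content of the proof lives in the step you leave schematic: the identity expressing $\bar s_n\bar s_{-n-1}/(s_ns_{-n-1})$ as an explicit rational multiple of $\big(p_n(q_n-1)+n+\tfrac12\big)\big(p_n(q_n-t)+n+\tfrac12\big)q_n$. The paper does not obtain this by induction on the Toda recursion \eqref{srec}; it imports the identification of the whole lattice of polynomials $t^{(k_0,k_1,k_2,k_3)}$ (into which $s_n,\bar s_n,s_{-n-1},\bar s_{-n-1}$ embed via the Appendix~\ref{sp} formulas) with the lattice of modular tau functions $\mathbf X(\tau_{l_1l_2l_3l_4})$ from \cite[Thm.\ 4.2]{r4}, then uses translation invariance to write the relevant quotient as $\mathbf X T^n$ applied to a fixed element of the differential field, and finally evaluates that element through closed formulas for $\tau_{-1,0,0,1}$ and $\tau_{1,0,1,-1}$ in terms of $(q,p,t,\alpha_j)$, arriving at \eqref{the}. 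Your proposed induction would have to reproduce exactly this lattice identification with its normalizations; since \eqref{srec} only relates $s_{n\pm1}$ to $s_n$ and its derivatives, while the target combination mixes the families $s,\bar s$ at indices $n$ and $-n-1$ and the left-hand side involves the $n$-th B\"acklund iterate of a solution rather than a tau function directly, it is not clear the induction closes without essentially re-proving the cited theorem. As written, the key identity is asserted, not proved.

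A secondary error: the parameters \eqref{shpp} are not $T$ applied to $\left(-n,0,\tfrac 12+n,-n,0\right)$; the shift rule gives $T\left(-n,0,\tfrac 12+n,-n,0\right)=\left(-n-1,0,\tfrac 32+n,-n-1,0\right)$, whereas \eqref{shpp} differs from the solution's parameters by $\left(\tfrac 12,0,0,-\tfrac 12,0\right)$, which is not a power of the shift $T$. The point of the theorem is precisely that $H'$ is taken with parameters \emph{different} from those for which $(q_n,p_n)$ solves Painlev\'e VI; the parameters \eqref{shpp} are chosen so that the factorization \eqref{hpf} reproduces the expression produced by the tau-function computation.
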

   
    To write Theorem \ref{fpqp} in terms of the parametrization \eqref{jsp} one should equate \eqref{zz} and \eqref{zx}. This is a sextic equation, whose solutions can be written
    \begin{equation}\label{zgds}\frac{(s+2)(2s+1)}{s}\in\{\zeta^2,\gamma^2,\delta^2\}.
    \end{equation}
    Theorem \ref{fpqp} holds with each of the corresponding six choices for $s$.

%    As an example, one can compute
%    $$ q_1=\frac{s^2(s+2)^2}{(2s+1)^2},\qquad p_1=-\frac{3(2s+1)^2(s+3)}{2(s-1)(s+2)^2(s^2+4s+1)}. $$
%    The corresponding modified Hamiltonian can be written
%   $$H'+\frac 94\,t= \frac 14\,q_1\big(2p_1(q_1-1)+3\big)\big(2p_1(q_1-t)+3\big).$$
%  By a straight-forward computation, 
%   $$2p_1(q_1-1)+3 =\frac3{(s+1)^2},\qquad 2p_1(q_1-t)+3 =\frac{3(s^2+1)}{(2s+1)(s^2+4s+1)}$$
%  and we recover
%   $$f_1=\frac{(s^2+s+1)(s^2+4s+1)}{(s+1)^4}-\frac{s(s^2+1)}{(s+1)^4} =1.$$

\section{Computation of the correlation functions}

\subsection{The eight-vertex model}\label{evmss}

To compute $f_n$, we will exploit the relation between the XYZ spin chain and the eight-vertex model \cite{b2}. With respect to the basis 
 $|{\uparrow\uparrow}\rangle$, $|{\uparrow\downarrow}\rangle$, $|{\downarrow\uparrow}\rangle$, $|{\downarrow\downarrow}\rangle$  of $V^{\otimes 2}$, the $R$-matrix of
 the eight-vertex model is given by
$$R=\left[\begin{matrix}a&0&0&d\\ 0&b&c&0\\ 0&c&b&0\\ d&0&0&a\end{matrix}
\right], $$
where $a$, $b$, $c$, $d$ are  Boltzmann weights. The transfer matrix is the operator on $V^{\otimes L}$ defined by $$\mathbf T=\operatorname{Tr}_{0}(R_{01}R_{02}\dotsm R_{0L}), $$ where the index $0$ refers to an additional auxiliary copy of $V$. The parameter combinations  
\begin{equation}\label{zg}
  \zeta=\frac{cd}{ab},\qquad \Gamma=\frac{a^2+b^2-c^2-d^2}{2ab}
\end{equation}
play a special role. Indeed, if $(a',b',c',d')$ is another set of parameters such that $\zeta'=\zeta$ and $\Gamma'=\Gamma$, then the corresponding transfer matrices $\mathbf T$ and $\mathbf T'$ commute. 

An important step in Baxter's solution of the eight-vertex model is to parametrize the Boltzmann weights by theta functions. In our conventions,
\begin{subequations}\label{bw}
\begin{align}
a&=\rho\,\theta_4(2\eta|2\tau)\theta_4(u-\eta|2\tau)\theta_1(u+\eta|2\tau),\\
b&=\rho\,\theta_4(2\eta|2\tau)\theta_1(u-\eta|2\tau)\theta_4(u+\eta|2\tau),\\
c&=\rho\,\theta_1(2\eta|2\tau)\theta_4(u-\eta|2\tau)\theta_4(u+\eta|2\tau),\\
d&=\rho\,\theta_1(2\eta|2\tau)\theta_1(u-\eta|2\tau)\theta_1(u+\eta|2\tau);
\end{align}
\end{subequations}
see \eqref{jt} for the  definition of Jacobi theta functions. The spectral parameter $u$  and the crossing parameter $\eta$ can be taken as complex, and the modular parameter $\tau$ as complex in the upper half-plane. Without loss of generality, we fix the normalization $\rho$ as $$\rho=\frac{2}{\theta_2(0|\tau)\theta_4(0|2\tau)},$$ which implies
\begin{equation}\label{phf}
  a+b=\frac{\theta_1(2\eta|\tau)}{\theta_1(\eta|\tau)}\,\theta_1(u|\tau). \end{equation}   
As a function of $u$, the transfer matrix satisfies the quasi-periodicity and reflection symmetry
\begin{equation}\label{tqp}
  \mathbf T(u+\pi)=\mathbf T(-u)=(-1)^L\mathbf T(u),\qquad \mathbf T(u+\pi\tau)=(-1)^L e^{-\ti L(2u+\pi\tau)}\mathbf T(u).
\end{equation}

In the parametrization \eqref{bw}, the quantities \eqref{zg} are independent of $u$. More precisely \cite[Eqs.\ (2.6)--(2.7)]{rs}, 
\begin{equation}\label{zge}\zeta=\frac{\theta_1(2\eta|2\tau)^2}{\theta_4(2\eta|2\tau)^2},
\qquad \Gamma=\frac{\theta_2(2\eta|2\tau)\theta_3(2\eta|2\tau)\theta_4(0|2\tau)^2}{\theta_2(0|2\tau)\theta_3(0|2\tau)\theta_4(2\eta|2\tau)^2}. \end{equation}
Hence, the transfer matrices $\mathbf T(u)$ form a commuting family for fixed $\eta$ and $\tau$. Moreover, the XYZ Hamiltonian is contained within this  family in the sense that
\begin{equation}\label{thr}
  \mathbf H=\frac{L(a_u+c_u)}{2b_u}\,\Id-\frac{a}{b_u} \,\mathbf T^{-1}\mathbf T_u\bigg|_{u=\eta},
\end{equation}
where the subscripts denote $u$-derivatives and the parameters are related through
\begin{equation}\label{aip}
  J_x=1+\zeta,\qquad J_y=1-\zeta,\qquad J_z=\Gamma.
\end{equation}
%Note that
%$$\zeta=\frac{J_x-J_y}{J_x+J_y},\qquad \gamma=\frac{J_z-J_y}{J_z+J_y},$$
%so $\zeta$ and $\gamma$ are related by a permutation of the anisotropy parameters.
As was noted by Baxter \cite{b2}, when $\eta=\pi/3$ the supersymmetry condition \eqref{jss} holds. That is, $\Gamma=(\zeta^2-1)/2$ and we recover \eqref{jsp}.

Let us choose $\Psi=\Psi^+$ as in the proof of Lemma \ref{igsl}. It belongs to the sector $W^+$ spanned by states with an even number of down spins. Since $W^+$ is preserved by $\mathbf H$, the ground-state eigenvalue of $\mathbf H\big|_{W^+}$ remains single on $W^+$ for $\eta$ near $\pi/3$. We will still denote this eigenvalue by  $L\varepsilon$, where we know from \eqref{gsf} that
\begin{equation}\label{se}
  \varepsilon\Big|_{\eta=\pi/3}=-\frac{(\zeta^2+3)}{4}.
\end{equation}
We extend $\Psi$ to an eigenvector that depends analytically on $\eta$. It is also an eigenvector of the transfer matrices $\mathbf T(u)$; the eigenvalue will be denoted $\lambda$. It was conjectured in \cite{st} and proved in \cite{hl} that
\begin{equation}\label{ese}
  \lambda\Big|_{\eta=\pi/3}=\theta_1(u|\tau)^L.
\end{equation}
For generic $\eta$, there are no simple finite-size expressions for neither $\varepsilon$ nor $\lambda$, but \eqref{thr} gives the relation
 \begin{equation}\label{ler}\varepsilon=\frac{a_u+c_u}{2b_u}-\frac{a}{Lb_u}\frac{\lambda_u}{\lambda}\Bigg|_{u=\eta}.
 \end{equation} 

\subsection{Proof of  Proposition \ref{cfp}}
\label{cfpss}

We will now prove that Proposition \ref{cfp} holds, with 
\begin{equation}\label{feg}f_n=\frac{\varepsilon(2\varepsilon_\eta+\Gamma_\eta)}{\zeta\zeta_\eta-\Gamma_\eta}\Bigg|_{\eta=\pi/3}, \end{equation}
where the subscripts on the right-hand side denote $\eta$-derivatives.

We first note that \eqref{ham} implies
\begin{equation}\label{eh}L\varepsilon=\langle \mathbf H\rangle=-\frac L 2\left(J_x C^x+J_y C^y+J_zC^z\right).\end{equation}
The Hellmann--Feynman theorem asserts that, if $\mathbf H$ depends smoothly on a parameter, then
$$\langle \mathbf H\rangle'=\langle \mathbf H'\rangle, $$ where the prime denotes the derivative with respect to the parameter. Thus, we can write
$$\varepsilon'=-\frac 1 2\left(J_x' C^x+J_y' C^y+J_z'C^z\right). $$
The derivative may be taken with respect to either $\eta$ or $\tau$. Together with \eqref{eh}, we thus obtain a system of three linear equations for $C^x$, $C^y$ and $C^z$. Consider this system at the point $\eta=\pi/3$. Using \eqref{aip} and \eqref{se}, it can be written
\begin{equation}\label{cls}\left[\begin{matrix} 1+\zeta & 1-\zeta &\displaystyle \frac 12(\zeta^2-1)\\
\displaystyle\zeta_\eta&\displaystyle-\zeta_\eta&\Gamma_\eta\\
1 & -1 & \zeta\end{matrix}\right] \left[\begin{matrix}C^x\\C^y\\C^z\end{matrix}\right]= \left[\begin{matrix}\displaystyle\frac 12(\zeta^2+3)\\ \displaystyle-2\varepsilon_\eta\\ \zeta\end{matrix}\right].\end{equation}

The determinant of the coefficient matrix in \eqref{cls} is
$2(\Gamma_\eta-\zeta\zeta_\eta)$. By  Lemma~\ref{fdl}, 
\begin{equation}\label{zze}\zeta\zeta_\eta-\Gamma_\eta\Big|_{\eta=\pi/3}=6\chi\frac a{b_u}\Big|_{u=\eta=\pi/3}, 
\end{equation}
where
\begin{equation}\label{chi}\chi=\frac{\theta_1'(0|\tau)^2\theta_2(\pi/3|\tau)^2}{\theta_1(\pi/3|\tau)^2\theta_2(0|\tau)^2}.\end{equation}
This expression implies, in particular, that the determinant is non-zero for generic $\tau$. It is then easy to see that \eqref{cls} has the unique solution \eqref{cf}, where $f_n$ is given by \eqref{feg}. As we have noted, \eqref{cf} can equivalently be written as \eqref{caj}.

%Although we will not need it, we remark that the function $\tau\mapsto\zeta(2\tau)$ is a Hauptmodul for the modular group $\Gamma_0(12)$, that is,  the corresponding field of modular functions can be identified with the field of rational functions in $\zeta$.
%Likewise,  $\tau\mapsto Z(2\tau)$ is a Hauptmodul for the modular group $\Gamma_0(3)$. 
%The  group $\Gamma_0(12)$ has six cusps, which correspond to the values $\zeta\in\{0,\pm 1,\pm 3,\infty\}$ or to $Z\in\{0,\infty\}$. At these points, the XYZ chain degenerates into 
%  the XXZ or the XY chain. 
%  
 
It remains to show that $f_n$ is a rational function of $Z$. Note that the eigenvalue equation for $\Psi$ is a linear system whose coefficients are linear in $(J_x,J_y,J_z)$. Hence, we may normalize $\Psi$ so that its coefficients are homogeneous polynomials in the anisotropy parameters. Then, \eqref{caj}
defines $f_n$ as a homogeneous rational function of $J_x$, $J_y$ and $J_z$. 
To complete the proof we need the following fact.

\begin{lemma}\label{cfsl}
Permuting the anisotropy parameters entails a corresponding permutation of the correlation functions. That is, if $(\tilde x,\tilde y,\tilde z)$ is any permutation of the symbols $(x,y,z)$ and we consider the correlations as functions of the anisotropy parameters, then
\begin{equation}
  \label{cfs}
  C^a(J_{\tilde x},J_{\tilde y},J_{\tilde z})=C^{\tilde a}(J_x,J_y,J_z),\qquad a\in\{x,y,z\}. 
\end{equation}
\end{lemma}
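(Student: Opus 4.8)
The plan is to exhibit, for each permutation $\pi$ of $\{x,y,z\}$, a unitary operator $\mathbf U_\pi$ on $V^{\otimes L}$ that intertwines the two Hamiltonians in the required way, namely
\[
\mathbf U_\pi\,\mathbf H(J_{\tilde x},J_{\tilde y},J_{\tilde z})\,\mathbf U_\pi^{-1}=\mathbf H(J_x,J_y,J_z),\qquad
\mathbf U_\pi\,\sigma^a_j\,\mathbf U_\pi^{-1}=\sigma^{\tilde a}_j,
\]
where the notation is that $\mathbf H(J_x,J_y,J_z)$ denotes the Hamiltonian \eqref{ham} with the indicated couplings and $(\tilde x,\tilde y,\tilde z)=(\pi(x),\pi(y),\pi(z))$. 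Since it suffices to treat the transpositions generating $S_3$, I would build $\mathbf U_\pi$ as $\mathbf U_\pi=u_\pi^{\otimes L}$ for a suitable $2\times 2$ unitary $u_\pi$. Concretely, a single-site rotation such as $u=\frac{1}{\sqrt2}(\sigma^x+\sigma^z)$ satisfies $u\sigma^x u^{-1}=\sigma^z$, $u\sigma^z u^{-1}=\sigma^x$, $u\sigma^y u^{-1}=-\sigma^y$; the sign on $\sigma^y$ is harmless in \eqref{ham} because the Pauli matrices enter only through the quadratic combinations $\sigma^a_j\sigma^a_{j+1}$. A similar rotation handles the transposition swapping $x\leftrightarrow y$. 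Tensoring over all sites, the conjugation $\mathbf H\mapsto\mathbf U_\pi\mathbf H\mathbf U_\pi^{-1}$ simply permutes the three terms in the sum, which is exactly the statement that the couplings get permuted.

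With the intertwiner in hand, the rest is bookkeeping. Let $\Psi$ be a ground state of $\mathbf H(J_x,J_y,J_z)$; then $\mathbf U_\pi^{-1}\Psi$ is a ground state of $\mathbf H(J_{\tilde x},J_{\tilde y},J_{\tilde z})$ with the same eigenvalue, and by Lemma \ref{igsl} the correlation functions do not depend on which ground state (or which bond) we use to evaluate them. Therefore
\[
C^a(J_{\tilde x},J_{\tilde y},J_{\tilde z})
=\frac{\langle \mathbf U_\pi^{-1}\Psi|\,\sigma^a_j\sigma^a_{j+1}\,|\mathbf U_\pi^{-1}\Psi\rangle}{\langle \mathbf U_\pi^{-1}\Psi|\mathbf U_\pi^{-1}\Psi\rangle}
=\frac{\langle \Psi|\,\mathbf U_\pi\sigma^a_j\sigma^a_{j+1}\mathbf U_\pi^{-1}\,|\Psi\rangle}{\langle \Psi|\Psi\rangle}
=\frac{\langle \Psi|\,\sigma^{\tilde a}_j\sigma^{\tilde a}_{j+1}\,|\Psi\rangle}{\langle \Psi|\Psi\rangle}
=C^{\tilde a}(J_x,J_y,J_z),
\]
where in the last line we used $\mathbf U_\pi\sigma^a_j\sigma^a_{j+1}\mathbf U_\pi^{-1}=(u_\pi\sigma^a u_\pi^{-1})_j(u_\pi\sigma^a u_\pi^{-1})_{j+1}=\sigma^{\tilde a}_j\sigma^{\tilde a}_{j+1}$, the possible overall sign cancelling between the two factors. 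This gives \eqref{cfs} for transpositions, and composing transpositions (with the intertwiners composing accordingly) yields it for an arbitrary permutation.

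I do not expect a genuine obstacle here; the only point requiring a little care is the interplay with periodic boundary conditions and with the restriction to the sector $W^+$ used elsewhere. For periodicity there is nothing to check, since $\mathbf U_\pi=u_\pi^{\otimes L}$ is manifestly translation invariant and respects the identification of site $L+1$ with site $1$. As for the sectors, one should note that $u_\pi$ need not preserve the parity of the number of down spins, so $\mathbf U_\pi$ may not map $W^+$ to $W^+$; but this is immaterial because Lemma \ref{igsl} already tells us the correlation functions are the same on the whole ground-state eigenspace, so we are free to evaluate them on $\mathbf U_\pi^{-1}\Psi$ regardless of which sector it lies in. Thus the proof is essentially a one-line conjugation argument once the single-site rotations are written down.

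Finally, to complete the proof of Proposition \ref{cfp} one combines Lemma \ref{cfsl} with the fact, established above, that $f_n$ is a homogeneous rational function of $(J_x,J_y,J_z)$: formula \eqref{caj} and the $S_3$-symmetry \eqref{cfs} of the $C^a$ force $J_a^{-2}f_n$, and hence $f_n$ itself (up to the symmetric prefactor $J_xJ_yJ_z/(J_x+J_y+J_z)$), to be a symmetric rational function of the anisotropy parameters; being homogeneous of degree zero and symmetric, it is a rational function of $Z$, since $Z$ generates the field of degree-zero symmetric rational functions of $(J_x,J_y,J_z)$ modulo the relation \eqref{jss}.
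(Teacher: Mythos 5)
Your proof is correct and follows essentially the same route as the paper: conjugation of the Hamiltonian by a tensor power $u^{\otimes L}$ of a single-site unitary realizing the permutation of Pauli matrices up to irrelevant signs, combined with Lemma \ref{igsl} to pass between ground states. The only cosmetic difference is that you write the rotation explicitly (e.g.\ the Hadamard-type matrix $\tfrac{1}{\sqrt2}(\sigma^x+\sigma^z)$), whereas the paper just invokes the existence of a suitable $U\in\mathrm{SU}(2)$.
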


\begin{proof}
Let 
$\tilde{\mathbf H}$ be the operator obtained from the Hamiltonian $\mathbf H$ after replacing the parameters
$(J_x,J_y,J_z)$ with $(J_{\tilde x},J_{\tilde y},J_{\tilde z})$.
It is well-known that there exists $U\in\mathrm{SU}(2)$ such that $\sigma^{a}=\pm U\sigma^{\tilde a} U^{-1}$, $a\in\{x,y,z\}$. The signs involved in these formulas will be irrelevant.
Writing $\mathbf U=U^{\otimes L}$, we have
  $\tilde{\mathbf H}=\mathbf U\mathbf H \mathbf U^{-1}$. Given a ground-state eigenvector $\Psi$ of $\mathbf H$, we can obtain a ground-state eigenvector of $\tilde{\mathbf H}$ as  $\mathbf U\Psi$. By Lemma \ref{igsl}, we may use $\mathbf U\Psi$ to compute the left-hand side of \eqref{cfs}. That is, we are reduced to the identity
$$\frac{\langle\mathbf  U\Psi|\sigma_j^a\sigma_{j+1}^a|\mathbf U\Psi\rangle}{\langle \mathbf U\Psi|\mathbf U\Psi\rangle}
=\frac{\langle \Psi|\sigma_j^{\tilde a}\sigma_{j+1}^{\tilde a}|\Psi\rangle}{\langle \Psi|\Psi\rangle},
 $$
 which is clear  from the properties of $U$.
\end{proof}

It is easy to see that
the expressions \eqref{caj} are consistent with Lemma \ref{cfsl} only if $f_n$ is invariant under  permutation of the anisotropy parameters. It must then be
a rational expression in
$e_1=J_x+J_y+J_z$, $e_2=J_xJ_y+J_xJ_z+J_yJ_z$ and $e_3=J_xJ_yJ_z$.
Since $e_2=0$ and $f_n$ is homogeneous,
it is in fact  a function of $e_1^3/e_3$ or, equivalently, of $Z$.
This completes the proof of Proposition \ref{cfp}.

  \subsection{Expressions in terms of the transfer-matrix eigenvalue}

The next step is to express \eqref{feg} in terms of the transfer matrix eigenvalue $\lambda$.

\begin{lemma}\label{gel}
One has 
\begin{equation}\label{fni}\varepsilon_\eta+\frac{\Gamma_\eta} 2\Bigg|_{\eta=\pi/3}=-\frac{a}{Lb_u}\frac{\lambda_{u\eta}\lambda-\lambda_u\lambda_\eta}{\lambda^2}\Bigg|_{u=\eta=\pi/3}. 
\end{equation}
\end{lemma}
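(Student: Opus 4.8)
The plan is to start from the formula \eqref{ler} for the ground-state energy per site,
$$\varepsilon=\frac{a_u+c_u}{2b_u}-\frac{a}{Lb_u}\frac{\lambda_u}{\lambda}\Bigg|_{u=\eta},$$
and differentiate it with respect to $\eta$. The right-hand side depends on $\eta$ both explicitly (through the Boltzmann weights and their $u$-derivatives, all evaluated at $u=\eta$) and implicitly through the dependence of $\lambda$ on $\eta$. Writing $g(\eta)=\bigl((a_u+c_u)/(2b_u)\bigr)\big|_{u=\eta}$ and $h(u,\eta)=\bigl(a/(Lb_u)\bigr)(u,\eta)$, one has $\varepsilon=g(\eta)-h(\eta,\eta)\,(\lambda_u/\lambda)\big|_{u=\eta}$, and the chain rule gives
$$\varepsilon_\eta=g'(\eta)-\Bigl(h_u+h_\eta\Bigr)\frac{\lambda_u}{\lambda}\Bigg|_{u=\eta}-h(\eta,\eta)\,\frac{d}{d\eta}\!\left(\frac{\lambda_u}{\lambda}\right)\!\Bigg|_{u=\eta}.$$
The total $\eta$-derivative of $\lambda_u/\lambda$ at $u=\eta$ unpacks, again by the chain rule, into $\bigl(\lambda_{uu}\lambda-\lambda_u^2\bigr)/\lambda^2$ coming from the explicit $u=\eta$ substitution, plus $\bigl(\lambda_{u\eta}\lambda-\lambda_u\lambda_\eta\bigr)/\lambda^2$ coming from the implicit dependence; the desired combination $(\lambda_{u\eta}\lambda-\lambda_u\lambda_\eta)/\lambda^2$ is exactly the latter piece, so the strategy is to isolate it.

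Next I would evaluate everything at $\eta=\pi/3$ using the two known facts available there: the energy \eqref{se}, $\varepsilon|_{\eta=\pi/3}=-(\zeta^2+3)/4$, and the transfer-matrix eigenvalue \eqref{ese}, $\lambda|_{\eta=\pi/3}=\theta_1(u|\tau)^L$. From the latter one computes $\lambda_u/\lambda=L\,\theta_1'(u)/\theta_1(u)$ and $(\lambda_{uu}\lambda-\lambda_u^2)/\lambda^2=L\,(\theta_1''\theta_1-(\theta_1')^2)/\theta_1^2$, evaluated at $u=\eta=\pi/3$, which are explicit theta-function expressions. The remaining ingredients — $g(\eta)$, $g'(\eta)$, $h$, $h_u$, $h_\eta$ at $u=\eta=\pi/3$ — are concrete functions of the Boltzmann weights \eqref{bw}, computable from the theta-function identities of Appendix \ref{tfs}. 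Substituting all of this into the expression for $\varepsilon_\eta$ and adding $\Gamma_\eta/2$ (where $\Gamma$ is the known function \eqref{zge}, whose $\eta$-derivative at $\pi/3$ is again an explicit theta-function quantity), the claim is that all the explicit terms cancel, leaving precisely $-\dfrac{a}{Lb_u}\dfrac{\lambda_{u\eta}\lambda-\lambda_u\lambda_\eta}{\lambda^2}\Big|_{u=\eta=\pi/3}$.

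The main obstacle will be the theta-function bookkeeping: one must verify that the combination $g'(\eta)-(h_u+h_\eta)(\lambda_u/\lambda)-h\,(\lambda_{uu}\lambda-\lambda_u^2)/\lambda^2+\Gamma_\eta/2$, all at $u=\eta=\pi/3$, vanishes identically as a function of $\tau$. This is a genuine (but finite) computation with Jacobi theta functions at the special argument $\pi/3$, where doubling/tripling identities and the specific normalization \eqref{phf} conspire to produce the cancellation; I would organize it by first rewriting $a,b,c$ and their $u$-derivatives in terms of $\theta_1(u\pm\pi/3)$ at $u=\pi/3$, i.e.\ in terms of $\theta_1(2\pi/3)$, $\theta_1(0)=0$ and $\theta_1'(0)$, and then reducing everything to the single modular quantity $\chi$ of \eqref{chi} that already appears in \eqref{zze}. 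Once the explicit part is shown to cancel, the identity \eqref{fni} follows immediately from the chain-rule decomposition above.
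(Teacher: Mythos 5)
Your chain-rule decomposition of $\varepsilon_\eta$ is correct, and you have correctly identified $-\frac{a}{Lb_u}\,(\lambda_{u\eta}\lambda-\lambda_u\lambda_\eta)/\lambda^2$ as the piece coming from the implicit $\eta$-dependence of $\lambda$. The problem is that the entire content of the lemma is the assertion that the remaining, explicit part,
\begin{equation*}
g'(\eta)-\bigl(h_u+h_\eta\bigr)\frac{\lambda_u}{\lambda}-h\,\frac{\lambda_{uu}\lambda-\lambda_u^2}{\lambda^2}+\frac{\Gamma_\eta}{2}\Bigg|_{u=\eta=\pi/3},
\end{equation*}
vanishes, and you do not verify this: you only state that this ``is the claim'' and sketch how you would organize the theta-function computation. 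As written, this is a plan for a proof rather than a proof. (The cancellation is in fact true, so the plan is viable in principle, but carrying it out requires differentiating the weights \eqref{bw} in both $u$ and $\eta$ at $u=\eta=\pi/3$, expressing $\Gamma_\eta$ compatibly, and a nontrivial amount of identity-chasing; none of this is done.)

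The paper's proof avoids the bookkeeping entirely, and you should compare. First, letting $u\to\eta$ in \eqref{zg} gives $\Gamma=(a_u-c_u)/b_u\big|_{u=\eta}$, so that $\varepsilon+\Gamma/2=\frac{a_u}{b_u}-\frac{a}{Lb_u}\frac{\lambda_u}{\lambda}\big|_{u=\eta}$; this absorbs $\Gamma$ \emph{before} differentiating. Second --- and this is the key idea your proposal is missing --- the normalization \eqref{phf} together with \eqref{ese} gives $\frac{\lambda_u}{\lambda}(u,\pi/3)=L\,\frac{a_u+b_u}{a+b}(u,\eta)$ for \emph{every} $\eta$, hence $\frac{\lambda_u}{\lambda}(\eta,\pi/3)=L\,\frac{a_u+b_u}{a}(\eta,\eta)$ since $b(\eta,\eta)=0$. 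Adding and subtracting $\frac{a}{Lb_u}(\eta,\eta)\frac{\lambda_u}{\lambda}(\eta,\pi/3)$ then splits $\varepsilon+\Gamma/2$ into $T_1=\frac{a_u}{b_u}-\frac{a_u+b_u}{b_u}\equiv -1$, a constant whose $\eta$-derivative vanishes, plus a remainder $T_2$ proportional to $\frac{\lambda_u}{\lambda}(\eta,\pi/3)-\frac{\lambda_u}{\lambda}(\eta,\eta)$, which vanishes at $\eta=\pi/3$; differentiating $T_2$ and using this vanishing, the only surviving term is exactly the right-hand side of \eqref{fni}. To complete your argument you must either actually exhibit the theta-function cancellation you assert, or adopt this add-and-subtract device, which makes it manifest without computation.
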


\begin{proof}
Since $a=c$ and $b=d=0$ when $u=\eta$, the case  $u\rightarrow\eta$ of \eqref{zg} is
$$\Gamma=\frac{a_u-c_u}{b_u}\Bigg|_{u=\eta}. $$
Combining this with \eqref{ler} gives
$$\varepsilon+\frac \Gamma 2=\frac{a_u}{b_u}-\frac{a}{Lb_u}\frac{\lambda_u}{\lambda}\Bigg|_{u=\eta}.$$
Writing  the variables as $(u,\eta)$, we split this expression as $T_1+T_2$, where
\begin{align*}T_1&=\frac{a_u}{b_u}(\eta,\eta)-\frac{a}{Lb_u}(\eta,\eta)\frac{\lambda_u}{\lambda}\left(\eta,\pi/3\right),\\
T_2&=\frac{a}{Lb_u}(\eta,\eta)\left(\frac{\lambda_u}{\lambda}\left(\eta,\pi/3\right)
 -\frac{\lambda_u}{\lambda}\left(\eta,\eta\right)
 \right).
 \end{align*}
 
By \eqref{phf} and \eqref{ese},
$$\frac{\lambda_u}{\lambda}(u,\pi/3)=L\frac{a_u+b_u}{a+b}(u,\eta), $$ 
where $\eta$ on the right-hand side is arbitrary. In particular, 
taking $u=\eta$ and using 
that  $b(\eta,\eta)=0$ gives
$$\frac{\lambda_u}{\lambda}(\eta,\pi/3)=L\frac{a_u+b_u}{a}(\eta,\eta). $$ 
It follows that $T_1=-1$. Hence, the only contribution to \eqref{fni} comes from $T_2$.
To give a non-zero contribution, the derivative must hit the second factor
and we obtain
 $$\frac{\partial T_2}{\partial \eta}\Bigg|_{\eta=\pi/3}=-\frac{a}{Lb_u} \frac{\partial}{\partial \eta}\frac{\lambda_u}{\lambda}\Bigg|_{u=\eta=\pi/3},$$
 which equals the right-hand side of \eqref{fni}. 
\end{proof}

Using also \eqref{se} and \eqref{zze}, we may now rewrite \eqref{feg} as follows.

\begin{proposition}\label{fnp}
The quantity $f_n$ appearing in  \eqref{cf} can be expressed as
$$f_n=\frac{\zeta^2+3}{12L\chi } \frac{\lambda_{u\eta}\lambda-\lambda_u\lambda_\eta}{\lambda^2}\Bigg|_{u=\eta=\pi/3}.$$
\end{proposition}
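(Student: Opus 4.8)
The plan is to combine Proposition~\ref{cfp}, in the form \eqref{feg}, with Lemma~\ref{gel} and the auxiliary identities \eqref{se} and \eqref{zze}, which is essentially just a matter of bookkeeping. First I would recall \eqref{feg}, namely
\[
f_n=\frac{\varepsilon(2\varepsilon_\eta+\Gamma_\eta)}{\zeta\zeta_\eta-\Gamma_\eta}\Bigg|_{\eta=\pi/3}.
\]
The numerator factors as $\varepsilon\cdot 2\bigl(\varepsilon_\eta+\tfrac12\Gamma_\eta\bigr)$, so I can substitute the value of $\varepsilon\big|_{\eta=\pi/3}=-(\zeta^2+3)/4$ from \eqref{se} and the value of $\varepsilon_\eta+\tfrac12\Gamma_\eta\big|_{\eta=\pi/3}$ from Lemma~\ref{gel}. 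For the denominator I would use \eqref{zze}, which gives $\zeta\zeta_\eta-\Gamma_\eta\big|_{\eta=\pi/3}=6\chi\,\dfrac{a}{b_u}\Big|_{u=\eta=\pi/3}$.

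Putting these three substitutions together, the numerator becomes
\[
2\left(-\frac{\zeta^2+3}{4}\right)\left(-\frac{a}{Lb_u}\frac{\lambda_{u\eta}\lambda-\lambda_u\lambda_\eta}{\lambda^2}\right)\Bigg|_{u=\eta=\pi/3}
=\frac{(\zeta^2+3)\,a}{2Lb_u}\,\frac{\lambda_{u\eta}\lambda-\lambda_u\lambda_\eta}{\lambda^2}\Bigg|_{u=\eta=\pi/3},
\]
and dividing by $6\chi\,a/b_u\big|_{u=\eta=\pi/3}$ the factors $a/b_u$ cancel (they are nonzero there, as $a=c\neq 0$ and $b_u\neq 0$ at $u=\eta=\pi/3$, which is implicit in the statements already established), leaving
\[
f_n=\frac{\zeta^2+3}{12L\chi}\,\frac{\lambda_{u\eta}\lambda-\lambda_u\lambda_\eta}{\lambda^2}\Bigg|_{u=\eta=\pi/3}.
\]
This is exactly the claimed expression in Proposition~\ref{fnp}.

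There is really no serious obstacle here: the proof is a direct assembly of results proved earlier in this section. The only point requiring a word of care is the legitimacy of the cancellation of $a/b_u$, i.e.\ that this quantity is finite and nonzero at $u=\eta=\pi/3$; this follows from the explicit theta-function parametrization \eqref{bw} together with \eqref{phf}, since $a+b=\theta_1(2\eta|\tau)\theta_1(u|\tau)/\theta_1(\eta|\tau)$ has nonvanishing $u$-derivative at $u=\pi/3$ for generic $\tau$, and $a(\eta,\eta)=c(\eta,\eta)\neq 0$. One should also note that, as throughout, the identity is understood for generic $\tau$, so that $\chi\neq 0$ and the various determinants appearing implicitly are invertible; the final formula for $f_n$ is then independent of $\tau$ by Proposition~\ref{cfp}.
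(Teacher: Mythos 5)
Your proof is correct and follows exactly the route the paper takes: Proposition \ref{fnp} is obtained by substituting \eqref{se}, Lemma \ref{gel} and \eqref{zze} into \eqref{feg}, and your bookkeeping of the factors of $2$, $4$, $6$ and $a/b_u$ matches the stated result. The remark about the nonvanishing of $a/b_u$ and $\chi$ for generic $\tau$ is a reasonable (if minor) addition to what the paper leaves implicit.
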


\subsection{The $TQ$-relation}
\label{tqss}

Let
$$\phi(u)=\lambda(u)\Big|_{\eta=\pi/3}=\theta_1(u|\tau)^L. $$
We have reduced the computation of the correlation functions to the evaluation of
$$
\lambda_{u\eta }\lambda-\lambda_u\lambda_\eta\Big|_{u=\eta=\pi/3}=
\lambda_{u\eta }\phi-\phi_u\lambda_\eta\Big|_{u=\eta=\pi/3}
.$$
To handle the $\eta$-derivatives, we will exploit Baxter's $TQ$-relation.

A $Q$-operator is a family of operators $\mathbf Q=\mathbf Q(u)$ (depending also on $\eta$ and $\tau$) acting on $V^{\otimes L}$ such that $\mathbf Q(u)$ and $\mathbf T(v)$ always commute, and 
$$\mathbf T(u)\mathbf Q(u)=\phi(u-\eta)\mathbf Q(u+2\eta)+\phi(u+\eta)\mathbf Q(u-2\eta). $$ 
It is also natural to assume the quasi-periodicity and reflection symmetry
$$\mathbf Q(u+2\pi)=\mathbf Q(-u)=\mathbf Q(u),\qquad\mathbf  Q(u+2\pi\tau)=e^{-2\ti L(u+\pi\tau)}\mathbf Q(u),$$
which are consistent with \eqref{tqp}.

Baxter gave two different constructions of $Q$-operators for the eight-vertex model. Unfortunately, neither of these work in our situation, that is, for odd $L$ and $\eta=\pi/3$. Baxter's first $Q$-operator \cite{b1} is defined in terms of the inverse of an auxiliary operator $\mathbf Q_R$, which is not invertible for $\eta=\pi/3$. His second definition \cite{b3} only works for even $L$. Different constructions due to Fabricius \cite{f} and Roan \cite{r}, give solutions defined at $\eta=\pi/3$, but as we also need the $\eta$-derivative  at $\eta=\pi/3$, that is not enough for our purposes. One possible way out of this problem was suggested by Bazhanov and Mangazeev \cite{bm}. They propose that the ground-state eigenvector $\Psi$ belongs to a subspace of $V^{\otimes L}$ where $\mathbf Q_R$ is invertible. Restricting  Baxter's first $Q$-operator $\mathbf Q$  to that space, one would then have $\mathbf Q\Psi=Q\Psi$, where the eigenvalue $Q=Q(u)$ satisfies
\begin{subequations}\label{qc}
\begin{gather}\label{lqr} \lambda(u)Q(u)=\phi(u-\eta)Q(u+2\eta)+\phi(u+\eta)Q(u-2\eta),\\
  Q(u+2\pi)=Q(-u)=Q(u),\qquad Q(u+2\pi\tau)=e^{-2\ti L(u+\pi\tau)}Q(u). \end{gather}
\end{subequations}
Our results will be derived from these identities, which still lack a rigorous proof. More formally, we make the following assumption:

\begin{assumption}\label{ass} There exists a non-zero function $Q$ that satisfies \eqref{qc} and is
  analytic for all
  $u\in\mathbb C$, $\operatorname{Im}(\tau)>0$ and $\eta$  in some neighbourhood of $\pi/3$ (that may depend on $u$ and $\tau$). 
  \end{assumption}

\begin{lemma}\label{qtdl}
If \emph{Assumption \ref{ass}} holds, then 
the space of functions satisfying the conditions described there is two-dimensional. It is spanned by
two functions $Q^\pm$ that satisfy
\begin{equation}\label{qpm}Q^\pm(u+\pi)=Q^{\mp}(u),\qquad Q^{\pm}(u+\pi\tau)=\pm e^{-\ti L(u+\pi\tau/2)}Q^{\pm}(u).  \end{equation}
Moreover, writing $Q^\pm=Q^{\pm}(u|\tau)$, these functions can be normalized so that
\begin{equation}\label{qms}Q^\pm(u|\tau+2)=Q^\mp(u|\tau).\end{equation}
\end{lemma}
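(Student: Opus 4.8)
\textbf{Proof proposal for Lemma \ref{qtdl}.}

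The plan is to analyze the solution space of the functional equations \eqref{qc} directly. First I would fix $\eta=\pi/3$ and treat $Q$ as a function of $u$ alone (with $\tau$ a parameter); once the structure is understood there, analyticity in $\eta$ near $\pi/3$ for the members of a chosen basis is automatic since the relations \eqref{qc} themselves depend analytically on $\eta$ and one can propagate the basis. The key observation is that \eqref{lqr} together with the quasi-periodicity conditions in \eqref{qc} forces $Q$ to lie in a finite-dimensional space of theta-type functions: the periodicity $Q(u+2\pi)=Q(u)$ and the quasi-periodicity under $u\mapsto u+2\pi\tau$ with multiplier $e^{-2\ti L(u+\pi\tau)}$ say that $Q$ is (up to an elementary prefactor) a theta function of order $2L$ with respect to the lattice generated by $2\pi$ and $2\pi\tau$, hence lies in a space of dimension $2L$; imposing evenness $Q(-u)=Q(u)$ roughly halves this. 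The $TQ$-relation \eqref{lqr} with $\lambda=\phi=\theta_1(u|\tau)^L$ at $\eta=\pi/3$ then cuts this down further. Rather than counting zeros by hand, I would instead exhibit the two-dimensionality concretely: the known solutions at $\eta=\pi/3$ coming from Bazhanov--Mangazeev / Fabricius / Roan give at least a two-dimensional space, and an upper bound of two follows from the standard Wronskian argument for three-term relations of the form \eqref{lqr} — any three solutions are linearly dependent because the relevant "discrete Wronskian'' $Q_1(u)Q_2(u+2\eta)-Q_1(u+2\eta)Q_2(u)$ is, thanks to \eqref{lqr}, a quasi-periodic function with controllable multipliers, and combined with the reflection symmetry this pins the dimension to two.

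Next I would produce the basis $Q^\pm$ with the stated behaviour. The transformation $u\mapsto u+\pi$ preserves the whole system \eqref{qc}: indeed $\phi(u+\pi)=\theta_1(u+\pi|\tau)^L=(-1)^L\theta_1(u|\tau)^L=-\phi(u)$ since $L$ is odd, and $\lambda(u+\pi)=(-1)^L\lambda(u)=-\lambda(u)$ by \eqref{tqp}, so the sign cancels in \eqref{lqr}; the periodicity and quasi-periodicity conditions are visibly stable under $u\mapsto u+\pi$ (using $e^{-2\ti L(u+\pi+2\pi\tau)}=e^{-2\ti L(u+2\pi\tau)}$ modulo the $2\pi$-periodicity, with $L$ odd). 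Hence $Q\mapsto Q(\cdot+\pi)$ is an involution on the two-dimensional solution space, so it is diagonalizable with eigenvalues $\pm1$; but a true $+1$ eigenvector would be $\pi$-periodic, which I would rule out by a parity/zero-count argument (a $\pi$-periodic even theta function of the relevant order would have too few parameters to satisfy \eqref{lqr}), leaving eigenvalues $+1$ and $-1$, equivalently a basis on which the involution acts by swapping, i.e.\ $Q^\pm(u+\pi)=Q^\mp(u)$ after renaming. The quasi-periodicity $Q^\pm(u+\pi\tau)=\pm e^{-\ti L(u+\pi\tau/2)}Q^\pm(u)$ is then forced: applying $u\mapsto u+\pi\tau$ twice must reproduce the multiplier $e^{-2\ti L(u+\pi\tau)}$ of the original lattice translation, so $Q^\pm(u+\pi\tau)=\epsilon^\pm e^{-\ti L(u+\pi\tau/2)}Q^\pm(u)$ with $(\epsilon^\pm)^2=1$; and compatibility with $Q^\pm(u+\pi)=Q^\mp(u)$ together with the $u\mapsto u+2\pi$ periodicity fixes $\epsilon^+=-\epsilon^-$, so after possibly swapping the labels $\epsilon^\pm=\pm1$.

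Finally, for the normalization \eqref{qms} I would use that $\tau\mapsto\tau+2$ fixes $\theta_1(u|\tau)$ (since $\theta_1$ has period $2$ in the nome exponent, i.e.\ $\theta_1(u|\tau+2)=\theta_1(u|\tau)$), hence fixes $\phi$ and the whole system \eqref{qc} including the $u$-quasi-periods, so $Q\mapsto Q(u|\tau+2)$ again acts on the two-dimensional solution space; since it commutes with the $u\mapsto u+\pi$ involution and preserves the multipliers in \eqref{qpm} up to the ambiguity exchanging $Q^+$ and $Q^-$ (note $e^{-\ti L(u+\pi(\tau+2)/2)}=e^{-\ti L\pi}e^{-\ti L(u+\pi\tau/2)}=-e^{-\ti L(u+\pi\tau/2)}$ for $L$ odd, which is exactly the sign distinguishing $Q^+$ from $Q^-$), the map $\tau\mapsto\tau+2$ swaps the two eigenlines, i.e.\ $Q^\pm(u|\tau+2)$ is proportional to $Q^\mp(u|\tau)$; a single scalar rescaling (choosing the proportionality constant on one line and transporting it) makes the relation an equality. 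The main obstacle I expect is making the dimension count airtight: ruling out a $\pi$-periodic solution and correctly bounding the solution space of the three-term relation \eqref{lqr} by two requires a careful theta-function/zero-counting argument rather than soft linear algebra, and one must be attentive to the $L$-odd sign subtleties throughout.
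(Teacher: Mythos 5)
Your overall architecture (translation operators acting as involutions on the solution space, $\pm$ eigencomponents, the sign flip of the multiplier under $\tau\mapsto\tau+2$) is close in spirit to the paper's, but the proof has a genuine gap at its central point: the dimension count, which you yourself flag as the ``main obstacle'' and leave open. The paper does not bound the dimension by a Casoratian argument for the three-term relation \eqref{lqr}. Instead it specializes to $\eta=\pi/3$, where the conditions on the $+$ eigencomponent become the system \eqref{qdp}, and invokes the result of \cite{rsa} (Thm.~2.4 together with \S 5.4 there) that the space of entire solutions of \eqref{qdp} is exactly one-dimensional; since a small perturbation cannot increase the dimension, $Q^+$ remains unique up to scalar for $\eta$ near $\pi/3$, and two-dimensionality of the full space follows. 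Your sketched Casoratian argument only gives linear dependence over the field of $2\eta$-periodic functions, and at $\eta=\pi/3$ the shift $2\pi/3$ is commensurate with the period $2\pi$, so upgrading this to linear dependence over $\mathbb C$ is precisely the zero-counting you defer. Likewise your lower bound via the Fabricius/Roan constructions only produces solutions at $\eta=\pi/3$ itself, not an analytic family on a neighbourhood --- that lack is the entire reason Assumption~\ref{ass} exists --- so neither bound is actually established.

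Two further steps do not go through as written. First, your eigenspace argument for \eqref{qpm} is internally inconsistent: you propose to rule out a $+1$ eigenvector of $u\mapsto u+\pi$ and then conclude the eigenvalues are $+1$ and $-1$; in fact $Q^++Q^-$ \emph{is} a $\pi$-periodic solution, so it cannot be ruled out. The workable order of operations (the paper's) is to first split with respect to the involution $Q(u)\mapsto e^{\ti L(u+\pi\tau/2)}Q(u+\pi\tau)$, which directly produces components with the multipliers in \eqref{qpm}, then define the partner by $Q^-(u)=Q^+(u+\pi)$ (the swap relation is then just $2\pi$-periodicity), and finally use the one-dimensionality above to see that these two span everything; your derivation of the $\pi\tau$-multipliers tacitly assumes $Q^\pm$ are eigenvectors of the $\pi\tau$-translation, which does not follow from their being swapped by the $\pi$-translation. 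Second, the normalization \eqref{qms} is not obtained by ``transporting a constant'': one must show the factor $f(\tau)$ in $Q^\pm(u|\tau+2)=f(\tau)\,Q^\mp(u|\tau)$ is an analytic multiplicative coboundary. The paper does this by writing $Q^-(u|\tau+2)=Q^+(\pi-u|\tau+2)$ and Taylor-expanding at $u=\pi/2$, obtaining $f(\tau)=(-1)^k a(\tau+2)/a(\tau)$ with $a$ the leading coefficient there, so dividing by $a$ (times $e^{\ti\pi\tau/2}$ when $k$ is odd) works; choosing constants freely on a fundamental domain for $\tau\mapsto\tau+2$ would in general destroy analyticity in $\tau$. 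A small factual slip along the way: $\theta_1(u|\tau+2)=\ti\,\theta_1(u|\tau)$, not $\theta_1(u|\tau)$; the system \eqref{qc} is still preserved, but only because $\lambda$ and $\phi$ acquire the same factor $\ti^L$.
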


\begin{proof}
By \eqref{tqp}, the transfer-matrix eigenvalue $\lambda(u)$ obeys
$$ \lambda(u+\pi)=-\lambda(u),\qquad \lambda(u+\pi\tau)=-e^{-\ti L(2u+\pi\tau)}\lambda(u),$$
and we also have
$$ \phi(u+\pi)=-\phi(u),\qquad \phi(u+\pi\tau)=-e^{-\ti L(2u+\pi\tau)}\phi(u).$$
Using these relations, it is easy to check that if $Q(u)$ solves \eqref{qc}, then so do $Q(u+\pi)$ and $e^{\ti L(u+\pi\tau/2)}Q(u+\pi\tau)$.
We can then write any solution as $Q=\tilde Q_++\tilde Q^-$, where
$$\tilde Q^{\pm}(u)=\frac 12\left(Q(u)\pm e^{\ti L(u+\pi\tau/2)}Q(u+\pi\tau)\right).$$
If $Q$ is non-zero, the functions $\tilde Q^\pm$  cannot both vanish identically. 
If $\tilde Q^+\not\equiv 0$, we let $Q^+=\tilde Q^+$ and define
$Q^-(u)=Q^+(u+\pi)$. Otherwise, we let $Q^-=\tilde Q^-$ and $Q^+(u)=Q^-(u+\pi)$.  
It is then clear that $Q^\pm$ satisfy \eqref{qc} and \eqref{qpm}.

We will write $q(u)=Q^+(u)\big|_{\eta=\pi/3}$. This function satisfies
\begin{subequations}\label{qdp}
\begin{gather} \phi(u)q(u)+\phi\left(u+\frac{2\pi}3\right)q\left(u+\frac{2\pi }3\right)+\phi\left(u-\frac{2\pi }3\right)q\left(u-\frac{2\pi}3\right)=0,\\
 q(u+2\pi)=q(-u)=q(u),\qquad q(u+\pi\tau)=e^{-\ti L(u+\pi\tau/2)}q(u). \end{gather}
 \end{subequations}
 It is proved in \cite{rsa} that
 the space of entire functions subject to these conditions is one-dimensional.
(More precisely, it follows from the discussion in 
 \cite[\S 5.4]{rsa}  that this is a consequence of 
 \cite[Thm.\ 2.4]{rsa}.)
  Since a small perturbation cannot increase the dimension of the solution space, 
our assumption that a non-zero $Q$ exists implies  that $Q^+$ remains unique for $\eta$ near $\pi/3$.  
In particular, the function $\tilde Q^+$  is proportional to $Q^+$ and similarly $\tilde Q^-$ is proportional to $Q^-$.
Hence, any solution $Q$ is in the span of $Q^+$ and $Q^-$.

For the final statement, we first note that $Q^-(u|\tau+2)=Q^+(\pi-u|\tau+2)$ satisfies the properties defining $Q^+$, so there exists
$f(\tau)$ (depending also on $\eta$) such that 
$$Q^+(\pi-u|\tau+2)=f(\tau)Q^+(u|\tau).$$ 
In particular, $u=\pi/2$ has the same multiplicity as a zero of both $Q^+(u|\tau+2)$ and $Q^+(u|\tau)$. If $k$ is this multiplicity and $a(\tau)$ is the leading Taylor coefficient of $Q^+(u|\tau)$ at $u=\pi/2$, then 
$$f(\tau)=\lim_{u\rightarrow\pi/2}\frac{Q^+(\pi-u|\tau+2)}{Q^+(u|\tau)}=(-1)^k\frac{a(\tau+2)}{a(\tau)}. $$
It is now easy to check that, if $k$ is even, then the renormalized functions $\tilde Q^\pm=Q^\pm/a$ satisfy $\tilde Q^\pm(u|\tau+2)=\tilde Q^\mp(u|\tau)$. If $k$ is odd, one can instead take  $\tilde Q^\pm=e^{\ti\pi\tau/2}Q^\pm/a$.
\end{proof}

From now on, we fix two functions $Q^\pm$  satisfying all conditions in
Lemma \ref{qtdl}. Writing, as in the proof, $q=Q^+|_{\eta=\pi/3}$, 
we have in particular
\begin{equation}\label{qsm}q(u+\pi|\tau)=q(u|\tau+2).\end{equation}

The functions $Q^\pm$ satisfy the Wronskian relations \cite{bm2}
\begin{subequations}\label{bmr}
\begin{align}\label{qw}Q^+(u-\eta)Q^-(u+\eta)-Q^+(u+\eta)Q^-(u-\eta)&=W\phi(u),\\
\label{qql}Q^+(u-2\eta)Q^-(u+2\eta)-Q^+(u+2\eta)Q^-(u-2\eta)&=W\lambda(u),\end{align}
\end{subequations}
where $W=W(\eta,\tau)$ is  independent of $u$. 
 We introduce the difference operator
 $$(\mathbf Rf)(\tau)=f(\tau)-f(\tau+2). $$
 Using  \eqref{qms},
 we can then write \eqref{bmr} more compactly as
 \begin{subequations}\label{rqw}
  \begin{align}\mathbf R\left(Q^+(u-\eta)Q^-(u+\eta)\right)&=W\phi(u),\\
\mathbf R\left(Q^+(u-2\eta)Q^-(u+2\eta)\right)&=W\lambda(u).\end{align}
\end{subequations}
At $\eta=\pi/3$, these relations are equivalent and simplify to
\begin{equation}\label{qwsc} W\phi(u)=
\mathbf R\left(q\left(u-\frac\pi 3\right)q\left(u-\frac{2\pi}3\right)\right).
\end{equation}

The following lemma expresses the $\eta$-derivatives of the transfer-matrix eigenvalue in terms of derivatives of the function $q$ alone. The expressions follow from a cancellation for which the supersymmetric value $\eta=\pi/3$ is crucial. This simple but important feature was first observed by Stroganov \cite{st,st2}. It enables the computation of the correlation functions from the $TQ$-relation.
\begin{lemma}\label{ldl}
For $\eta=\pi/3$, 
\begin{subequations}
\label{cdw}
\begin{align}
\label{leq}\frac{W \lambda_\eta}3&=
\mathbf R\left( q\left(u-\frac{2\pi}{3}\right)q'\left(u-\frac{\pi}{3}\right)-q'\left(u-\frac{2\pi}{3}\right)q\left(u-\frac{\pi}{3}\right)\right),\\
\label{lue} \frac{W \lambda_{u\eta}}3&=\mathbf R\left(
q\left(u-\frac{2\pi}{3}\right)q''\left(u-\frac{\pi}{3}\right)-q''\left(u-\frac{2\pi}{3}\right)q\left(u-\frac{\pi}{3}\right)
\right).
\end{align}
\end{subequations}
\end{lemma}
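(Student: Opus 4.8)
The plan is to start from the Wronskian relation \eqref{rqw} for the transfer-matrix eigenvalue, namely $\mathbf R\left(Q^+(u-2\eta)Q^-(u+2\eta)\right)=W\lambda(u)$, and differentiate it with respect to $\eta$, then specialize to $\eta=\pi/3$. First I would compute $\partial_\eta$ of the left-hand side. Since the operator $\mathbf R$ only acts on $\tau$, it commutes with $\partial_\eta$, and since $W=W(\eta,\tau)$ is independent of $u$, differentiating gives a sum of several terms: those where $\partial_\eta$ hits the arguments $u\mp 2\eta$ (producing factors of $\mp 2$ times $u$-derivatives of $Q^\pm$), and those where $\partial_\eta$ hits $Q^\pm$ directly (producing $Q^\pm_\eta$). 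The right-hand side becomes $W_\eta\lambda+W\lambda_\eta$. The key observation, due to Stroganov, is that at the supersymmetric value $\eta=\pi/3$ one has $2\eta=2\pi/3$ and $4\eta=4\pi/3\equiv-2\pi/3\pmod{2\pi}$, so the shifts $u\mp 2\eta$ collapse nicely; I expect that the terms where $\partial_\eta$ hits $Q^\pm$ directly, together with the $W_\eta\lambda$ term on the right, either cancel among themselves or can be shown to contribute zero after applying $\mathbf R$, using the defining relations \eqref{qc} and the one-dimensionality established in Lemma \ref{qtdl}.

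More concretely, at $\eta=\pi/3$ the argument pairs become $Q^+(u-2\pi/3)Q^-(u+2\pi/3)$; using $Q^\pm(u+\pi)=Q^\mp(u)$ one rewrites $Q^-(u+2\pi/3)=Q^-(u-\pi+\pi+2\pi/3)=Q^+(u-\pi/3)$ up to the appropriate shift bookkeeping, so that the whole expression inside $\mathbf R$ can be written in terms of $q=Q^+|_{\eta=\pi/3}$ evaluated at $u-\pi/3$ and $u-2\pi/3$, matching the structure of \eqref{qwsc}. Differentiating in $\eta$ and keeping only the terms where the derivative produces a $u$-derivative of $q$ (the chain-rule factor $\pm 2$ from $u\mp 2\eta$, times a factor $3/2$ or similar from converting to the normalization in the statement), one should land on precisely \eqref{leq}. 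For \eqref{lue} one differentiates once more in $u$; since $\mathbf R$ commutes with $\partial_u$ as well, this simply promotes $q'$ to $q''$ on the second factor while leaving the first untouched, and the Leibniz cross-terms $q'q'$ cancel between the two pieces of the antisymmetric combination. The factor $1/3$ on the left-hand side is the bookkeeping constant coming from $2\eta=2\pi/3$.

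The main obstacle I anticipate is justifying that the "unwanted" terms vanish — specifically, controlling the contributions of $Q^\pm_\eta$ and $W_\eta$ at $\eta=\pi/3$. The cleanest route is probably to differentiate the full $TQ$-relation \eqref{lqr} in $\eta$ first, obtaining an inhomogeneous three-term relation for $q_\eta$ in terms of $\lambda_\eta$, $\phi$, $\phi_u$ and $q$, and to use this to eliminate $q_\eta$ systematically rather than arguing term by term; alternatively, one exploits that the solution space at $\eta=\pi/3$ is one-dimensional (Lemma \ref{qtdl}) so that any extra solution-like combination must be proportional to the known one, forcing the cancellation. A second, more bookkeeping-heavy difficulty is keeping track of the quasi-periodicity phases $e^{\pm\ti L(\cdots)}$ when rewriting $Q^-$ shifts in terms of $Q^+$ and when applying $\mathbf R$; since $\mathbf R$ involves $\tau\mapsto\tau+2$ and $q(u+\pi|\tau)=q(u|\tau+2)$ by \eqref{qsm}, one must check the phases are consistent so that the right-hand sides genuinely reduce to the stated $\mathbf R$-differences. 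I expect both difficulties to be resolvable by the same cancellation mechanism that makes \eqref{qwsc} hold, which is why $\eta=\pi/3$ is essential.
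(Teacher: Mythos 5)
Your starting point (differentiate the Wronskian relation $\mathbf R\left(Q^+(u-2\eta)Q^-(u+2\eta)\right)=W\lambda(u)$ in $\eta$ and then set $\eta=\pi/3$) and your final step (differentiating \eqref{leq} in $u$ to obtain \eqref{lue}, with the $q'q'$ cross-terms cancelling in the antisymmetric combination) both match the paper. But the heart of the lemma --- disposing of the terms in which $\partial_\eta$ hits $Q^\pm$ or $W$ --- is exactly the point you leave open, and neither of your two suggested fallbacks is what actually makes it work. The mechanism is this: at $\eta=\pi/3$ the relation $Q^\pm(u+\pi)=Q^\mp(u)$ turns the product $Q^+(u-2\eta)Q^-(u+2\eta)$ into $Q^-(u+\eta)Q^+(u-\eta)$, i.e.\ into the \emph{same} product that appears in the companion Wronskian relation $\mathbf R\left(Q^+(u-\eta)Q^-(u+\eta)\right)=W\phi(u)$. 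One therefore differentiates \emph{both} relations in $\eta$ and subtracts: the $Q^\pm_\eta$ terms are literally identical in the two derivatives and cancel, and since $\lambda=\phi$ at $\eta=\pi/3$ the $W_\eta\lambda$ and $W_\eta\phi$ terms cancel as well.

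The subtraction also explains the normalization. The $u$-derivative terms carry coefficient $2$ in the second relation and coefficient $1$ in the first (with opposite sign once the arguments are matched up), so they add to give the factor $3$ in \eqref{cdw}. Your plan of ``keeping only the terms where the derivative produces a $u$-derivative'' from the second relation alone would yield a $2$ in place of the $3$, which is a symptom of the missing subtraction. Your alternative routes do not obviously close the gap either: $Q^+_\eta$ is not a solution of \eqref{qc} at $\eta=\pi/3$, so the one-dimensionality statement of Lemma \ref{qtdl} says nothing about it directly, and eliminating $q_\eta$ from the differentiated $TQ$-relation is a substantially messier calculation that you do not carry out. In short, the proposal identifies the right objects and the right difficulty, but is missing the one idea --- pairing the two Wronskian relations and subtracting --- that produces the cancellation and the constant, and hence the lemma.
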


Here and below, primes refer to derivatives in the spectral parameter $u$.
\begin{proof}
Differentiating \eqref{rqw} with respect to $\eta$ gives
\begin{align*}W_\eta\phi&=\mathbf R\Big(-Q^+_u(u-\eta)Q^-(u+\eta)+Q^+(u-\eta)Q^-_u(u+\eta)\\
&\quad+Q^+_\eta(u-\eta)Q^-(u+\eta)+Q^+(u-\eta)Q^-_\eta(u+\eta)\Big),\\
W_\eta\lambda+W\lambda_\eta&=\mathbf R\Big(-2Q^+_u(u-2\eta)Q^-(u+2\eta)+2Q^+(u-2\eta)Q^-_u(u+2\eta)\\
&\quad+Q^+_\eta(u-2\eta)Q^-(u+2\eta)+Q^+(u-2\eta)Q^-_\eta(u+2\eta)\Big).
\end{align*}
At $\eta=\pi/3$, we may write the second relation as
\begin{align*}W_\eta\phi+W\lambda_\eta&=\mathbf R
\Big(-2Q^-_u(u+\eta)Q^+(u-\eta)+2Q^-(u+\eta)Q^+_u(u-\eta)\\
&\quad+Q^-_\eta(u+\eta)Q^+(u-\eta)+Q^-(u+\eta)Q^+_\eta(u-\eta)\Big).
\end{align*}
Subtracting  the first relation, all terms involving $\eta$-derivatives cancel and we obtain
\begin{align*}W\lambda_\eta&=3\mathbf R\big(Q^-(u+\eta)Q^+_u(u-\eta)-Q^-_u(u+\eta)Q^+(u-\eta)\big).
\end{align*}
This can be expressed as \eqref{leq}, which gives \eqref{lue} after differentiation.
\end{proof}

Specializing  $u=\pi/3$ in \eqref{qwsc} and its first two $u$-derivatives, as well as in \eqref{cdw},  gives the following useful relations.

\begin{corollary}\label{wrc}
At $u=\eta=\pi/3$, the following identities hold:
\begin{align*}
W\phi&=\mathbf R\left(q(0)q\left(\frac\pi3\right)\right),\\
W\phi_u&=-\mathbf R\left(q(0)q'\left(\frac\pi3\right)\right),\\
 W\phi_{uu}&=\mathbf R\left(q(0)q''\left(\frac\pi3\right)
+q''(0)q\left(\frac\pi3\right)\right),\\
\frac {W \lambda_\eta}3&=
\mathbf R\left(q\left(0\right)q'\left(\frac{\pi}{3}\right)\right),\\
\frac {W\lambda_{u\eta}}3&=\mathbf R\left(q''\left(0\right)
q\left(\frac{\pi}{3}\right)-q\left(0\right)q''\left(\frac{\pi}{3}\right)
\right).
\end{align*}
\end{corollary}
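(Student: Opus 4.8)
The plan is to obtain all five identities by a single, uniform move: specialize the spectral parameter to $u=\pi/3$ in relations that are already available at $\eta=\pi/3$, namely \eqref{qwsc} together with the two relations \eqref{leq} and \eqref{lue} of Lemma \ref{ldl}. For the three identities involving $\phi$ I would first differentiate \eqref{qwsc} zero, one and two times in $u$. Beyond this the only input required is the reflection symmetry $q(-u)=q(u)$ recorded in \eqref{qdp}: it says that $q$ is even in $u$, hence that $q'$ is odd, $q''$ is even, and in particular $q'(0)=0$.

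Concretely, since $W=W(\eta,\tau)$ is independent of $u$ and the difference operator $\mathbf R$ acts only on $\tau$, it commutes with $\partial_u$, so differentiating \eqref{qwsc} gives
$$W\phi_u(u)=\mathbf R\!\left(q'\!\left(u-\tfrac{\pi}{3}\right)q\!\left(u-\tfrac{2\pi}{3}\right)+q\!\left(u-\tfrac{\pi}{3}\right)q'\!\left(u-\tfrac{2\pi}{3}\right)\right)$$
and an analogous expression carrying one further derivative. Setting $u=\pi/3$ and inserting $q(-\pi/3)=q(\pi/3)$, $q'(-\pi/3)=-q'(\pi/3)$, $q''(-\pi/3)=q''(\pi/3)$ and $q'(0)=0$ turns these into the first three displayed identities of the corollary: the mixed term $2q'(0)q'(-\pi/3)$ appearing after the second differentiation drops out, and the minus sign in the formula for $W\phi_u$ is precisely the odd-parity sign of $q'$ evaluated at $-\pi/3$.

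For the remaining two identities I would simply put $u=\pi/3$ in \eqref{leq} and \eqref{lue}. In \eqref{leq} the summand $q(-\pi/3)q'(0)$ vanishes and $-q'(-\pi/3)q(0)=q(0)q'(\pi/3)$, which yields the fourth identity; in \eqref{lue} the surviving arguments are the even-function values $q''(0)$, $q(\pi/3)$, $q''(\pi/3)$, $q(0)$, which yield the fifth. There is essentially no obstacle: the computation is mechanical. The only points meriting a moment's care are that the $u$-differentiation of \eqref{qwsc} is legitimate and commutes with $\mathbf R$ — immediate because $q$ is analytic in $u$ under Assumption \ref{ass} and $\mathbf R$ only involves the shift $\tau\mapsto\tau+2$ — and the careful bookkeeping of the parities of $q,q',q''$ at the arguments $0$ and $\pm\pi/3$, so that the sign in $W\phi_u$ (and the overall signs in the $\lambda_\eta$, $\lambda_{u\eta}$ formulas) is not lost.
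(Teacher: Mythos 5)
Your proposal is correct and is exactly the paper's argument: the corollary is obtained by specializing $u=\pi/3$ in \eqref{qwsc} and its first two $u$-derivatives and in \eqref{cdw}, using the evenness of $q$ (hence $q'(0)=0$ and the parity signs at $\mp\pi/3$) to simplify. Your bookkeeping of the signs and of the vanishing mixed term $2q'(0)q'(-\pi/3)$ is accurate.
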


Using these identities in Proposition \ref{fnp} gives the following result.

\begin{corollary}\label{lxc}
We have
\begin{equation}\label{cx}f_n
=-\frac{(\zeta^2+3)}{4L\chi\phi(\pi/3)W}\,\mathbf R(X),
\end{equation}
where
$$X=q(0)\left(q''\left(\frac{\pi}3\right)+\frac{\phi'}\phi\left(\frac{\pi}3\right)q'\left(\frac{\pi}{3}\right)\right)-q''(0)q\left(\frac{\pi} 3\right). $$
\end{corollary}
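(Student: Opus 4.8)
The plan is to substitute the identities collected in Corollary \ref{wrc} into the formula of Proposition \ref{fnp} and then reorganise; apart from one small observation about the behaviour of $\phi$ under $\tau\mapsto\tau+2$, everything is bookkeeping.

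First I would record that, since $\lambda|_{\eta=\pi/3}=\phi$, at the point $u=\eta=\pi/3$ one has $\lambda=\phi(\pi/3)$ and $\lambda_u=\phi'(\pi/3)$, so Proposition \ref{fnp} becomes
\[
f_n=\frac{\zeta^2+3}{12L\chi}\cdot\frac{\lambda_{u\eta}\,\phi(\pi/3)-\phi'(\pi/3)\,\lambda_\eta}{\phi(\pi/3)^2},
\]
where all quantities are now functions of $\tau$ alone. Multiplying numerator and denominator by $W=W(\pi/3,\tau)$ (legitimate since $W\neq 0$ for generic $\tau$), and inserting $W\lambda_{u\eta}=3\,\mathbf R\big(q''(0)q(\pi/3)-q(0)q''(\pi/3)\big)$ and $W\lambda_\eta=3\,\mathbf R\big(q(0)q'(\pi/3)\big)$ from Corollary \ref{wrc}, then factoring $\phi(\pi/3)$ out of the numerator and cancelling one power against $W\phi(\pi/3)^2$, I would arrive at
\[
f_n=\frac{\zeta^2+3}{4L\chi\,\phi(\pi/3)\,W}\left(\mathbf R\big(q''(0)q(\pi/3)-q(0)q''(\pi/3)\big)-\frac{\phi'(\pi/3)}{\phi(\pi/3)}\,\mathbf R\big(q(0)q'(\pi/3)\big)\right).
\]

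The only step that is not pure algebra is to pull the scalar $\phi'(\pi/3)/\phi(\pi/3)$ inside the difference operator $\mathbf R$. This is legitimate because $\partial_u\log\phi(u|\tau)\big|_{u=\pi/3}$ is unchanged under $\tau\mapsto\tau+2$: since $\phi=\theta_1(\,\cdot\,|\tau)^L$ and $\theta_1(u|\tau+2)$ equals $\theta_1(u|\tau)$ times a factor independent of $u$ (see Appendix \ref{tfs}), the ratio $\phi(u|\tau+2)/\phi(u|\tau)$ is itself $u$-independent and hence has vanishing logarithmic $u$-derivative. Therefore $\tfrac{\phi'}{\phi}(\pi/3)\,\mathbf R(g)=\mathbf R\!\big(\tfrac{\phi'}{\phi}(\pi/3)\,g\big)$ for any $g$, and combining the two $\mathbf R$-terms collapses the bracket into $-\mathbf R(X)$ with $X$ exactly the expression in the statement, which gives \eqref{cx}. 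The (mild) main obstacle is precisely this commutation of $\phi'/\phi$ past $\mathbf R$; all the rest is substitution of Corollary \ref{wrc} together with $\lambda|_{\eta=\pi/3}=\phi$ and elementary manipulation.
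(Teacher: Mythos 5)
Your proof is correct and takes essentially the same route as the paper, whose own argument is the one-line ``substitute Corollary \ref{wrc} into Proposition \ref{fnp}''. You rightly flag and justify the only non-mechanical step, namely that $\tfrac{\phi'}{\phi}(\pi/3)$ commutes with $\mathbf R$ because $\theta_1(u|\tau+2)=\ti\,\theta_1(u|\tau)$ makes this logarithmic derivative invariant under $\tau\mapsto\tau+2$.
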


 \subsection{Differential relations}
 \label{brs}
We will need the following difference-differential equation for $q(u)$, which may have some independent interest. Recall that the Weierstrass function $\wp(u|\tau_1,\tau_2)$ is the unique meromorphic function with period lattice $\mathbb Z\tau_1+\mathbb Z\tau_2$ that has poles only at the lattice points and satisfies
\begin{equation}\label{wpc}
\wp(u|\tau_1,\tau_2)=\frac 1{u^2}+\mathcal O(u^2),\quad u\rightarrow 0. 
\end{equation}

\begin{theorem}\label{ddt}
The $Q$-operator eigenvalues $q(u)$ and $q(u+\pi)$ are related by
\begin{equation}\label{ddr}\left(\frac{\partial^2}{\partial u^2}-V(u)-\alpha\right)\frac{\phi(u)q(u)}{\theta_1(3u|3\tau)^n\,\theta_3\left(\frac{3u}2\big|\frac{3\tau}2\right)}=\beta\frac{\theta_4\left(\frac{3u}2\big|\frac{3\tau}2\right)\phi(u)q(u+\pi)}{\theta_1(3u|3\tau)^n\,\theta_3\left(\frac{3u}2\big|\frac{3\tau}2\right)^2},
 \end{equation}
where 
$$V(u)=n(n+1)\wp\left(u\big|\frac\pi3,\pi\tau\right)+2\wp\left(u+\pi+\frac{\pi\tau}2\Big|\frac{2\pi}3,\pi\tau\right) $$
and the parameters $\alpha$ and $\beta$ are independent of $u$.
\end{theorem}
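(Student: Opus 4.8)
The plan is to reconstruct $q(u+\pi)$ from the left‑hand side of \eqref{ddr} and then to identify it by a uniqueness argument. Set
\[
y(u)=\frac{\phi(u)q(u)}{\theta_1(3u|3\tau)^n\,\theta_3\bigl(\tfrac{3u}2\big|\tfrac{3\tau}2\bigr)},\qquad
\widetilde q(u)=\frac{\theta_3\bigl(\tfrac{3u}2\big|\tfrac{3\tau}2\bigr)}{\theta_4\bigl(\tfrac{3u}2\big|\tfrac{3\tau}2\bigr)\,\phi(u)}\;\theta_1(3u|3\tau)^n\,\theta_3\Bigl(\tfrac{3u}2\Big|\tfrac{3\tau}2\Bigr)\left(\frac{\partial^2}{\partial u^2}-V(u)-\alpha\right)y(u),
\]
so that \eqref{ddr} is precisely the assertion that $\widetilde q=\beta\, q(\,\cdot\,+\pi)$ for a suitable choice of the constants $\alpha,\beta$. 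Shifting $u\mapsto u+\pi$ in \eqref{qdp} and using $\phi(u+\pi)=-\phi(u)$ and $e^{-\ti L\pi}=-1$, one sees that $r(u):=q(u+\pi)$ is an entire function obeying
\begin{gather*}
\phi(u)r(u)+\phi\bigl(u+\tfrac{2\pi}3\bigr)r\bigl(u+\tfrac{2\pi}3\bigr)+\phi\bigl(u-\tfrac{2\pi}3\bigr)r\bigl(u-\tfrac{2\pi}3\bigr)=0,\\
r(u+2\pi)=r(-u)=r(u),\qquad r(u+\pi\tau)=-e^{-\ti L(u+\pi\tau/2)}r(u),
\end{gather*}
and by \cite{rsa} — or directly, since then $r(\,\cdot\,-\pi)$ solves \eqref{qdp} — the space of such $r$ is one‑dimensional. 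It therefore suffices to verify that $\widetilde q$ satisfies the same three conditions for some value of $\alpha$.

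The two functional equations hold for every $\alpha$. Since $\theta_1(3u|3\tau)$, $\theta_3(\tfrac{3u}2|\tfrac{3\tau}2)$, $\theta_4(\tfrac{3u}2|\tfrac{3\tau}2)$ and $V$ all have period $\tfrac{2\pi}3$ in $u$, whereas $\phi(u)=\theta_1(u|\tau)^{2n+1}$ does not, the three‑term relation \eqref{qdp} first yields $y(u)+y(u+\tfrac{2\pi}3)+y(u-\tfrac{2\pi}3)=0$; applying $\partial_u^2-V-\alpha$ and restoring the prefactor produces the three‑term relation for $\widetilde q$. The parity $\widetilde q(-u)=\widetilde q(u)$ and the quasi‑periodicities $\widetilde q(u+2\pi)=\widetilde q(u)$ and $\widetilde q(u+\pi\tau)=-e^{-\ti L(u+\pi\tau/2)}\widetilde q(u)$ then follow from a direct computation using the oddness of $\theta_1$ and of $\phi$, the evenness of $\theta_3$, $\theta_4$ and $V$, and the transformation rules of $q$ and of the theta functions under $u\mapsto u+\pi\tau$: here one uses that $3\pi\tau/2=\pi\cdot(3\tau/2)$, so that $\theta_3$ and $\theta_4$ pick up only exponential factors, together with the multiplication formula for $\theta_1(3u|3\tau)$ from Appendix \ref{tfs}; all signs and exponentials then match exactly.

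The remaining point — and the main obstacle — is to exhibit an $\alpha$ for which $\widetilde q$ is entire. A priori $\widetilde q$ is meromorphic with poles confined to the zeros of $\theta_4(\tfrac{3u}2|\tfrac{3\tau}2)$, the zeros of $\phi$, the poles of $V$, and the zeros and poles of $y$ — a fixed finite set modulo $\Lambda=2\pi\mathbb Z+\pi\tau\mathbb Z$, of bounded order. As $\alpha$ varies, $\widetilde q$ moves along an affine line in the finite‑dimensional space of meromorphic solutions of the two functional equations, the $\alpha$‑dependent term being $-\alpha\,\theta_3(\tfrac{3u}2|\tfrac{3\tau}2)q(u)/\theta_4(\tfrac{3u}2|\tfrac{3\tau}2)$, and one must show that this line meets the one‑dimensional subspace $\mathbb C\,q(\,\cdot\,+\pi)$ of entire solutions. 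Carrying this out requires the precise zero divisor of the $Q$‑eigenvalue $q$ from \cite{rsa} — in particular the orders of vanishing of $q$ at $u\equiv 0$ and $u\equiv\pm\tfrac\pi3\pmod{\pi\mathbb Z+\pi\tau\mathbb Z}$ and at the zeros of $\theta_3(\tfrac{3u}2|\tfrac{3\tau}2)$ and $\theta_4(\tfrac{3u}2|\tfrac{3\tau}2)$ — combined with bookkeeping of the theta identities of Appendix \ref{tfs}: one checks that the double poles at the poles of $V$ necessarily cancel, because $y$ has a zero of order $n+1$, respectively a pole of order $n$, at the two types of pole of the $\wp(u|\tfrac\pi3,\pi\tau)$‑term, that the residues at the remaining poles are constrained by the three‑term relation and the evenness, and that a single value of $\alpha$ clears all of them. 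The functional‑equation part is routine once the auxiliary function $\widetilde q$ has been set up correctly; the pole analysis is where the actual work lies.
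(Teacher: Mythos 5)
Your overall strategy---show that the candidate for $\beta\,q(u+\pi)$ satisfies the functional equations cutting out a low-dimensional solution space, then check analyticity---is the same as the paper's, and your verification that the three-term relation and the quasi-periodicities hold for every $\alpha$ is correct. But the decisive step, the analyticity of $\widetilde q$, is only described, not proved, and this is where essentially all of the content of the theorem lies. Worse, your formulation makes that step strictly harder than necessary. Because you divide by $\theta_4\bigl(\tfrac{3u}{2}\big|\tfrac{3\tau}{2}\bigr)$ and aim at the one-dimensional space $\mathbb C\,q(\cdot+\pi)$, you must show that a \emph{single} value of $\alpha$ simultaneously cancels the would-be simple poles of $\widetilde q_\alpha$ at the two inequivalent zeros of $\theta_4\bigl(\tfrac{3u}{2}\big|\tfrac{3\tau}{2}\bigr)$ modulo $2\pi\mathbb Z+\pi\tau\mathbb Z$ and reflection, namely $u\equiv\tfrac{\pi\tau}{2}$ and $u\equiv\tfrac{2\pi}{3}+\tfrac{\pi\tau}{2}$. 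The first condition determines $\alpha$ uniquely (since $q(\tfrac{\pi\tau}{2})\neq 0$); that the same $\alpha$ also clears the second point is a genuinely nontrivial claim that you assert (``a single value of $\alpha$ clears all of them'') but do not establish. It can be extracted from the three-term relation evaluated at $u=\tfrac{\pi\tau}{2}$ together with the reflection and quasi-periodicity relating the values at $\pm\tfrac{2\pi}{3}+\tfrac{\pi\tau}{2}$, but one must then check that the resulting root-of-unity factor does not render the constraint vacuous; none of this is in your write-up. Likewise, the cancellation of the singularities coming from the poles of $V$ and the zeros of $\phi$ and $\theta_1(3u|3\tau)$ is asserted via orders of vanishing, but the local expansions are not performed.

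The paper sidesteps the tuning of $\alpha$ entirely: it sets $f_1(u)=\theta_3\bigl(\tfrac{3u}{2}\big|\tfrac{3\tau}{2}\bigr)q(u)$, $f_2(u)=f_1(u+\pi)$ and $\Omega=g^{-1}(\partial_u^2-V)g$ with $g=\phi/\bigl(\theta_1(3u|3\tau)^n\theta_3\bigl(\tfrac{3u}{2}\big|\tfrac{3\tau}{2}\bigr)^2\bigr)$, so that \eqref{ddr} reads $\Omega f_1=\alpha f_1+\beta f_2$. It then proves (Lemma \ref{tdl}) that the space $\Theta$ of \emph{entire} solutions of the relevant functional equations is two-dimensional with basis $f_1,f_2$, and shows directly that $\Omega f_1$ is entire by Laurent expansion at the six zeros and poles of $g$ listed in \eqref{shp}, using that the singular parts $n(n+1)(u-u_0)^{-2}$ and $2(u-u_0)^{-2}$ of $V$ annihilate the leading singular coefficients of $gf_1$. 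With a two-dimensional target space, $\alpha$ and $\beta$ are simply the coordinates of $\Omega f_1$ in the basis $\{f_1,f_2\}$ and never need to be adjusted. To complete your version you should either carry out the simultaneous pole cancellation at both families of zeros of $\theta_4\bigl(\tfrac{3u}{2}\big|\tfrac{3\tau}{2}\bigr)$ in detail, or switch to the two-dimensional formulation.
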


Before proving Theorem \ref{ddt}, we write it in a form closer to how the classical \eqref{cpvi} and quantum \eqref{qpvi} Painlev\'e VI equations are given in the introduction. We must then scale the periods $(2\pi/3,\pi\tau)$ to $(1,\tau)$, that is, we should write $u=2\pi x/3$ and replace $\tau$ by $2\tau/3$. 
Writing $q(u)=q(u,\tau)$ we find after using some elementary identities for the $\wp$-function \cite[\S 20]{ww} that
$$\psi=\psi(x,\tau)=\frac{\theta_1\left(\frac{2\pi x}{3}\big|\frac{2\tau}{3}\right)^{2n+1}q\left(\frac{2\pi x}{3},\frac{2\tau}{3}\right)}{\theta_1(2\pi x|2\tau)^n\theta_3(\pi x|\tau)} $$
satisfies an equation of the
form
$$\frac 12\,\psi_{xx}-V\psi=(A+B\wp(x+\omega_3|1,\tau))\psi\left(x+\frac 32,\tau\right), $$
where $A$ and $B$ are independent of $x$ and $V$ is the potential \eqref{dtv} with $\beta_0=\beta_1=n(n+1)/2$, $\beta_2=0$ and $\beta_3=1$. Note that 
$\psi$ is $3$-periodic  in $x$ whereas the potential is $1$-periodic.
This can be compared with
the non-stationary Lam\'e equation  \cite{bm}, which can be written
\begin{equation}\label{nsl}\frac 12\,\tilde \psi_{xx}-\tilde V\tilde \psi=2\pi\ti\tilde\psi_\tau, \end{equation}
with
$$\tilde \psi(x,\tau)=\frac{e^{c\tau}\theta_1(\frac{2\pi x}{3}\big|\frac{2\tau}{3})^{2n+1}q\left(\frac{2\pi x}{3},\frac{2\tau}{3}\right)}{\theta_1(2\pi x|2\tau)^n}, $$
$c$ a normalizing constant
and $\tilde V$ the potential \eqref{dtv} with  parameters $\beta_0=\beta_1=n(n+1)/2$, $\beta_2=\beta_3=0$.
Originally, we discovered \eqref{ddr}  as a consistency condition between \eqref{nsl} and 
another differential equation from \cite{r4}, for the function $\Phi$ defined in \eqref{psi}.

To prepare the proof of Theorem \ref{ddt}, let 
$$f_1(u)=\theta_3\left(\frac{3u}2\Big|\frac{3\tau}2\right)q(u),\qquad f_2(u)=f_1(u+\pi)$$
and write $\Omega=g^{-1} (\partial_u^2-V(u))g$,  where
$$g(u)=\frac{\theta_1(u|\tau)^{2n+1}}{\theta_1(3u|3\tau)^n\,\theta_3\left(\frac{3u}2\big|\frac{3\tau}2\right)^2}.  $$
With these definitions at hand, the differential relation \eqref{ddr} takes the form
$\Omega(f_1)=\alpha f_1+\beta f_2$. 
To prove it, we first give an analytic description of the space $\Theta$ spanned by $f_1$ and $f_2$, and then prove that $\Omega(f_1)\in\Theta$. We remark that $\Omega(f_2)\notin\Theta$.

\begin{lemma}\label{tdl}
Let $\Theta$ be the space of all entire functions $f$ that satisfy 
\begin{subequations}
  \label{prr}
  \begin{gather}\label{prra} \phi(u)f(u)+\phi\left(u+\frac{2\pi}3\right)f\left(u+\frac{2\pi }3\right)+\phi\left(u-\frac{2\pi }3\right)f\left(u-\frac{2\pi}3\right)=0,\\
  \label{prrb} f(u+2\pi)=f(-u)=f(u),\qquad f(u+\pi\tau)=e^{-\ti (n+2)(2u+\pi\tau)}f(u). \end{gather}
\end{subequations}
Then, $\Theta$ is spanned by $f_1$ and $f_2$.
\end{lemma}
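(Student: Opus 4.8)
The plan is to identify $\Theta$ as a two-dimensional space by a dimension count, check that $f_1$ and $f_2$ lie in it, and verify their linear independence.

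First I would establish that $f_1,f_2\in\Theta$. The quasi-periodicity in $u\mapsto u+2\pi$ and the evenness of $f_1$ follow from the corresponding properties of $\theta_3(3u/2|3\tau/2)$ and $q(u)$ in \eqref{qdp}, using that $\theta_3$ is even and $2\pi$-periodic after accounting for the scaling; the factor $e^{-\ti(n+2)(2u+\pi\tau)}$ arises by combining the quasi-period of $\theta_3(3u/2|3\tau/2)$ under $u\mapsto u+\pi\tau$ (which contributes a factor $e^{-\ti\cdot 3(u+\pi\tau/2)\cdot\text{something}}$; I would compute the exact exponent from the standard transformation formulas in Appendix \ref{tfs}) with the factor $e^{-\ti L(u+\pi\tau/2)}=e^{-\ti(2n+1)(u+\pi\tau/2)}$ from \eqref{qdp}. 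The three-term relation \eqref{prra} for $f_1$ is exactly the three-term relation in \eqref{qdp} for $q$, multiplied through: one needs $\theta_3(3(u+2\pi/3)/2|3\tau/2)=\theta_3(3u/2+\pi|3\tau/2)=-\theta_3(3u/2|3\tau/2)$ or a similar sign/shift identity so that the common theta factor can be pulled out; I should check the precise sign but this is the standard mechanism by which $q$'s functional equation lifts. For $f_2(u)=f_1(u+\pi)$, all properties in \eqref{prrb} are invariant under $u\mapsto u+\pi$ except possibly a sign in the quasi-period, which works out because $(n+2)$ and $L$ have been arranged consistently; and \eqref{prra} is invariant under the shift $u\mapsto u+\pi$ since $\phi(u+\pi)=-\phi(u)$ and the three terms just pick up a common sign.

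Next I would count the dimension of $\Theta$. The conditions \eqref{prrb} alone describe a space of theta functions of a definite order: functions even and $2\pi$-periodic with the stated $\pi\tau$-quasi-period of "index" $2(n+2)$ form a space whose dimension is computed by the standard count (number of zeros in a fundamental domain, adjusted for the even symmetry), giving dimension $n+3$. The three-term linear relation \eqref{prra} is a system of linear constraints; evaluating it (and its symmetry-reduced independent consequences) cuts the dimension down. This is precisely the type of count carried out in \cite{rsa}, and I would invoke \cite[Thm.\ 2.4]{rsa} or the discussion in \cite[\S 5.4]{rsa} the same way the proof of Lemma \ref{qtdl} does: the function $q$ itself satisfies the analogous system with index $2n+1$ in place of $2(n+2)$ and spans a one-dimensional space, and the shift by the theta prefactor changes the index by the right amount so that the present space becomes two-dimensional rather than one-dimensional. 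I expect the cleanest route is to reduce \eqref{prra}+\eqref{prrb} directly to the setup of \cite{rsa} by the substitution $f(u)=\theta_3(3u/2|3\tau/2)h(u)$ and showing $h$ ranges over a space already analyzed there, possibly of dimension two because $\theta_3(3u/2|3\tau/2)$ has a zero that relaxes one interpolation condition.

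Finally, linear independence of $f_1$ and $f_2$: if $f_2=cf_1$ then $q(u+\pi)=c'q(u)$ for a constant $c'$, which combined with $q(u+2\pi)=q(u)$ forces $c'^2=1$; then $q$ would be either $\pi$-periodic or $\pi$-antiperiodic, and comparing with the $\pi\tau$-quasi-period and the three-term relation \eqref{qdp} one derives a contradiction with $q\not\equiv 0$ (for instance, the zero of $q$ at a half-period, or the order of $q$ as a theta function, is incompatible with $\pi$-periodicity; alternatively \eqref{qsm} shows $q(u+\pi|\tau)=q(u|\tau+2)$, and $q(u|\tau+2)=\pm q(u|\tau)$ would contradict known properties of $q$). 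I expect the main obstacle to be the dimension count: getting the exact dimension of the theta-function space cut out by \eqref{prra}, i.e.\ verifying carefully that the three-term relation imposes exactly the right number of independent constraints to leave dimension two, and matching the normalization/index bookkeeping with the results of \cite{rsa}. The properties $f_1,f_2\in\Theta$ and their independence are routine theta-function manipulations by comparison.
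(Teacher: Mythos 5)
There is a genuine gap, and it sits exactly where you predicted: the dimension bound. Your proposed reduction $f(u)=\theta_3\left(\frac{3u}2\big|\frac{3\tau}2\right)h(u)$ cannot work for all of $\Theta$, because it presupposes that every $f\in\Theta$ vanishes at the zeros of $\theta_3\left(\frac{3u}2\big|\frac{3\tau}2\right)$ (otherwise $h$ is not entire). That is false: $f_2(u)=f_1(u+\pi)=\theta_4\left(\frac{3u}2\big|\frac{3\tau}2\right)q(u+\pi)$ does not vanish at $u=\pi+\frac{\pi\tau}2$, since $q(\pi\tau/2)\neq 0$ by \cite[Thm.\ 2.1]{r4}. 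So your substitution only parametrizes a one-dimensional subspace, and the hedge about a zero of $\theta_3$ ``relaxing one interpolation condition'' does not repair this. (A correct reduction to \cite{rsa} does exist and is recorded in the paper right after the lemma: $f\in\Theta$ if and only if $(\phi f)(u)/\theta_4(3u|3\tau)$ lies in the space $\Theta_n^{(n,-1,-1,n)}$ of \cite{rsa}, whose dimension is two by \cite[Thm.\ 2.4]{rsa} --- but that is a different prefactor from the one you propose.) Also, your independence argument is incorrect as stated: $f_2=cf_1$ does not give $q(u+\pi)=c'q(u)$ with $c'$ constant, because the theta prefactors of $f_1$ and $f_2$ are $\theta_3\left(\frac{3u}2\big|\frac{3\tau}2\right)$ and $\theta_4\left(\frac{3u}2\big|\frac{3\tau}2\right)$ respectively, so the ratio $q(u+\pi)/q(u)$ would have to equal a non-constant function; the conclusion is nonetheless immediate by evaluating at $u=\pi+\frac{\pi\tau}2$, where $f_1$ vanishes and $f_2$ does not.

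The idea you are missing is the paper's short rank--nullity argument, which avoids any generic count of theta-function dimensions and independence of linear constraints. Consider the evaluation functional $(\Lambda f)(u)=f\left(\pi+\frac{\pi\tau}2\right)$ on $\Theta$. Substituting $u=\pi+\frac{\pi\tau}2$ into \eqref{prra} and using the symmetries \eqref{prrb} collapses the three-term relation to the two-term relation $(\phi f)\left(\pi+\frac{\pi\tau}2\right)+2(\phi f)\left(\pi+\frac{\pi\tau}2+\frac{2\pi}3\right)=0$. Hence any $f\in\Ker(\Lambda)$ vanishes (after applying the symmetries again) at \emph{all} zeros of $\theta_3\left(\frac{3u}2\big|\frac{3\tau}2\right)$, so $f/\theta_3\left(\frac{3u}2\big|\frac{3\tau}2\right)$ is entire and satisfies \eqref{qdp}; by the already-established one-dimensionality of that solution space, $\Ker(\Lambda)=\mathbb C f_1$. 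Since $f_2\notin\Ker(\Lambda)$, one gets $\dim\Theta=\dim\Ker(\Lambda)+\dim\operatorname{Im}(\Lambda)=2$, and the lemma follows. Without this observation (or the correct identification with $\Theta_n^{(n,-1,-1,n)}$), your dimension count does not go through.
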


The fact that $\Theta$ is two-dimensional is also a special case of \cite[Thm.\ 2.4]{rsa}. To make the connection, note that$f\in \Theta$ if and only if the function $(\phi f)(u)/\theta_4(3u|3\tau)$ is in the space denoted $\Theta_n^{(n,-1,-1,n)}$ in \cite{rsa}, where the variables $(z,\tau)$ in \cite{rsa} correspond to our $(u/2\pi,\tau/2)$.

\begin{proof}[Proof of \emph{Lemma \ref{tdl}}]
It is straightforward to check from \eqref{qdp} that $f_1,\,f_2\in \Theta$. Consider the functional $(\Lambda f)(u)=f(\pi+\pi\tau/2)$ defined on $\Theta$. We observe that substituting $u=\pi+\pi\tau/2$ in \eqref{prra} and using \eqref{prrb} gives $$(\phi f) \left(\pi+\frac{\pi\tau}2\right)+2(\phi f) \left(\pi+\frac{\pi\tau}2+\frac{2\pi }3\right)=0.$$ This equality shows that if $f\in \Ker(\Lambda)$, then $f$ vanishes at all zeroes of $\theta_3(3u/2|3\tau/2)$. Hence, $f/\theta_3(3u/2|3\tau/2)$ is an entire function satisfying \eqref{qdp}, so $f$ is proportional to $f_1$.  
   
On the other hand, it follows from \cite[Thm.\ 2.1]{r4} that  $q(\pi\tau/2)\neq 0$, so $f_2\notin\Ker(\Lambda)$. Since both the kernel and image of $\Lambda$ are one-dimensional, $\dim(\Theta)=\dim \, \Ker(\Lambda) + \dim \, \operatorname{Im}(\Lambda)=2$.
\end{proof}

\begin{proof}[Proof of \emph{Theorem \ref{ddt}}]
It remains to verify that 
$h=\Omega f_1\in\Theta$. 
 It is straight-forward to check that $f$ satisfies \eqref{prr} if and only if
 \begin{gather}\label{gfca} (gf)(u)+(gf)\left(u+\frac{2\pi}3\right)+(gf)\left(u-\frac{2\pi }3\right)=0,\\
 \nonumber (gf)(u+2\pi)=(gf)(u),\qquad (gf)(u+\pi\tau)=(gf)(-u)=(-1)^{n+1}(gf)(u). \end{gather}
 Since the operator
$\partial_u^2-V(u)$ preserves these conditions, we only need to check that $h$ is an entire function.

The only possible poles of $h$ are at the zeroes and poles of $g$ (which include the poles of the potential $V$). Modulo translations by $2\pi\mathbb Z+\pi\tau\mathbb Z$ and the reflection $u\mapsto -u$, there are six such points, namely,
\begin{equation}\label{shp}0,\quad \frac\pi 3,\quad\frac{2\pi}3,\quad\pi, \quad \frac{\pi}{3}+\frac{\pi\tau}2,\quad \pi+\frac{\pi\tau}2. \end{equation}

It is straightforward to check that $h$ is regular at the points \eqref{shp} and we only provide details for the point $u_0=2\pi/3$. It is a zero of  $\theta_1(3u|3\tau)$, so $(gf_1)(u_0+u)=\mathcal O(u^{-n})$. Moreover, it follows from \eqref{gfca} that
$$(gf_1)(u_0+u)+(-1)^{n+1}(gf_1)(u_0-u)=-(gf_1)(u). $$
Since $u=0$ is a zero of both $\theta_1(3u|3\tau)$ and $\theta_1(u|\tau)$, $(gf_1)(u)=\mathcal O(u^{n+1})$.
Hence, 
$$(gf_1)(u_0+u) =Au^{-n}+\mathcal O(u^{2-n})+Bu^{n+1}+\mathcal O(u^{n+3}). $$
If $n\geq 1$, this simplifies to
$$(gf_1)(u_0+u) =Au^{-n}+\mathcal O(u^{2-n}). $$
Acting with 
\begin{equation}\label{pta} \frac{\partial^2}{\partial u^2}-V(u_0+u)=\frac{\partial^2}{\partial u^2}-\frac{n(n+1)}{u^2}+\mathcal O(1),\end{equation}
the coefficient of $u^{-n-2}$ cancels, so the result is  $\mathcal O(u^{-n})$. If $n=0$, we have instead $$(gf_1)(u_0+u) =A+Bu+\mathcal O(u^{2})=\mathcal O(1). $$
In this case \eqref{pta} simplifies to $\partial_u^2+\mathcal O(1)$, so the result is again $\mathcal O(1)$.  In either case, dividing by $g$ leads to a function regular at $u_0$. 
\end{proof}

Let $\Phi$ denote the alternant
\begin{align}\nonumber\Phi(u,v)&=f_1(u)f_2(v)-f_2(u)f_1(v)\\
\nonumber&=\theta_3\left(\frac{3u}2\Big|\frac{3\tau}2\right)\theta_4\left(\frac{3v}2\Big|\frac{3\tau}2\right)q(u)q(v+\pi)\\
\label{psi}&\quad -
\theta_4\left(\frac{3u}2\Big|\frac{3\tau}2\right)\theta_3\left(\frac{3v}2\Big|\frac{3\tau}2\right)q(u+\pi)q(v). \end{align}
We note some properties of this function. 
Indicating also the $\tau$-dependence, it follows from \eqref{qsm} that
\begin{equation}\label{psm}\Phi(u,v|\tau+2)=-\Phi(u,v|\tau). \end{equation}
The Wronskian relation \eqref{qw} is equivalent to
 $$\Phi\left(u,u+\frac{2\pi}3\right)=\theta_3\left(\frac{3u}2\Big|\frac{3\tau}2\right)\theta_4\left(\frac{3u}2\Big|\frac{3\tau}2\right) \phi\left(u+\frac\pi3\right)W.$$
 Inserting $u=\pi/3$ and using the theta function identity \eqref{jtv}
  gives
 \begin{equation}\label{waa}W=-\frac{\theta_2\left(0\big|\frac{3\tau}2\right)\Phi\left(\pi,\frac{\pi}3\right)}{\theta_1'\left(0|\frac{3\tau}2\right) \phi\left(\frac\pi3\right)}. \end{equation}

The following consequence of Theorem \ref{ddt} is a key result for 
our purposes.

\begin{corollary}\label{qfc}
The $Q$-operator eigenvalue $q$ satisfies
\begin{multline}\label{qqeps}
(2n+3)q''(0)q\left(\frac\pi 3\right)+(2n-1)q(0)\frac{(q\phi)''\left(\frac\pi 3\right)}{\phi\left(\frac\pi 3\right)}
+2(2n+1)Eq(0)q\left(\frac\pi 3\right)
\\
=3\ti\frac{q'\left(\pi+\frac{\pi\tau}2\right)}{q\left(\frac{\pi\tau}2\right)}\,\Phi\left(0,\frac\pi3\right),
\end{multline}
where $E$ is defined by the expansion
\begin{equation}\label{ed}\frac{\theta_1(u|\tau)^{2n+3}}{\theta_1(3u|3\tau)^{2n}\theta_4(3u|3\tau)}
=E_0u^{3}+E_0Eu^5+\mathcal O(u^7). \end{equation}
\end{corollary}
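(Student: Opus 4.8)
\textbf{Proof plan for Corollary \ref{qfc}.}
The strategy is to extract the corollary from Theorem \ref{ddt} by evaluating the latter, together with a few of its $u$-derivatives, at the two special points $u=0$ and $u=\pi/3$, and then eliminating the unknown constants $\alpha$ and $\beta$. First I would rewrite \eqref{ddr} in the cleaner operator form $\Omega(f_1)=\alpha f_1+\beta f_2$ introduced just before Lemma \ref{tdl}, with $f_1(u)=\theta_3(3u/2|3\tau/2)q(u)$, $f_2(u)=f_1(u+\pi)$ and $\Omega=g^{-1}(\partial_u^2-V)g$. The point $u=0$ is a zero of $\theta_1(u|\tau)^{2n+1}$, hence a high-order zero of $g$; evaluating the relation there kills the $f_1$-term on the right and, after clearing $g$, produces a linear relation among $q''(0)$, $q(0)$, the Taylor data of $\theta_1(u|\tau)^{2n+1}$, $\theta_1(3u|3\tau)^n$, $\theta_3(3u/2|3\tau/2)$, $\theta_4(3u/2|3\tau/2)$ at $u=0$, and the value $f_2(0)$, which is $\theta_4(0|3\tau/2)q(\pi)$; by \eqref{qsm}, $q(\pi|\tau)=q(0|\tau+2)$, but it is cleaner to keep $q(\pi)$ and convert to $q(\pi\tau/2)$-type data only at the end. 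Similarly, the point $u=\pi/3$ is a zero of $\theta_4(3u/2|3\tau/2)$, so $f_1(\pi/3)=0$; evaluating $\Omega(f_1)=\alpha f_1+\beta f_2$ there removes the $\alpha f_1$ term.

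Concretely, I expect to need the relation \eqref{ddr} and its first and second $u$-derivatives at $u=\pi/3$ (because the right-hand side of \eqref{ddr} carries the factor $\theta_4(3u/2|3\tau/2)$, which vanishes to first order at $u=\pi/3$, so $q(\pi+\cdot)$ enters only through a derivative) and the relation \eqref{ddr} at $u=0$ to a suitable order. The constant $\beta$ can be eliminated by taking a ratio of the $u=0$ relation and an appropriate derivative of the $u=\pi/3$ relation, and $\alpha$ is eliminated by the vanishing of $f_1$ there. What remains is a single identity in which the theta-function Taylor coefficients at $0$ and at $\pi/3$ have been organized into the single constant $E$ via the normalization \eqref{ed}; the identity \eqref{jtv} and the standard addition/duplication formulas for Jacobi theta functions (collected in Appendix \ref{tfs}) are used to simplify products of theta values at $0$, $\pi/3$ and $\pi\tau/2$. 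The factor $q'(\pi+\pi\tau/2)/q(\pi\tau/2)$ on the right of \eqref{qqeps} comes from expressing $f_2$ and its derivative at $u=\pi/3$ in terms of $q(\pi+\pi\tau/2)=q(\pi\tau/2+\pi)$ and using the quasi-periodicity $q(u+\pi\tau)=e^{-\ti L(u+\pi\tau/2)}q(u)$ from \eqref{qdp} to relate values at $u=\pi+\pi\tau/2$ to values at $u=\pi$; the alternant $\Phi(0,\pi/3)$ assembles the $q(0)$, $q(\pi/3)$, $q(\pi)$, $q(4\pi/3)$ data that survive.

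The main obstacle is purely computational bookkeeping: one must carry the Taylor expansions of four different theta functions (with three different quasi-periods, $\tau$, $3\tau$ and $3\tau/2$) at two different base points to second order, keep track of the order of vanishing of $g$ at each point, and verify that all the spurious terms cancel so that the result collapses to the compact form \eqref{qqeps} with the single modular constant $E$. In particular one must check that the coefficients of $q'(0)$ and $q'(\pi/3)$ that a priori appear (from the first-derivative terms in $\Omega$) either vanish — using $q'(0)=0$, which holds by the evenness $q(-u)=q(u)$ in \eqref{qdp} — or recombine into the stated $(q\phi)''(\pi/3)/\phi(\pi/3)$ combination, which is exactly the $\Omega$-conjugated second derivative evaluated against $g$. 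I would also double-check the normalization constant $E_0$ drops out of the final identity (it must, since \eqref{qqeps} is homogeneous of degree zero in that normalization). Once the cancellations are confirmed, rewriting in terms of $\Phi$ and the half-period values is a short final step using \eqref{psi} and \eqref{waa}.
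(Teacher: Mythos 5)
Your overall strategy --- specializing the differential--difference relation \eqref{ddr} at distinguished points and eliminating the constants $\alpha$ and $\beta$ --- is the right one and is what the paper does. But two of your specific claims are false, and they break the elimination scheme as you describe it. First, $f_1$ does \emph{not} vanish at $u=0$ or at $u=\pi/3$: one has $f_1(0)=\theta_3(0|\tfrac{3\tau}{2})\,q(0)$ and $f_1(\pi/3)=\theta_3(\tfrac{\pi}{2}|\tfrac{3\tau}{2})\,q(\pi/3)$, and neither theta value is zero (the zeros of $\theta_3(\cdot|\tau')$ sit at $\tfrac{\pi}{2}+\tfrac{\pi\tau'}{2}$ modulo the lattice), while $q(0)$ and $q(\pi/3)$ are generically nonzero since they appear in \eqref{qqeps}. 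So the $\alpha f_1$ term survives at both points; $\alpha$ has to be eliminated by forming $q(\pi/3)\times(\text{relation at }0)-q(0)\times(\text{relation at }\pi/3)$, which is exactly how the alternant $\Phi(0,\pi/3)$ gets assembled. What \emph{is} true at these points is that $g$ has a zero of order $n+1$ at $u=0$ and a pole of order $n$ at $u=\pi/3$ (from $\theta_1(3u|3\tau)^n$), so the usable information sits in a subleading Taylor coefficient (the coefficient of $u^{n+1}$, resp.\ $u^{1-n}$ after shifting), not in a pointwise evaluation.

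Second, and more seriously, you never determine $\beta$, and your explanation of where the factor $q'(\pi+\tfrac{\pi\tau}{2})/q(\tfrac{\pi\tau}{2})$ comes from is wrong: $\pi+\tfrac{\pi\tau}{2}$ and $\pi$ differ by \emph{half} of the quasi-period $\pi\tau$, so the quasi-periodicity of $q$ cannot relate them. The missing ingredient is a third specialization of \eqref{ddr}, at $u_0=\pi+\tfrac{\pi\tau}{2}$. This point is a simultaneous zero of $q$ and of $\theta_3(\tfrac{3u}{2}|\tfrac{3\tau}{2})$ and a double pole of the potential $V$ with coefficient $2$; matching the leading singular behaviour there yields $\beta=-3\ti\,\theta_3(0|\tfrac{3\tau}{2})\theta_4(0|\tfrac{3\tau}{2})\,q'(\pi+\tfrac{\pi\tau}{2})/q(\tfrac{\pi\tau}{2})$, which is precisely the source of that factor on the right-hand side of \eqref{qqeps}. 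Without this step your two relations at $0$ and $\pi/3$ still contain the two unknowns $\alpha$ and $\beta$ and, even after eliminating both, could never produce a right-hand side involving data at the half-period. The remaining bookkeeping you describe --- the constant $E$ absorbing the various Taylor coefficients and $\wp$-values, checked against the definition \eqref{ed} --- is indeed how the paper finishes, and your observation that $q'(0)=0$ by evenness is correct but peripheral.
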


 \begin{proof}
We will consider \eqref{ddr} near $u=u_0$, where
 $u_0= \pi+\pi\tau/2$, $0$ and $\pi/3$. In the first case, $u_0$ is a zero of $q(u)$ and of $\theta_3(3u/2|3\tau/2)$. The only contribution to the  leading term on the left-hand side comes from the singular part of the potential, which is $2(u-u_0)^{-2}$. 
Thus, 
$$
\beta=\lim_{u\rightarrow u_0}
-\frac 2{(u-u_0)^2}\frac{\theta_3\left(\frac{3u}2\big|\frac{3\tau}2\right)q(u)}{\theta_4\left(\frac{3u}2\big|\frac{3\tau}2\right)q(u+\pi)}.
$$
We have
$$\frac{\theta_3\left(\frac{3u}2\big|\frac{3\tau}2\right)}{\theta_4\left(\frac{3u}2\big|\frac{3\tau}2\right)}=\ti \frac{\theta_1\big(\frac{3(u-u_0)}2\big|\frac{3\tau}2\big)}{\theta_2\big(\frac{3(u-u_0)}2\big|\frac{3\tau}2\big)}=\frac {3\ti}2\frac{\theta_1'\left(0\big|\frac{3\tau}2\right)}{\theta_2\left(0\big|\frac{3\tau}2\right)}(u-u_0)+\mathcal O((u-u_0)^3),$$
$$\frac{q(u)}{q(u+\pi)}=\frac{q'(\pi+\frac{\pi\tau}2)}{q(\frac{\pi\tau}2)}(u-u_0)+ \mathcal O((u-u_0)^3).$$
Using also \eqref{jtv}
gives 
\begin{equation}\label{beq}\beta=-3\ti
 \,\theta_3\left(0\Big|\frac{3\tau}2\right)\theta_4\left(0\Big|\frac{3\tau}2\right)
 \frac{q'\left(\pi+\frac{\pi\tau}2\right)}{q\left(\frac{\pi\tau}2\right)}.
  \end{equation}

Turning to the case $u_0=0$,
we define $A_0$ and $A$ by
$$\frac{\phi(u)}{\theta_1(3u|3\tau)^n\theta_3\left(\frac{3u}2\big|\frac{3\tau}2\right)}=A_0u^{n+1}+A_0Au^{n+3}+\mathcal O(u^{n+5}),$$
so that
 $$\frac{\phi(u)q(u)}{\theta_1(3u|3\tau)^n\theta_3\left(\frac{3u}2\big|\frac{3\tau}2\right)}=A_0q(0)u^{n+1}+A_0\left(\frac{q''(0)}2+Aq(0)\right) u^{n+3}+\mathcal O(u^{n+5}).$$
 The potential has the form
 $$V(u)=\frac{n(n+1)}{u^2}+2C +\mathcal O(u^2), $$
 where
 $$C= \wp\left(\frac\pi3+\frac{\pi\tau}2\Big|\frac{2\pi}3,\pi\tau\right).$$
 Inserting these expansions 
 into \eqref{ddr} and picking out the coefficient of $u^{n+1}$ gives
 \begin{subequations}\label{ab}
\begin{equation}  q(0)\alpha+\frac{\theta_4\left(0\big|\frac{3\tau}2\right)}{\theta_3\left(0\big|\frac{3\tau}2\right)}\,q(\pi)\beta=
 (2n+3)\big(q''(0)+2Aq(0)\big) -2Cq(0).\end{equation}
 In the same way, the expansion of \eqref{ddr} near $u_0=\pi/3$ gives
\begin{multline}  q\left(\frac\pi 3\right)\alpha+\frac{\theta_3\left(0\big|\frac{3\tau}2\right)}{\theta_4\left(0\big|\frac{3\tau}2\right)}\,q\left(\frac{2\pi }3\right)\beta\\
=
 -(2n-1) \left(\frac{(q\phi)''\left(\frac\pi3\right)}{\phi\left( \frac\pi3\right)}+2Bq\left(\frac\pi 3\right)\right) -2 Dq\left(\frac\pi 3\right),\end{multline}
 \end{subequations}
 where 
$$\frac{1}{\theta_1(3u|3\tau)^n\theta_4\left(\frac{3u}2\big|\frac{3\tau}2\right)} =B_0u^{-n}+B_0B u^{2-n}+\mathcal O(u^{4-n}) $$
and
$$D= \wp\left(\frac{\pi\tau}2\Big|\frac{2\pi}3,\pi\tau\right).$$
 
We now
eliminate $\alpha$ from the pair of equations \eqref{ab} and insert the expression \eqref{beq} for $\beta$. This leads to \eqref{qqeps}, with
\begin{equation}\label{abcde}(2n+1)E=(2n+3)A+(2n-1)B-C+D. \end{equation}
It remains to show that this agrees with \eqref{ed}. 

It follows from the elementary theory of elliptic functions that
$$\wp\left(u\big|\frac{2\pi}3,\pi\tau\right)=P\frac{\theta_3\left(\frac{3u}2\big|\frac{3\tau}2\right)^2}{\theta_1\left(\frac{3u}2\big|\frac{3\tau}2\right)^2}+Q, $$
with $P$ and $Q$ independent of $u$. Substituting $u=\pi/3+\pi\tau/2$, the first term on the right vanishes and we see that $Q=C$. Using
\eqref{wpc} then gives
$$ \frac{\theta_1\left(\frac{3u}2\big|\frac{3\tau}2\right)^2}{\theta_3\left(\frac{3u}2\big|\frac{3\tau}2\right)^2}=Pu^2+PCu^4+\mathcal O(u^6).$$
Replacing $\theta_3$ by $\theta_4$ gives a similar identity for $D$. Thus, the right-hand side of \eqref{abcde} is the quotient of the subleading and the leading coefficients of
\begin{multline*}\left(\frac{\theta_1(u|\tau)^{2n+1}}{\theta_1(3u|3\tau)^n\theta_3\left(\frac{3u}2\big|\frac{3\tau}2\right)}\right)^{2n+3}\left(\frac{1}{\theta_1(3u|3\tau)^n\theta_4\left(\frac{3u}2\big|\frac{3\tau}2\right)}\right)^{2n-1}\frac{\theta_3\left(\frac{3u}2\big|\frac{3\tau}2\right)^2}{\theta_4\left(\frac{3u}2\big|\frac{3\tau}2\right)^2}\\
=\left(\frac{\theta_1(u|\tau)^{2n+3}}{\theta_1(3u|3\tau)^{2n}\theta_3\left(\frac{3u}2\big|\frac{3\tau}2\right)\theta_4\left(\frac{3u}2\big|\frac{3\tau}2\right)} \right)^{2n+1}.
\end{multline*}
Using also \eqref{t4d}, we find that this quotient equals $(2n+1)E$. 
\end{proof}

 \subsection{Splitting of the function $X$}

We now return to the expression \eqref{cx}. We split the function $X$ into two parts, 
which will eventually correspond to the infinite-lattice limit and the finite-size correction of the correlation functions. More precisely, we write $X=X_1+X_2$, where
\begin{align*}
X_1&=
\frac{2n+1}{2}\left(q''\left(\frac{\pi}{3}\right)q(0)+q\left(\frac{\pi}{3}\right)q''(0)
  \right) +2n\frac{\phi'}{\phi}\left(\frac{\pi}3\right)q'\left(\frac{\pi}{3}\right)q(0)\\
  &\quad+\left(\frac{2n-1}{2}\frac{\phi''}{\phi}\left(\frac{\pi}3\right)+(2n+1)E\right) q\left(\frac{\pi}{3}\right)q(0),\\
X_2&=
-\frac{2n+3}2q''(0)q\left(\frac\pi 3\right)-\frac{2n-1}2q(0)\frac{(q\phi)''\left(\frac\pi 3\right)}{\phi\left(\frac\pi 3\right)}
-(2n+1)Eq(0)q\left(\frac\pi 3\right)\\
&=-\frac {3\ti}2\,\frac{q'\left(\pi+\frac{\pi\tau}2\right)}{q\left(\frac{\pi\tau}2\right)}\,\Phi\left(0,\frac\pi3\right),
 \end{align*}  
 where the final expression follows from Corollary \ref{qfc}.
 
Using Corollary~\ref{wrc}, one finds that
$$\mathbf R(X_1)
=W\phi\left(2n\frac{\phi''}{\phi}-2n\left(\frac{\phi'}{\phi}\right)^2+(2n+1)E\right)\bigg|_{u=\pi/3}.$$ 
By Lemma \ref{til}, this can be simplified further to
$$\mathbf R(X_1)=-(2n+1)\phi\left(\frac\pi 3\right)\frac{\chi(\gamma^2-3)W}{(\gamma+1)^2},  $$
where $\chi$ is as in \eqref{chi} and where  as before $\gamma=(\zeta+3)/(\zeta-1)$. Using \eqref{qsm} and \eqref{psm}, it is straightforward to check that $\mathbf R(X_2)=2X_2$. Hence,
$$\mathbf R(X)=-(2n+1)\phi\left(\frac\pi 3\right)\frac{\chi(\gamma^2-3)W}{(\gamma+1)^2}-
3\ti\frac{q'\left(\pi+\frac{\pi\tau}2\right)}{q\left(\frac{\pi\tau}2\right)}\,\Phi\left(0,\frac\pi3\right)
.$$
Inserting this expression and \eqref{waa} in Corollary \ref{lxc} gives the following result.

\begin{lemma}\label{fsl}
The function $f_n$ can be expressed as
\begin{equation}\label{cxb}
f_n=\frac{(\gamma^2-3)(\gamma^2+3)}{(\gamma^2-1)^2}
-\frac{3\ti(\gamma^2+3)}{(2n+1)\chi(\gamma-1)^2}
\frac{\theta_1'\left(0\big|\frac{3\tau}2\right)}{\theta_2\left(0|\frac{3\tau}2\right) }
\frac{\Phi\left(0,\frac{\pi}3\right)}{\Phi\left(\pi,\frac{\pi}3\right)}\frac{q'\left(\pi+\frac{\pi\tau}2\right)}{q\left(\frac{\pi\tau}2\right)}
.
  \end{equation}
\end{lemma}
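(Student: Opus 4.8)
The plan is to substitute the formula for $\mathbf R(X)$ obtained above, together with \eqref{waa}, into the identity $f_n=-\tfrac{(\zeta^2+3)}{4L\chi\phi(\pi/3)W}\,\mathbf R(X)$ of Corollary~\ref{lxc}, and then simplify. To make this self-contained I would first record, in detail, the two computations that produce $\mathbf R(X)$. For $\mathbf R(X_1)$ the key observation is that the coefficients $(\phi'/\phi)(\pi/3)$, $(\phi''/\phi)(\pi/3)$ and $E$ occurring in $X_1$ are invariant under $\tau\mapsto\tau+2$: by the modular transformation for $\theta_1$, the function $\phi=\theta_1(\cdot|\tau)^L$ is modified only by a nonzero $\tau$-independent constant under $\tau\mapsto\tau+2$, which drops out of any logarithmic derivative, and the theta quotient in \eqref{ed} is modified the same way, forcing $E(\tau+2)=E(\tau)$. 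Hence $\mathbf R$ acts only through the bilinears $q''(\pi/3)q(0)+q(\pi/3)q''(0)$, $q'(\pi/3)q(0)$ and $q(\pi/3)q(0)$, which Corollary~\ref{wrc} evaluates at $u=\pi/3$ as $W\phi_{uu}$, $-W\phi_u$ and $W\phi$; summing the three contributions gives $\mathbf R(X_1)=W\phi\bigl(2n\tfrac{\phi''}{\phi}-2n(\tfrac{\phi'}{\phi})^2+(2n+1)E\bigr)\big|_{u=\pi/3}$, and Lemma~\ref{til} turns the bracket into $-(2n+1)\chi(\gamma^2-3)/(\gamma+1)^2$. (The identity $X=X_1+X_2$ is the routine check that, after expanding $(q\phi)''(\pi/3)/\phi(\pi/3)=q''(\pi/3)+2(\phi'/\phi)(\pi/3)q'(\pi/3)+(\phi''/\phi)(\pi/3)q(\pi/3)$, the $(\phi''/\phi)$-terms and the $E$-terms cancel and the remaining coefficients reduce to those of $X$.)

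For $\mathbf R(X_2)$ I would note that $\tau\mapsto\tau+2$ sends the spectral argument $\pi\tau/2$ to $\pi\tau/2+\pi$, so that by \eqref{qsm} together with the $2\pi$-periodicity of $q$ both $q(\pi\tau/2)$ and $q'(\pi+\pi\tau/2)$ are unchanged, whereas $\Phi(0,\pi/3|\tau)$ changes sign by \eqref{psm}; hence $\mathbf R(X_2)=2X_2$, with $X_2$ already written through $\Phi$ by Corollary~\ref{qfc}. Adding the two pieces gives the displayed $\mathbf R(X)$, and one substitutes it, and \eqref{waa}, into Corollary~\ref{lxc}. In the first resulting term the factors $W$, $\phi(\pi/3)$, $\chi$ cancel, and with $L=2n+1$ one is left with $\tfrac{(\zeta^2+3)(\gamma^2-3)}{4(\gamma+1)^2}$; in the second term \eqref{waa} replaces $\phi(\pi/3)/W$ by $-\theta_1'(0|3\tau/2)/\bigl(\theta_2(0|3\tau/2)\Phi(\pi,\pi/3)\bigr)$, producing the theta-, $\Phi$- and $q$-ratios of \eqref{cxb} times $\tfrac{3\ti(\zeta^2+3)}{4L\chi}$. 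It then remains to rewrite the two prefactors in terms of $\gamma=(\zeta+3)/(\zeta-1)$: from $\gamma-1=4/(\zeta-1)$ one verifies the elementary identity $\zeta^2+3=4(\gamma^2+3)/(\gamma-1)^2$ (equivalently $Z(\zeta)=Z(\gamma)$), which turns those prefactors into $\tfrac{(\gamma^2-3)(\gamma^2+3)}{(\gamma^2-1)^2}$ and $\tfrac{3\ti(\gamma^2+3)}{(2n+1)\chi(\gamma-1)^2}$, giving exactly \eqref{cxb}.

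The only genuinely non-trivial input is Lemma~\ref{til}, the theta-function evaluation of $2n(\phi''/\phi)-2n(\phi'/\phi)^2+(2n+1)E$ at $u=\pi/3$ in terms of $\chi$ and $\gamma$; apart from that the argument is bookkeeping. The one point that must be handled with care — and the reason $\mathbf R(X_2)$ equals $2X_2$ rather than $0$ — is that the spectral arguments $\pi\tau/2$ and $\pi+\pi\tau/2$ themselves move when $\tau$ is shifted by $2$, so the shift interacts with the sign change \eqref{psm} of $\Phi$ instead of being absorbed by \eqref{qsm}.
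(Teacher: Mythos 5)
Your proposal is correct and follows the paper's own route essentially step for step: the splitting $X=X_1+X_2$ matched to Corollary \ref{qfc}, the evaluation of $\mathbf R(X_1)$ via Corollary \ref{wrc} and Lemma \ref{til} (with the correct observation that the logarithmic-derivative coefficients and $E$ are $\tau\mapsto\tau+2$ invariant so they pass through $\mathbf R$), the sign argument \eqref{qsm}--\eqref{psm} giving $\mathbf R(X_2)=2X_2$, and the substitution into Corollary \ref{lxc} together with \eqref{waa} and the identity $\zeta^2+3=4(\gamma^2+3)/(\gamma-1)^2$. The algebra checks out, so nothing further is needed.
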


\subsection{Proof of Theorem \ref{mt}}

To complete the proof of Theorem \ref{mt}, we need to express the last term in \eqref{cxb} in terms of the polynomials $s_n$ and $\bar s_n$.
 To keep close to the conventions of \cite{r4}, we introduce the variables
$$x=x(u|\tau)=-\frac{\theta_3\big(\frac\pi3\big|\frac\tau2\big)^2\theta_1\big(\frac{3u}2\big|\frac{3\tau}2\big)\theta_4\big(\frac u2\big|\frac\tau2\big)}{\theta_2\big(\frac\pi3\big|\frac\tau2\big)^2\theta_1\big(\frac{u}2\big|\frac{\tau}2\big)\theta_4\big(\frac {3u}2\big|\frac{3\tau}2\big)},$$
\begin{equation}\label{ztau}z=z(\tau)=-\frac{\theta_2\big(0\big|\frac\tau2\big)\theta_3\big(\frac \pi3\big|\frac\tau2\big)}{\theta_3\big( 0\big|\frac\tau2\big)\theta_2\big(\frac \pi3\big|\frac\tau2\big)}. \end{equation}
Then, our variables $(u,\tau,x,z)$ correspond to  $(2\pi z,2\tau,x,\zeta)$ from \cite{r4}. See \eqref{gz} for the relation between the modular functions  $z$ and $\gamma$.

Let $k_0,\,k_1,\,k_2,\,k_3$ and $n$ be integers, such that
$m=2n-\sum_j k_j\geq 0$. In \cite{r4}, the second author introduced a family of rational functions 
$T_n^{(k_0,k_1,k_2,k_3)}(x_1,\dots,x_m;z)$, which are symmetric in the variables $x_j$. They are defined by  explicit determinant formulas that we do not repeat here. When $m=0$, the variables $x_j$ are absent. We denote this special case 
\begin{equation}\label{lt}t^{(k_0,k_1,k_2,k_3)}(z)=T_n^{(k_0,k_1,k_2,k_3)}(-;z),\qquad k_0+k_1+k_2+k_3=2n. \end{equation}

It follows from \cite[Prop.\ 2.17]{rsa} that we obtain a solution to
 \eqref{qdp} as
$$q(u)=q(u|\tau)=C_1 h(u)T_n^{(n,-1,0,n)}(x;z),$$
where $C_1$ is independent of $u$ and 
$$h(u)=\theta_3\left(\frac u2\Big|\frac\tau2\right)\left(\frac{\theta_4\left(\frac{3u}2\big|\frac{3\tau}2\right)}{\theta_4\left(\frac{u}2\big|\frac{\tau}2\right)}\right)^n.$$
We choose $C_1$ so that   \eqref{qsm} is satisfied. (One can show that $C_1=(z+1)^{-3n(n-1)/2}$ works, but we will not need that fact.)
 
 Denoting the half-periods in $2\pi\mathbb Z+\pi\tau\mathbb Z$ by
$$\gamma_0=0,\qquad \gamma_1=\frac{\pi\tau}2,\qquad 
\gamma_2=\pi+\frac{\pi\tau}2,\qquad \gamma_3=\pi, $$
we have, for each $0\leq j\leq 3$,
\begin{equation}\label{ts}T_n^{(k_0,k_1,k_2,k_3)}(x_1,\dots,x_{m-1},x(\gamma_j|\tau);z)
=T_n^{(k_0,k_1,k_2,k_3)+e_j}(x_1,\dots,x_{m-1};z),
 \end{equation}
where $e_0,\dots,e_3$ are the canonical unit vectors of $\mathbb Z^4$. It follows that
\begin{subequations}\label{qpst}
\begin{equation}\label{qg4}q\left(\frac{\pi\tau}2\right)=C_1{h\left(\frac{\pi\tau}{2}\right)}\, t^{(n,0,0,n)}(z)\end{equation}
and, since $h(\gamma_2)=0$,
\begin{equation} q'\left(\pi+\frac{\pi\tau}2\right)=C_1{h'\left(\pi+\frac{\pi\tau}2\right)}\, t^{(n,-1,1,n)}(z). \end{equation}

Next, we consider the function $f(u)=\Phi(u,\pi/3)$. 
It is a solution of \eqref{prr} that satisfies the additional restriction
$f(\pi/3)=0$. It follows from  \cite[Prop.\ 2.17]{rsa}
that 
$$f(u)=C_2k(u)T_{n-1}^{(n,-1,-1,n-1)}(x;z), $$
where $C_2$ is independent of $u$ and
% C removed 'where'
\begin{equation}\label{ku}k(u)=\frac{\theta_2\left(\frac {3u}2\big|\frac{3\tau}2\right)^2\theta_3\left(\frac {3u}2\big|\frac{3\tau}2\right)}{\theta_2\left(\frac u2\big|\frac\tau2\right)^2\theta_3\left(\frac u2\big|\frac\tau2\right)}\left(\frac{\theta_4\left(\frac{3u}2\big|\frac{3\tau}2\right)}{\theta_4\left(\frac{u}2\big|\frac{\tau}2\right)}\right)^{n-1}.
\end{equation}
Applying again \eqref{ts} gives
\begin{align}
\Phi\left(0,\frac\pi 3\right)&=C_2 k(0)t^{(n+1,-1,-1,n-1)}(z),\\
\Phi\left(\pi,\frac\pi 3\right)&=C_2k(\pi)t^{(n,-1,-1,n)}(z).
\end{align}
\end{subequations}

Inserting the expressions \eqref{qpst} in Lemma \ref{fsl}, the factors $C_j$ cancel and we obtain 
\begin{multline}\label{fhkt}f_n=\frac{(\gamma^2-3)(\gamma^2+3)}{(\gamma^2-1)^2}\\
-\frac{3\ti(\gamma^2+3)}{(2n+1)\chi(\gamma-1)^2}\frac{\theta_1'\left(0\big|\frac{3\tau}2\right)}{\theta_2\left(0|\frac{3\tau}2\right) }
\frac{h'\left(\pi+\frac{\pi\tau}2\right)}{h\left(\frac{\pi\tau}2\right)}\frac{k(0)}{k(\pi)}\frac{t^{(n,-1,1,n)}t^{(n+1,-1,-1,n-1)}}{t^{(n,0,0,n)}t^{(n,-1,-1,n)}}.
\end{multline}
We prove in Lemma \ref{chkl} that
\begin{equation}\label{hki}\frac 1{\chi}\frac{\theta_1'\left(0\big|\frac{3\tau}2\right)}{\theta_2\left(0|\frac{3\tau}2\right) }
\frac{h'\left(\pi+\frac{\pi\tau}2\right)}{h\left(\frac{\pi\tau}2\right)}\frac{k(0)}{k(\pi)}=\frac{(-1)^{n+1}2\ti(z+1)}{3(\gamma+1)^2(z-1)(2z+1)^{n-1}}.
\end{equation}
Moreover, it follows from \eqref{stfc} that 
$$\frac{t^{(n,-1,1,n)}t^{(n+1,-1,-1,n-1)}}{t^{(n,0,0,n)}t^{(n,-1,-1,n)}}
 =\frac{(-1)^{n+1}(z-1)^2(2z+1)^n}{(2n+1)(z+1)^2}\frac{\bar s_n(\gamma^{-2})\bar s_{-n-1}(\gamma^{-2})}{s_n(\gamma^{-2})s_{-n-1}(\gamma^{-2})}.
$$
 Using these identities and \eqref{gz} in \eqref{fhkt}, we finally obtain
 \begin{equation}\label{fng} f_n=\frac{(\gamma^2+3)(\gamma^2-3)}{(\gamma^2-1)^2}-\frac{2\gamma^2(\gamma^2+3)}{(2n+1)^2(\gamma^2-1)^2}\frac{\bar s_n(\gamma^{-2})\bar s_{-n-1}(\gamma^{-2})}{s_n(\gamma^{-2})s_{-n-1}(\gamma^{-2})}.
 \end{equation}
 Since we already know that $f_n$ is invariant under interchanging $\gamma$ and $\zeta$, this proves Theorem \ref{mt}.

\section{Connection to Painlev\'e VI}
 
In this section, we prove Theorem \ref{fpqp}. We first briefly review some relevant technicalities. A rational solution of  Painlev\'e VI  can be identified with a homomorphism of differential fields $\mathbb C(q,p,t)\rightarrow \mathbb C(t)$, where $\mathbb C(q,p,t)$ is equipped with the derivation 
$$\delta=\frac{\partial H}{\partial p}\frac{\partial}{\partial q}-\frac{\partial H}{\partial q}\frac{\partial}{\partial p}+t(t-1)\frac{\partial}{\partial t}$$
and $\mathbb C(t)$ with $t(t-1) d/dt$. 
To consider more general solutions algebraically, one needs to work with  field extensions. 
 We follow the approach of \cite{r4}, which uses a differential field $\mathcal F$ generated by $q$, $p$, $t$, $u$, $v$, 
$\alpha_0,\dots,\alpha_4$ and $\tau_0,\dots,\tau_4$. These are subject to the relations
\eqref{ac} as well as
$$u^2v^4=t,\qquad u^4v^2=1-t.$$
Thus,   $u$ and $v$ represent choices of the roots
  $t^{-1/6}(1-t)^{1/3}$ and $t^{1/3}(1-t)^{-1/6}$, respectively. 
The elements $\tau_j$ are abstract tau functions, which represent inverse logarithmic derivatives of modified versions of the Hamiltonian. 
%For instance,
%\begin{align*}\frac{\delta(\tau_0)}{\tau_0}&=H+\frac t{12}\left(
%2(\alpha_0-1)^2-\alpha_1^2+2\alpha_3^2-\alpha_4^2+6(\alpha_0-1)\alpha_3\right)\\
%&\quad+\frac {t-1}{12}\left(2(\alpha_0-1)^2-\alpha_1^2-\alpha_3^2+2\alpha_4^2+6(\alpha_0-1)\alpha_4\right).
%\end{align*}
We can then identify the special solution \eqref{sps} with 
 a differential homomorphism
$\mathbf X:\,\mathcal F\rightarrow \mathcal M$, where
 $\mathcal M$ is a field of modular functions. It satisfies
$$\mathbf X(q)=\frac{s(s+2)}{2s+1},\qquad \mathbf X(t)=\frac{s(s+2)^3}{(2s+1)^3}, $$
where
$s=s(\tau)$ is now an element of $\mathcal M$. In \cite{rsc} it was chosen as $s=z$,
with $z$ as  in \eqref{ztau}. To simplify the formulation of Theorem \ref{fpqp} we take here instead 
 $s=-z-1$, which corresponds to making a further 
modular transformation of the variable $\tau$. 

A B\"acklund transformation can be viewed algebraically as
an automorphism of a differential field.  One can define
such transformations  $T_1,\dots,T_4$ that generate an action of
the lattice $\mathbb Z^4$ on $\mathcal F$.  The transformation \eqref{tb} corresponds to $T=T_2^{-1}T_3$. We will write
$$\tau_{l_1l_2l_3l_4}=T_1^{l_1}T_2^{l_2}T_3^{l_3}T_4^{l_4} \tau_0.$$
A  main result of \cite{r4} is that the lattice of modular tau functions $\mathbf X(\tau_{l_1l_2l_3l_4})$
can be identified with the lattice \eqref{lt} of rational functions $t^{(k_0,k_1,k_2,k_3)}$.

To prove Theorem \ref{fpqp}, we first note that
substituting $s=-z-1$ in \eqref{gz} gives 
   $$\gamma^2=\frac{(s+2)(2s+1)}{s}. $$
   As we saw in \eqref{zgds}, this leads to the  relation
   \eqref{zx} between the parameters $Z$ and $s$.

By \eqref{sst} and
  \eqref{smt},
 $$\frac{\bar s_n(\gamma^{-2})\bar s_{-n-1}(\gamma^{-2})}{ s_n(\gamma^{-2}) s_{-n-1}(\gamma^{-2})}
 =\frac{s}{(s+2)(2s+1)^2}\frac{t^{(n,n,1,-1)}(s)t^{(n,n,-2,0)}(s)}{t^{(n,n,0,0)}(s)t^{(n,n,-1,-1)}(s)}.
 $$
  It follows from \cite[Thm.\ 4.2]{r4} (with the variable $\zeta$ there replaced by $-s-1$) that 
  \begin{align*}\frac{t^{(n,n,1,-1)}(s)t^{(n,n,-2,0)}(s)}{t^{(n,n,0,0)}(s)t^{(n,n,-1,-1)}(s)}&=
  -\frac{4(s+2)^3(2s+1)}{s}
     \mathbf X\left(\frac 1{v^2}\frac{\tau_{-1,-n,n,1}\tau_{1,-n,n+1,-1}}{\tau_{0,-n,n,0}\tau_{0,-n,n+1,0}}\right)\\
  &=-\frac{4(s+2)^3(2s+1)}{s}
     \mathbf X T^n\left(\frac 1{v^2}\frac{\tau_{-1,0,0,1}\tau_{1,0,1,-1}}{\tau_{0,0,0,0}\tau_{0,0,1,0}}\right). \end{align*}
    We also used that,
     by \cite[Lemma 2.3]{rsc},   $Tv=v$. By definition,
     $\tau_{0,0,0,0}=\tau_0$ and one can check  check that
 $ \tau_{0,0,1,0}=\tau_3$,
  \begin{align*}
  \tau_{-1,0,0,1}&=-\frac{\ti \tau_0\tau_4 v}{\tau_1}\,(p(q-1)+\alpha_1+\alpha_2),\\
  \tau_{1,0,1,-1}&=-\frac{\ti \tau_1\tau_3v}{\tau_4t}(pq(q-t)+\alpha_2 q+\alpha_4 t).
  \end{align*}
 These expressions allow us to evaluate
 \begin{equation}\label{the}\frac 1{v^2}\frac{\tau_{-1,0,0,1}\tau_{1,0,1,-1}}{\tau_{0,0,0,0}\tau_{0,0,1,0}} =-
     \frac{(p(q-1)+\alpha_1+\alpha_2)(pq(q-t)+\alpha_2 q+\alpha_4 t)}{t}.
   \end{equation}
   We can then rewrite
     \eqref{fng} as
  \begin{align*}
 f_n&=\frac{(s^2+s+1)(s^2+4s+1)}{(s+1)^4}-\frac{4(2s+1)^3(s^2+4s+1)}{(2n+1)^2s(s+1)^4}\\
    &\quad\times \mathbf X T ^n(p(q-1)+\alpha_1+\alpha_2)(pq(q-t)+\alpha_2 q+\alpha_4 t).
  \end{align*}
  Theorem \ref{fpqp} now follows from comparing
  \begin{multline*}
  \mathbf X T ^n(p(q-1)+\alpha_1+\alpha_2)(pq(q-t)+\alpha_2 q+\alpha_4 t)\\
=\left(p_n(q_n-1)+n+\frac 12\right)  \left(p_n(q_n-t)+n+\frac 12\right) q_n 
  \end{multline*}
  with the observation that the Hamiltonian with the parameters \eqref{shpp} factors as
  \begin{equation}\label{hpf}H'+\frac{(2n+1)^2}{4}\,t=
  \left(p(q-1)+n+\frac 12\right)  \left(p(q-t)+n+\frac 12\right)q.
  \end{equation} 
 
 Factorizations such as
 \eqref{hpf} are related to the existence of classical solutions to Painlev\'e VI. In the case at hand, 
 consider solutions such that the first factor in \eqref{hpf} vanishes. The system
 \eqref{ph} then reduces to the single equation $tq'=(n+1/2)q$ and we recover the elementary solutions $q=C t^{n+1/2}$.  
 More generally,  the expression \eqref{the} appears in the factorization
 $$H+(\alpha_0-1)\alpha_3 t=(p(q-1)+\alpha_1+\alpha_2)(pq(q-t)+\alpha_2 q+\alpha_4 t), $$
 which holds for  $\alpha_1+\alpha_2+\alpha_3=0$. In this case, solutions such that the first factor vanishes can be expressed in terms of Gauss' hypergeometric function \cite{o}.

\appendix

\section{Theta function identities}
\label{tfs}

In this appendix we collect some useful theta function identities.
 Fixing $\tau$ in the upper half-plane and  $p=e^{\ti\pi\tau}$,
 the Jacobi  theta functions are defined by the Fourier series
   \begin{subequations}\label{jt}
\begin{align}
\theta_1(u|\tau)&=2\sum_{n=0}^\infty(-1)^np^{(n+1/2)^2}\sin((2n+1)u),\\
\theta_2(u|\tau)&=2\sum_{n=0}^\infty p^{(n+1/2)^2}\cos((2n+1)u),\\
\theta_3(u|\tau)&=1+2\sum_{n=1}^\infty p^{n^2}\cos (2nu),\\
\theta_4(u|\tau)&=1+2\sum_{n=1}^\infty(-1)^np^{n^2}\cos (2nu)
\end{align}
\end{subequations}
or, equivalently, by the product expansions
\begin{subequations}\label{tpe}
 \begin{align}
\label{tpea} \theta_1(u|\tau)&=\ti e^{\ti\pi\tau/4-\ti u}(p^2,e^{2\ti u},p^2e^{-2\ti u};p^2),\\
 \theta_2(u|\tau)&= e^{\ti\pi\tau/4-\ti u}(p^2,-e^{2\ti u},-p^2e^{-2\ti u};p^2),\\
  \theta_3(u|\tau)&=(p^2,-pe^{2\ti u},-pe^{-2\ti u};p^2),\\
 \theta_4(u|\tau)&=(p^2,pe^{2\ti u},pe^{-2\ti u};p^2),
 \end{align}
 \end{subequations}
where
$$(a_1,\dots,a_n;p)_\infty=\prod_{j=0}^\infty(1-a_1p^j)\dotsm(1-a_np^j).$$ Throughout, we use primes, such as in $\theta_1'(u|\tau)$, to indicate the $u$-derivatives of the theta functions.

Elementary quasi-periodicity and reflection relations, such as
$$\theta_1(u+\pi|\tau)=\theta_1(-u|\tau)=-\theta_1(u|\tau),\qquad \theta_1(u+\pi\tau|\tau)=-e^{-\ti(2u+\pi\tau)}\theta_1(u|\tau) $$
will be taken for granted and used without comment. We will also need the modular transformation \cite[\S 21.51]{ww}
\begin{equation}\label{tmt}\theta_4(u/\tau|-1/\tau)=(\tau/\ti)^{1/2}e^{\ti u^2/\pi\tau}\theta_2(u|\tau).
\end{equation}
The  identities
\begin{align}
\label{jtv}\theta_1'(0|\tau)&=\theta_2(0|\tau)\theta_3(0|\tau)\theta_4(0|\tau), \\
\label{t1d}
\theta_4(0|2\tau)\theta_1(2u|2\tau)&=\theta_1(u|\tau)\theta_2(u|\tau),\\
\label{t4d}
\theta_4(0|2\tau)\theta_4(2u|2\tau)&=\theta_3(u|\tau)\theta_4(u|\tau),\\
\label{tde}\theta_2(0|\tau)\theta_1(u|\tau)&=2\,\theta_1(u|2\tau)\theta_4(u|2\tau)
,\\
\label{tdf}\theta_2(0|\tau)\theta_2(u|\tau)&=2\,\theta_2(u|2\tau)\theta_3(u|2\tau),\\
\label{tdpe}
{\theta_1(2u|\tau)}&=2\frac{\theta_1(u|\tau)\theta_2(u|\tau)\theta_3(u|\tau)\theta_4(u|\tau)}{\theta_2(0|\tau)\theta_3(0|\tau)\theta_4(0|\tau)},
\end{align}
\begin{equation}\label{tripl}
\theta_j(3u|3\tau)=\frac{(p^6;p^6)_\infty}{(p^2;p^2)_\infty^3}\,\theta_j(u|\tau)\theta_j\left(\frac\pi 3+u\big|\tau\right)
\theta_j\left(\frac\pi 3-u\big|\tau\right),\quad 1\leq j\leq 4,
\end{equation}
are all easy consequences of  \eqref{tpe}.
Specializing $u=\pi/3$ in \eqref{tdpe} gives
\begin{equation}\label{tdps}\frac{\theta_2(\pi/3|\tau)\theta_3(\pi/3|\tau)
\theta_4(\pi/3|\tau)}{\theta_2(0|\tau)\theta_3(0|\tau)\theta_4(0|\tau)}=\frac 12. \end{equation}

Many identities between the Jacobi theta functions follow from the Weierstrass identity \cite[\S 20.51, Ex. 5, and \S 21.43]{ww}:
\begin{align}
  \label{wsi}
  &\theta_1(x-y|\tau)\theta_1(x+y|\tau)\theta_1(u-v|\tau)\theta_1(u+v|\tau)\\
  -\,&\theta_1(x-u|\tau)\theta_1(x+u|\tau)\theta_1(y-v|\tau)\theta_1(y+v|\tau) \nonumber\\
  +\,&
  \theta_1(x-v|\tau)\theta_1(x+v|\tau)\theta_1(u-y|\tau)\theta_1(u+y|\tau)=0.
  \nonumber
\end{align}
For instance, specializing some arguments in this identity to $0,\pi/2,\pi/2+\pi \tau/2,$ or $\pi \tau/2$ leads to various addition formulas. A combination of such specializations with \eqref{t1d} and \eqref{t4d} allows one to derive the identities \cite[\S 8.119]{gr}:
\begin{align}
\label{ta4412}\theta_4(0|\tau)^2\theta_4(x+y|\tau)\theta_4(x-y|\tau)&=\theta_4(x|\tau)^2\theta_4(y|\tau)^2-\theta_1(x|\tau)^2\theta_1(y|\tau)^2\\
\label{ta3322}&=\theta_3(x|\tau)^2\theta_3(y|\tau)^2-\theta_2(x|\tau)^2\theta_2(y|\tau)^2,
\end{align}
%\label{ta1234}\theta_3(0)^2\theta_4(x\pm y)&=\theta_1(x)^2\theta_2(y)^2+\theta_3(x)^2\theta_4(y)^2,\\
%\label{ta2341}\theta_3(0)^2\theta_2(x\pm y)&=\theta_2(x)^2\theta_3(y)^2-\theta_4(x)^2\theta_1(y)^2,\\
\begin{equation}
\label{ta1441}\theta_1(x+y|\tau)\theta_2(x-y|\tau)=\theta_1(2x|2\tau)\theta_4(2y|2\tau)
+\theta_4(2x|2\tau)\theta_1(2y|2\tau).
\end{equation}
Moreover, we need the differential identity \cite[\S 8.199]{gr}
  \begin{equation}\label{32d}\frac{\partial}{\partial u}\frac{\theta_3(u|\tau)}{\theta_2(u|\tau)}=\frac{\theta_4(0|\tau)^2\theta_1(u|\tau)\theta_4(u|\tau)}{\theta_2(u|\tau)^2}.\end{equation}

%We will also need the identities
%\begin{equation}\label{qti}
%\theta_4(u|\tau)^4-\theta_1(u|\tau)^4=\theta_4(0|\tau)^3\theta_4(2u|\tau),
%\end{equation}
%\begin{equation}\label{cti}\theta_1(2u|2\tau)\theta_4(4u|2\tau)+\theta_4(2u|2\tau)\theta_1(4u|2\tau)
%=\theta_1(3u|\tau)\theta_2(u|\tau),\end{equation}
% \begin{equation}\label{tad}\theta_3\left(\frac\pi3\Big|\tau\right)^2\theta_3(u|\tau)^2- \theta_2\left(\frac\pi3\Big|\tau\right)^2\theta_2(u|\tau)^2=\theta_4(0|\tau)^2\theta_4\left(\frac\pi 3+u\Big|\tau\right)\theta_4\left(\frac\pi 3-u\Big|\tau\right),\end{equation}
% 

By  \eqref{zge} with $\eta=\pi/3$, the parameter $\zeta=\zeta(\tau)$ in \eqref{jsp} is given by
\begin{subequations}\label{zcc}
\begin{equation}\label{zss}\zeta=\frac{\theta_1(2\pi/3|2\tau)^2}{\theta_4(2\pi/3|2\tau)^2}=\frac{\theta_1(\pi/3|\tau)^2\theta_2(\pi/3|\tau)^2}{\theta_3(\pi/3|\tau)^2\theta_4(\pi/3|\tau)^2},\end{equation}
where the second identity follows from \eqref{t1d} and \eqref{t4d}. 
We will also need
\begin{align}
\label{za}1+\zeta&=2\frac{\theta_2(\pi/3|\tau)\theta_3(0|\tau)}{\theta_2(0|\tau)\theta_3(\pi/3|\tau)},\\
\label{zb}1-\zeta&=2\frac{\theta_2(\pi/3|\tau)\theta_4(0|\tau)}{\theta_2(0|\tau)\theta_4(\pi/3|\tau)},\\
\label{zc}3+\zeta&=2\frac{\theta_1(\pi/3|\tau)^2\theta_4(0|\tau)\theta_4(\pi/3|\tau)}{\theta_2(0|\tau)\theta_2(\pi/3|\tau)\theta_3(\pi/3|\tau)^2},\\
\label{zd}3-\zeta&=2\frac{\theta_1(\pi/3|\tau)^2\theta_3(0|\tau)\theta_3(\pi/3|\tau)}{\theta_2(0|\tau)\theta_2(\pi/3|\tau)\theta_4(\pi/3|\tau)^2}.
\end{align}
\end{subequations}
These can, for instance, be obtained from
\cite[Lemma 9.1]{r3}. 
%One should then 
%first use that result to check that
%$\zeta(\tau/2)=-(z(\tau)+2)/z(\tau)$, where
%$z$ is as in  \eqref{ztau}. 
As a consequence,
$$
\gamma=\frac{\zeta+3}{\zeta-1}=-\frac{\theta_1(\pi/3|\tau)^2\theta_4(\pi/3|\tau)^2}{\theta_2(\pi/3|\tau)^2\theta_3(\pi/3|\tau)^2}.
$$
Yet another application of \cite[Lemma 9.1]{r3} then gives
 \begin{equation}\label{gz}\gamma^2=\frac{(1-z)(1+2z)}{1+z}, \end{equation}
 where $z$ is as in \eqref{ztau}.

We now return to the situation when $\eta$ is generic.

\begin{lemma}\label{ssjl}
If $J_x$, $J_y$ and $J_z$ are given by \eqref{aip}, with $\zeta$ and $\Gamma$ parametrized
as in \eqref{zg}, then
\begin{equation}\label{jsg}J_xJ_y+J_xJ_z+J_yJ_z=1-\zeta^2+2\Gamma=
\frac{\theta_1(3\eta|\tau)\theta_3(0|\tau)^4\theta_4(0|\tau)^4}{\theta_1(\eta|\tau)\theta_3(\eta|\tau)^4\theta_4(\eta|\tau)^4}
.\end{equation}
\end{lemma}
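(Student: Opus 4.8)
The plan is to establish the two equalities separately. The first, $J_xJ_y+J_xJ_z+J_yJ_z=1-\zeta^2+2\Gamma$, is immediate from \eqref{aip}: with $J_x=1+\zeta$, $J_y=1-\zeta$, $J_z=\Gamma$ one has $J_xJ_y=1-\zeta^2$ and $J_xJ_z+J_yJ_z=\Gamma(J_x+J_y)=2\Gamma$.

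For the second equality I would substitute the expressions \eqref{zge} for $\zeta$ and $\Gamma$ and bring $1-\zeta^2+2\Gamma$ over the common denominator $\theta_2(0|2\tau)\theta_3(0|2\tau)\theta_4(2\eta|2\tau)^4$. The numerator then contains the combination $\theta_4(2\eta|2\tau)^4-\theta_1(2\eta|2\tau)^4$, which by \eqref{ta4412} applied with modulus $2\tau$ and $x=y=2\eta$ equals $\theta_4(0|2\tau)^3\theta_4(4\eta|2\tau)$. Factoring out $\theta_4(0|2\tau)^2$ and using \eqref{jtv} at $2\tau$, the numerator becomes $\theta_4(0|2\tau)^2\bigl(\theta_1'(0|2\tau)\theta_4(4\eta|2\tau)+2\theta_2(2\eta|2\tau)\theta_3(2\eta|2\tau)\theta_4(2\eta|2\tau)^2\bigr)$. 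Next, the duplication formula \eqref{tdpe} at modulus $2\tau$ with $u=2\eta$, combined with \eqref{jtv}, rewrites $2\theta_2(2\eta|2\tau)\theta_3(2\eta|2\tau)\theta_4(2\eta|2\tau)$ as $\theta_1(4\eta|2\tau)\theta_1'(0|2\tau)/\theta_1(2\eta|2\tau)$; pulling out $\theta_1'(0|2\tau)/\theta_1(2\eta|2\tau)$ leaves the bracket $\theta_1(2\eta|2\tau)\theta_4(4\eta|2\tau)+\theta_1(4\eta|2\tau)\theta_4(2\eta|2\tau)$.

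The decisive step is to recognize this last bracket: applying \eqref{ta1441} with $x=\eta$ and $y=2\eta$, its right-hand side is \emph{exactly} $\theta_1(2\eta|2\tau)\theta_4(4\eta|2\tau)+\theta_4(2\eta|2\tau)\theta_1(4\eta|2\tau)$, so the bracket collapses to $\theta_1(3\eta|\tau)\theta_2(\eta|\tau)$ — this is what produces the factor $\theta_1(3\eta|\tau)$ in \eqref{jsg}. At this point the numerator equals $\theta_4(0|2\tau)^2\theta_1'(0|2\tau)\theta_1(3\eta|\tau)\theta_2(\eta|\tau)/\theta_1(2\eta|2\tau)$. Dividing by the common denominator, I would use \eqref{jtv} once more to replace $\theta_1'(0|2\tau)/(\theta_2(0|2\tau)\theta_3(0|2\tau))$ by $\theta_4(0|2\tau)$, and the duplication identities \eqref{t1d}, \eqref{t4d} to replace $\theta_1(2\eta|2\tau)$ by $\theta_1(\eta|\tau)\theta_2(\eta|\tau)/\theta_4(0|2\tau)$ and $\theta_4(2\eta|2\tau)^4$ by $\theta_3(\eta|\tau)^4\theta_4(\eta|\tau)^4/\theta_4(0|2\tau)^4$. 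The factors $\theta_2(\eta|\tau)$ cancel, leaving $\theta_4(0|2\tau)^8\theta_1(3\eta|\tau)/(\theta_1(\eta|\tau)\theta_3(\eta|\tau)^4\theta_4(\eta|\tau)^4)$, and a final application of \eqref{t4d} at $u=0$, $\theta_4(0|2\tau)^2=\theta_3(0|\tau)\theta_4(0|\tau)$, gives the right-hand side of \eqref{jsg}.

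The argument is essentially bookkeeping with the identities of Appendix \ref{tfs}; the only genuinely non-obvious moves are the choices of \eqref{ta4412} and \eqref{ta1441} at doubled modulus, which are precisely what convert the fourth powers and the sum of products into the single theta value $\theta_1(3\eta|\tau)$. (A useful consistency check: at $\eta=\pi/3$ one has $3\eta=\pi$ and $\theta_1(\pi|\tau)=0$, so the right-hand side vanishes, in agreement with the supersymmetry condition \eqref{jss}.)
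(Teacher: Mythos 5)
Your proof is correct and follows essentially the same route as the paper: the paper likewise uses \eqref{tdpe} and \eqref{ta4412} to write $1-\zeta^2$ and $2\Gamma$ over the denominator $\theta_4(2\eta|2\tau)^4$, then combines them via \eqref{ta1441} into the factor $\theta_1(3\eta|\tau)\theta_2(\eta|\tau)$, and finishes with \eqref{t1d} and \eqref{t4d}. The only difference is organizational (a single common fraction versus two separate intermediate identities), and all your individual steps check out.
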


Note that this implies Baxter's observation that \eqref{jss} holds for $\eta=\pi/3$.

\begin{proof}
The application of \eqref{tdpe} and \eqref{ta4412} to \eqref{zge} yields
$$1-\zeta^2=\frac{\theta_4(0|2\tau)^3\theta_4(4\eta|2\tau)}{\theta_4(2\eta|2\tau)^4}, \qquad 2\Gamma=\frac{\theta_4(0|2\tau)^3\theta_1(4\eta|2\tau)}{\theta_4(2\eta|2\tau)^3\theta_1(2\eta|2\tau)}. $$
Applying \eqref{ta1441}, we obtain
$$1-\zeta^2+2\Gamma
=\frac{\theta_4(0|2\tau)^3\theta_1(3\eta|\tau)\theta_2(\eta|\tau)}{\theta_1(2\eta|2\tau)\theta_4(2\eta|2\tau)^4}.$$
Using \eqref{t1d} and \eqref{t4d}, this can be written in the desired form.
\end{proof}

\begin{lemma}\label{fdl}
The identity \eqref{zze} holds.
\end{lemma}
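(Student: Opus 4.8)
The plan is to compute both sides of \eqref{zze} explicitly as ratios of Jacobi theta functions and to recognise the resulting equality as the elementary identity \eqref{tdps}. The point that makes the left-hand side tractable is to differentiate the \emph{combination} $1-\zeta^2+2\Gamma$ rather than $\zeta$ and $\Gamma$ separately: since
\[
\zeta\zeta_\eta-\Gamma_\eta=-\tfrac12\,\frac{\partial}{\partial\eta}\bigl(1-\zeta^2+2\Gamma\bigr),
\]
and since Lemma \ref{ssjl} gives the closed form \eqref{jsg} for $G(\eta):=1-\zeta^2+2\Gamma$, everything comes down to differentiating that closed form. Crucially $G(\pi/3)=0$, because $\theta_1(\pi|\tau)=0$; hence, when we differentiate the right member of \eqref{jsg} at $\eta=\pi/3$, the only surviving term is the one in which the $\eta$-derivative falls on the numerator factor $\theta_1(3\eta|\tau)$. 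Using $\theta_1'(\pi|\tau)=-\theta_1'(0|\tau)$ this yields
\[
\zeta\zeta_\eta-\Gamma_\eta\Big|_{\eta=\pi/3}
=\frac{3\,\theta_1'(0|\tau)\,\theta_3(0|\tau)^4\theta_4(0|\tau)^4}{2\,\theta_1(\pi/3|\tau)\,\theta_3(\pi/3|\tau)^4\theta_4(\pi/3|\tau)^4}.
\]

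For the right-hand side I would evaluate $a/b_u$ directly from the parametrisation \eqref{bw}. Because $\theta_1(0|2\tau)=0$, at $u=\eta$ one has $a=\rho\,\theta_4(2\eta|2\tau)\theta_4(0|2\tau)\theta_1(2\eta|2\tau)$ while $b$ vanishes, with $b_u=\rho\,\theta_4(2\eta|2\tau)^2\theta_1'(0|2\tau)$, so
\[
\frac{a}{b_u}\Big|_{u=\eta=\pi/3}=\frac{\theta_4(0|2\tau)\,\theta_1(2\pi/3|2\tau)}{\theta_4(2\pi/3|2\tau)\,\theta_1'(0|2\tau)}.
\]
Rewriting the modulus-$2\tau$ theta functions in terms of modulus-$\tau$ ones via \eqref{t1d} and \eqref{t4d}, together with the two relations $\theta_4(0|2\tau)^2=\theta_3(0|\tau)\theta_4(0|\tau)$ and $2\,\theta_4(0|2\tau)\theta_1'(0|2\tau)=\theta_1'(0|\tau)\theta_2(0|\tau)$ obtained by setting $u=0$ in \eqref{t4d} and in the $u$-derivative of \eqref{t1d}, this becomes
\[
\frac{a}{b_u}\Big|_{u=\eta=\pi/3}=\frac{2\,\theta_1(\pi/3|\tau)\theta_2(\pi/3|\tau)\theta_3(0|\tau)\theta_4(0|\tau)}{\theta_1'(0|\tau)\theta_2(0|\tau)\theta_3(\pi/3|\tau)\theta_4(\pi/3|\tau)}.
\]

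Finally, I would substitute these two expressions and the formula \eqref{chi} for $\chi$ into \eqref{zze}. Cancelling the common factors $\theta_1'(0|\tau)$, $\theta_1(\pi/3|\tau)$, $\theta_3(0|\tau)$ and $\theta_4(0|\tau)$, the asserted equality reduces to the cube of $\theta_2(0|\tau)\theta_3(0|\tau)\theta_4(0|\tau)=2\,\theta_2(\pi/3|\tau)\theta_3(\pi/3|\tau)\theta_4(\pi/3|\tau)$, which is precisely \eqref{tdps}.

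I do not expect a genuine obstacle here: the two main pieces — the one-term differentiation of $G$ at $\eta=\pi/3$ (legitimate only because $G(\pi/3)=0$) and the evaluation of $a/b_u$ — are both short once one has Lemma \ref{ssjl} at hand. The only place where care is needed is the last step, namely keeping consistent track of the various doubling identities and of which derivatives of $\theta_1$ at $0$ appear; but this is pure theta-function bookkeeping with no conceptual content.
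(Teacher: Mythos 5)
Your proposal is correct and follows essentially the same route as the paper: differentiate the closed form \eqref{jsg} for $1-\zeta^2+2\Gamma$ at $\eta=\pi/3$ (where only the term with the derivative hitting $\theta_1(3\eta|\tau)$ survives), evaluate $a/b_u$ at $u=\eta=\pi/3$ from \eqref{bw} via the doubling identities \eqref{t1d} and \eqref{t4d}, and close with \eqref{tdps}. The only cosmetic difference is that the paper applies \eqref{tdps} earlier to rewrite both sides purely in terms of $\theta_2$, whereas you invoke it once at the end as a cube.
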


\begin{proof}
We differentiate \eqref{jsg} with respect to $\eta$. Setting $\eta=\pi/3$ and applying \eqref{tdps} leads to
\begin{equation}\label{zzege}\left.\zeta\zeta_\eta-\Gamma_\eta\right|_{\eta=\pi/3}=\frac 32\frac{\theta_1'(0|\tau)\theta_3(0|\tau)^4\theta_4(0|\tau)^4}{\theta_1(\pi/3|\tau)\theta_3(\pi/3|\tau)^4\theta_4(\pi/3|\tau)^4}=24\frac{\theta_1'(0|\tau)\theta_2(\pi/3|\tau)^4}{\theta_1(\pi/3|\tau)\theta_2(0|\tau)^4}. \end{equation}
Moreover, we evaluate the ratio $b_u/a$ at $u=\eta=\pi/3$ by differentiating \eqref{bw} with respect to the spectral parameter $u$. Applying \eqref{t1d}, \eqref{t4d}  and \eqref{tdps} yields
\begin{align*}\frac{b_u}{a}\bigg|_{u=\eta=\pi/3}&=\frac{\theta_1'(0|2\tau)\theta_4(2\pi/3|2\tau)}{\theta_4(0|2\tau)\theta_1(2\pi/3|2\tau)}=\frac{\theta_1'(0|\tau)\theta_2(0|\tau)\theta_3(\pi/3|\tau)\theta_4(\pi/3|\tau)}{2\,\theta_3(0|\tau)\theta_4(0|\tau)\theta_1(\pi/3|\tau)\theta_2(\pi/3|\tau)}\\
&=\frac{\theta_1'(0|\tau)\theta_2(0|\tau)^2}{4\,\theta_1(\pi/3|\tau)\theta_2(\pi/3|\tau)^2}.
 \end{align*}
Combining these two expression gives the desired result.
\end{proof}

 \begin{lemma}\label{til}
 In the notation introduced in the main text,
 $$2n\left(\frac{\phi''}{\phi}-\left(\frac{\phi'}{\phi}\right)^2\right)\bigg|_{u=\pi/3}+(2n+1)E=-(2n+1)\frac{\chi(\gamma^2-3)}{(\gamma+1)^2}.$$
 \end{lemma}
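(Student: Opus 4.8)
The plan is to turn the left-hand side into a pure theta-constant identity and then verify that identity with the help of Appendix~\ref{tfs}. Since $\phi(u)=\theta_1(u|\tau)^{2n+1}$, one computes directly that $\phi''/\phi-(\phi'/\phi)^2=(2n+1)(\log\theta_1)''(u)$, so the first summand equals $2n(2n+1)(\log\theta_1)''(\pi/3)$. For the second summand I would obtain a closed form for $E$ by feeding the triple-product formula \eqref{tripl} into \eqref{ed}: with $\kappa=(p^6;p^6)_\infty/(p^2;p^2)_\infty^3$, the factors $\theta_1(3u|3\tau)^{2n}$ and $\theta_4(3u|3\tau)$ cancel all but three of the powers of $\theta_1(u)$, giving
$$\frac{\theta_1(u|\tau)^{2n+3}}{\theta_1(3u|3\tau)^{2n}\theta_4(3u|3\tau)}
=\frac{\theta_1(u|\tau)^3}{\kappa^{2n+1}\,\theta_1(\tfrac\pi3+u|\tau)^{2n}\theta_1(\tfrac\pi3-u|\tau)^{2n}\,\theta_4(u|\tau)\theta_4(\tfrac\pi3+u|\tau)\theta_4(\tfrac\pi3-u|\tau)}.$$
The denominator is even in $u$, so comparing the coefficients of $u^3$ and $u^5$ on both sides (only the $u^2$-coefficients of the theta factors enter) yields
$$E=\frac{\theta_1'''(0)}{2\theta_1'(0)}-2n\,(\log\theta_1)''\!\left(\tfrac\pi3\right)-\tfrac12(\log\theta_4)''(0)-(\log\theta_4)''\!\left(\tfrac\pi3\right).$$
The decisive point is that the term $-2n(\log\theta_1)''(\pi/3)$ in $(2n+1)E$ cancels exactly against $2n(2n+1)(\log\theta_1)''(\pi/3)$, so the $n$-dependence collapses and the left-hand side of the lemma equals $(2n+1)\left[\tfrac{\theta_1'''(0)}{2\theta_1'(0)}-\tfrac12(\log\theta_4)''(0)-(\log\theta_4)''(\tfrac\pi3)\right]$.

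Next I would express the bracket through theta constants. Differentiating \eqref{ta4412} twice with respect to $y$ and setting $y=0$ gives $(\log\theta_4)''(u)=(\log\theta_4)''(0)-\theta_1'(0)^2\theta_1(u)^2/(\theta_4(u)^2\theta_4(0)^2)$. Evaluating this at $u=\pi/2$ and at $u=\pi/2+\pi\tau/2$, using the elementary relations $\theta_4(u+\tfrac\pi2)=\theta_3(u)$, $\theta_1(u+\tfrac\pi2)=\theta_2(u)$, $\theta_3(u+\tfrac{\pi\tau}2)=e^{-\ti\pi\tau/4-\ti u}\theta_2(u)$ together with $\theta_1'(0)^2=\theta_2(0)^2\theta_3(0)^2\theta_4(0)^2$ (the square of \eqref{jtv}), one finds $\theta_3''(0)/\theta_3(0)=\theta_4''(0)/\theta_4(0)-\theta_2(0)^4$ and $\theta_2''(0)/\theta_2(0)=\theta_4''(0)/\theta_4(0)-\theta_3(0)^4$; combined with $\theta_1'''(0)/\theta_1'(0)=\sum_{j=2}^4\theta_j''(0)/\theta_j(0)$ (from \eqref{jtv} and the heat equation), this turns $\tfrac{\theta_1'''(0)}{2\theta_1'(0)}-\tfrac32(\log\theta_4)''(0)$ into $-\tfrac12(\theta_2(0)^4+\theta_3(0)^4)$. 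Evaluating the displayed formula for $(\log\theta_4)''$ at $u=\pi/3$ then gives
$$\frac{\theta_1'''(0)}{2\theta_1'(0)}-\tfrac12(\log\theta_4)''(0)-(\log\theta_4)''\!\left(\tfrac\pi3\right)
=-\tfrac12\bigl(\theta_2(0)^4+\theta_3(0)^4\bigr)+\frac{\theta_1'(0)^2\,\theta_1(\pi/3|\tau)^2}{\theta_4(\pi/3|\tau)^2\,\theta_4(0|\tau)^2}.$$

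It then remains to identify this with $-\chi(\gamma^2-3)/(\gamma+1)^2$. Writing $\theta_j=\theta_j(0|\tau)$ and $\vartheta_j=\theta_j(\pi/3|\tau)$, one uses \eqref{jtv}, \eqref{zss}, \eqref{za}, \eqref{zb} to express the ratios $\vartheta_i^2/\vartheta_j^2$ rationally in $\zeta$ and the $\theta_j$; this gives $\chi=(1-\zeta^2)^2\theta_2^4/(16\zeta)$ and $\theta_1'(0)^2\vartheta_1^2/(\vartheta_4^2\theta_4^2)=4\zeta\theta_3^4/(1+\zeta)^2$. Moreover $\gamma=(\zeta+3)/(\zeta-1)$ gives $(\gamma+1)^2=4(1+\zeta)^2/(1-\zeta)^2$ and $\gamma^2-3=-2(\zeta^2-6\zeta-3)/(1-\zeta)^2$, whence $-\chi(\gamma^2-3)/(\gamma+1)^2=(1-\zeta)^2(\zeta^2-6\zeta-3)\theta_2^4/(32\zeta)$. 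Finally, combining \eqref{zc}, \eqref{zd} and \eqref{tdps} with \eqref{za}, \eqref{zb}, \eqref{zss} one obtains $\theta_3^4=\tfrac{(3-\zeta)(1+\zeta)^3}{16\zeta}\,\theta_2^4$. Substituting this everywhere, the desired identity reduces, after clearing the common factor $\theta_2^4/(32\zeta)$, to
$$-16\zeta-(3-\zeta)(1+\zeta)^3+8\zeta(3-\zeta)(1+\zeta)=(1-\zeta)^2(\zeta^2-6\zeta-3),$$
both sides of which expand to $\zeta^4-8\zeta^3+10\zeta^2-3$; this would finish the proof.

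I expect the main work to lie in the second step: extracting $E$ from \eqref{ed} demands careful bookkeeping of the $u^2$-coefficients of several theta factors — this is precisely where the cancellation of the $n$-dependent contributions is forced — and then assembling the right combination of Appendix~\ref{tfs} identities, in particular the coincidences at $u=\pi/3$ encoded in \eqref{tdps} and \eqref{za}--\eqref{zd}, to collapse the theta-constant identity to the displayed polynomial identity in $\zeta$. The intermediate relation $\theta_3(0|\tau)^4=\tfrac{(3-\zeta)(1+\zeta)^3}{16\zeta}\theta_2(0|\tau)^4$ may warrant being isolated as a separate lemma.
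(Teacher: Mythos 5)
Your argument is correct, and its skeleton agrees with the paper's: in both cases the triple-product formula \eqref{tripl} produces a factor $\bigl(\theta_1(\tfrac\pi3+u|\tau)\theta_1(\tfrac\pi3-u|\tau)\bigr)^{2n}$ whose $u^2$-coefficient supplies exactly the term $-2n(\log\theta_1)''(\pi/3)$ in $E$ that cancels the contribution of $\phi=\theta_1^{2n+1}$, and the lemma then reduces to evaluating the $n$-independent constant $G=\tfrac{\theta_1'''(0)}{2\theta_1'(0)}-\tfrac12(\log\theta_4)''(0)-(\log\theta_4)''(\tfrac\pi3)$. The paper packages this as $E=2nF+G$ via the separate expansions of $\theta_1(u|\tau)/\theta_1(3u|3\tau)$ and $\theta_1(u|\tau)^3/\theta_4(3u|3\tau)$, and evaluates $G$ by a slicker device: the auxiliary function $f(u)=\theta_4(\tfrac\pi3+u|\tau)\theta_4(\tfrac\pi3-u|\tau)/\theta_2(u|\tau)\theta_3(u|\tau)$ satisfies $f''(0)/2f(0)=-G$ on one hand and is computable in closed form from \eqref{ta3322} and \eqref{32d} on the other. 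You instead evaluate $G$ through the classical constants relations $\theta_3''(0)/\theta_3(0)=\theta_4''(0)/\theta_4(0)-\theta_2(0)^4$, $\theta_2''(0)/\theta_2(0)=\theta_4''(0)/\theta_4(0)-\theta_3(0)^4$ and $\theta_1'''(0)/\theta_1'(0)=\sum_{j=2}^4\theta_j''(0)/\theta_j(0)$, followed by a direct conversion to $\zeta$; this is more computational but entirely valid. I checked the intermediate identities $\chi=(1-\zeta^2)^2\theta_2(0|\tau)^4/16\zeta$ and $\theta_3(0|\tau)^4=(3-\zeta)(1+\zeta)^3\theta_2(0|\tau)^4/16\zeta$ (the latter follows from \eqref{za}, \eqref{zd} and \eqref{zss} exactly as you indicate) as well as the final polynomial identity, whose two sides both expand to $\zeta^4-8\zeta^3+10\zeta^2-3$; everything is in order.
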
 

\begin{proof}
Recall that $\phi=\theta_1^{2n+1}$, where $\theta_1=\theta_1(u|\tau)$. Hence,
$$ \frac{\phi''}{\phi}-\left(\frac{\phi'}{\phi}\right)^2=(\log\theta_1^{2n+1})''
=(2n+1)\left(\frac{\theta_1''}{\theta_1}-\left(\frac{\theta_1'}{\theta_1}\right)^2\right).$$
It is  clear from \eqref{ed} that
$$E=2nF+G, $$
where
$$\frac{\theta_1(u|\tau)}{\theta_1(3u|3\tau)}=F_0+F_0F u^2+\mathcal O(u^4), $$
$$\frac{\theta_1(u|\tau)^3}{\theta_4(3u|3\tau)}=G_0u^3+G_0G u^5+\mathcal O(u^7). $$
Thus, it is enough to show that
\begin{equation}\label{tia}F=-\frac{\theta_1''}{\theta_1}+\left(\frac{\theta_1'}{\theta_1}\right)^2
\Bigg|_{u=\frac\pi 3},\end{equation}
\begin{equation}\label{tib}G=-\chi\frac{\gamma^2-3}{(\gamma+1)^2}
=\chi\frac{\zeta^2-6\zeta-3}{2(\zeta+1)^2}. 
\end{equation}

By \eqref{tripl},
$$\frac{\theta_1(u|\tau)}{\theta_1(3u|3\tau)}=\frac{C}{\theta_1\left(\frac\pi3+u|\tau\right) \theta_1\left(\frac\pi3-u|\tau\right)}, $$
where $C$ is independent of $u$. On the other hand, by direct Taylor expansion,
$$ \frac{\theta_1\left(\frac\pi3+u|\tau\right) \theta_1\left(\frac\pi3-u|\tau\right)}{\theta_1\left(\frac\pi3|\tau\right)^2}
=1+\left(\frac{\theta_1''}{\theta_1}-\left(\frac{\theta_1'}{\theta_1}\right)^2\right)\Bigg|_{u=\frac\pi 3} u^2+\mathcal O(u^4).
$$
This proves \eqref{tia}.

  To prove \eqref{tib},  we consider the function
  $$f(u)=\frac{\theta_4\left(\frac\pi 3+u\big|\tau\right)\theta_4\left(\frac\pi 3-u\big|\tau\right)}{\theta_2(u|\tau)\theta_3(u|\tau)}.$$
  On the one hand, it follows from \eqref{tdpe} and  \eqref{tripl} that
  $$f(u)=C \frac{\theta_4(3u|3\tau)\theta_1(u|\tau)}{\theta_1(2u|\tau)},$$
  with $C$ independent of $u$.
  This implies
  \begin{equation}\label{tfd}\frac{f''(0)}{2f(0)}=-G.\end{equation}
  On the other hand, it follows from \eqref{ta3322} that
  $$f(u)=\frac{1}{\theta_4(0|\tau)^2 }\left( \theta_3\left(\frac\pi3\Big|\tau\right)^2\frac{\theta_3(u|\tau)}{\theta_2(u|\tau)}- \theta_2\left(\frac\pi3\Big|\tau\right)^2\frac{\theta_2(u|\tau)}{\theta_3(u|\tau)}\right).$$
  Differentiating this identity using \eqref{32d}
 gives
  $$f'(u)=\left(\frac{\theta_3(\frac{\pi}{3}\big|\tau)^2}{\theta_2(u|\tau)^2}+\frac{\theta_2(\frac{\pi}{3}\big|\tau)^2}{\theta_3(u|\tau)^2}\right)\theta_1(u|\tau)\theta_4(u|\tau)$$
  and hence
  $$f''(0)= \left(\frac{\theta_3(\frac{\pi}{3}\big|\tau)^2}{\theta_2(0|\tau)^2}+\frac{\theta_2(\frac{\pi}{3}\big|\tau)^2}{\theta_3(0|\tau)^2}\right)\theta_1'(0|\tau)\theta_4(0|\tau).$$
  Using also \eqref{jtv}, it follows that
  \begin{equation}\label{fsdl}\frac{f''(0)}{f(0)}=\frac{\theta_1'(0|\tau)^2}{2\theta_4(\pi/3|\tau)^2} \left(\frac{\theta_3(\pi/3|\tau)^2}{\theta_2(0|\tau)^2}+\frac{\theta_2(\pi/3|\tau)^2}{\theta_3(0|\tau)^2}\right).\end{equation}
  We now use \eqref{zcc} to write
 \begin{align*} \frac{\theta_3(\pi/3|\tau)^2}{\theta_2(0|\tau)^2}&=\frac{\theta_2(\pi/3|\tau)^2\theta_4(\pi/3|\tau)^2}{\theta_1(\pi/3|\tau)^2\theta_2(0|\tau)^2}\frac{3-\zeta}{1+\zeta},\\
  \frac{\theta_2(\pi/3|\tau)^2}{\theta_3(0|\tau)^2}&=\frac{\theta_2(\pi/3|\tau)^2\theta_4(\pi/3|\tau)^2}{\theta_1(\pi/3|\tau)^2\theta_2(0|\tau)^2}\frac{4\zeta}{(1+\zeta)^2}.\end{align*}
  Inserting these expressions in \eqref{fsdl} and comparing with \eqref{tfd} gives
  \eqref{tib}.
 \end{proof} 
 
 \begin{lemma}\label{chkl}
 The identity \eqref{hki} holds.
 \end{lemma}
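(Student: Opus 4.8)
We shall evaluate the four factors on the left-hand side of \eqref{hki} one by one, clear all theta functions of modulus $3\tau/2$ by means of the triplication formula \eqref{tripl}, and then express the remaining theta functions of modulus $\tau/2$ at the arguments $0$ and $\pi/3$ through the modular functions $z$ and $\gamma$.

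First I would compute $h(\pi\tau/2)$, $h'(\pi+\pi\tau/2)$, $k(0)$ and $k(\pi)$. Both $h(\pi\tau/2)$ and $k(\pi)$ appear as indeterminate forms: in $h$ the factors $\theta_4(3u/2|3\tau/2)$ and $\theta_4(u/2|\tau/2)$ both have a simple zero at $u=\pi\tau/2$, and in $k$ the factors $\theta_2(3u/2|3\tau/2)$ and $\theta_2(u/2|\tau/2)$ both vanish at $u=\pi$. One resolves these limits by a first-order Taylor expansion, keeping the factors $3/2$ and $1/2$ produced by the arguments $3u/2$ and $u/2$. Using the elementary relations $\theta_3(v+\pi/2|\sigma)=\theta_4(v|\sigma)$, $\theta_4(v+\pi/2|\sigma)=\theta_3(v|\sigma)$, $\theta_3(v+\pi\sigma/2|\sigma)=e^{-\ti\pi\sigma/4-\ti v}\theta_2(v|\sigma)$ and $\theta_4(v+\pi\sigma/2|\sigma)=\ti\,e^{-\ti\pi\sigma/4-\ti v}\theta_1(v|\sigma)$, all immediate from \eqref{tpe}, followed by \eqref{tripl} at $u=0$ and its first $u$-derivative, I expect to arrive at
\begin{equation*}
\frac{h'(\pi+\pi\tau/2)}{h(\pi\tau/2)}=\frac{\ti}{2}\,\frac{\theta_1'(0|\tau/2)}{\theta_2(0|\tau/2)}\left(\frac{\theta_2(\pi/3|\tau/2)}{\theta_1(\pi/3|\tau/2)}\right)^{2n},\qquad \frac{k(0)}{k(\pi)}=\left(\frac{\theta_2(\pi/3|\tau/2)}{\theta_1(\pi/3|\tau/2)}\right)^{4}\left(\frac{\theta_4(\pi/3|\tau/2)}{\theta_3(\pi/3|\tau/2)}\right)^{2n-4},
\end{equation*}
together with $\theta_1'(0|3\tau/2)/\theta_2(0|3\tau/2)=\tfrac{1}{3}\bigl(\theta_1(\pi/3|\tau/2)/\theta_2(\pi/3|\tau/2)\bigr)^{2}\,\theta_1'(0|\tau/2)/\theta_2(0|\tau/2)$ for the prefactor in \eqref{hki}.

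Substituting these into \eqref{hki}, the factor $1/\chi$ is put in the closed form $\zeta\,\theta_2(0|\tau)^2/\bigl(4\,\theta_2(\pi/3|\tau)^6\bigr)$ with the help of \eqref{chi}, \eqref{jtv} and \eqref{tdps}; the relation $\theta_1'(0|\tau/2)=\theta_2(0|\tau/2)\theta_3(0|\tau/2)\theta_4(0|\tau/2)$ together with \eqref{t4d} gives $\theta_1'(0|\tau/2)^2/\theta_2(0|\tau/2)^2=\theta_4(0|\tau)^4$; and the remaining modulus-$\tau$ theta functions at $0$ and $\pi/3$ are rewritten in terms of modulus $\tau/2$ using the doubling and Landen identities of Appendix \ref{tfs} and \eqref{zcc}. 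Everything is then a rational expression in theta functions of modulus $\tau/2$ at the arguments $0$ and $\pi/3$. The $n$-dependence disappears at once on inserting
\begin{equation*}
\gamma=-\frac{\theta_1(\pi/3|\tau/2)^2}{\theta_2(\pi/3|\tau/2)^2},\qquad \frac{\theta_1(\pi/3|\tau/2)^2\,\theta_3(\pi/3|\tau/2)^2}{\theta_2(\pi/3|\tau/2)^2\,\theta_4(\pi/3|\tau/2)^2}=-(2z+1),
\end{equation*}
the first of which follows from the expression for $\gamma$ in terms of modulus-$\tau$ theta functions and \eqref{tde}, \eqref{tdf}, the second being one of the identities in \cite[Lemma~9.1]{r3}: indeed the exponents $2n$ and $2n-4$ then become $(-1)^{n}(2z+1)^{-n}$ times $n$-free factors, which combine with the $(-1)^{n+1}(2z+1)^{-(n-1)}$ on the right of \eqref{hki}. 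What remains is an identity between $n$-free rational functions of $z$ and $\gamma$, checked with \eqref{gz} and a few further identities of \cite[Lemma~9.1]{r3}.

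The hard part is this last step: the bookkeeping is lengthy, and one must select the appropriate members of the family of modulus-$(\tau/2)$ theta identities at $0$ and $\pi/3$ from \cite[Lemma~9.1]{r3} --- notably the second displayed identity above and a companion needed to dispose of $\theta_4(\pi/3|\tau/2)^2/\theta_3(\pi/3|\tau/2)^2$. A lesser subtlety is the correct reading of $h(\pi\tau/2)$ and $k(\pi)$ as limits of the meromorphic functions $h$ and $k$, which is consistent with the way these values enter \eqref{qpst} through \eqref{ts}. As a numerical check, one verifies that both sides of \eqref{hki} tend to $\ti/(2\cdot 3^{n+1})$ as $\tau\to\ti\infty$.
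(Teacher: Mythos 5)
Your proposal is correct and follows essentially the same route as the paper's own proof: the same limit evaluations yielding $\tfrac{\ti}{2}\tfrac{\theta_1'(0|\tau/2)}{\theta_2(0|\tau/2)}\bigl(\theta_2(\tfrac\pi3|\tfrac\tau2)/\theta_1(\tfrac\pi3|\tfrac\tau2)\bigr)^{2n}$ and $\bigl(\theta_2(\tfrac\pi3|\tfrac\tau2)/\theta_1(\tfrac\pi3|\tfrac\tau2)\bigr)^{4}\bigl(\theta_4(\tfrac\pi3|\tfrac\tau2)/\theta_3(\tfrac\pi3|\tfrac\tau2)\bigr)^{2n-4}$, the same triplication identity for $\theta_1'(0|\tfrac{3\tau}2)/\theta_2(0|\tfrac{3\tau}2)$, and the same appeal to \cite[Lemma~9.1]{r3} to express everything through $z$ and $\gamma$. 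The only cosmetic difference is that the paper packages $\chi$ together with $(\gamma+1)^2$ from the outset, whereas you treat $1/\chi$ separately; your closing limit check at $\tau\to\ti\infty$ is consistent with the right-hand side of \eqref{hki}.
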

 
 \begin{proof}
 Since $\gamma+1=2(\zeta+1)/(\zeta-1)$, it follows from \eqref{tde}, \eqref{tdf}
 and \eqref{zcc}
  that
 \begin{align*}\frac{(\gamma+1)^2}\chi&=4\left(\frac{\theta_2(0|\tau)\theta_3(0|\tau)\theta_1(\pi/3|\tau)\theta_4(\pi/3|\tau)}{\theta_1'(0|\tau)\theta_4(0|\tau)\theta_2(\pi/3|\tau)\theta_3(\pi/3|\tau)}\right)^2\\
& =4\left(\frac{\theta_2(0|\tau/2)\theta_1(\pi/3|\tau/2)}{\theta_1'(0|\tau/2)\theta_2(\pi/3|\tau/2)}\right)^2. 
\end{align*}
 Using  \eqref{tripl}, it is easy to verify that
 $$\frac{\theta_1'(0|3\tau/2)}{\theta_2(0|3\tau/2)}=\frac 13\frac{\theta_1'(0|\tau/2)\theta_1(\pi/3|\tau/2)^2}{\theta_2(0|\tau/2)\theta_2(\pi/3|\tau/2)^2}, $$
 $$\frac{h(u+\pi+\frac{\pi\tau}2)}{h(u+\frac{\pi\tau}2)}
 =\ti\frac{\theta_1(\frac u2|\frac\tau2)}{\theta_2(\frac u2|\frac\tau2)}
 \left(\frac{\theta_2(\frac\pi3+\frac u2|\frac\tau 2)\theta_2(\frac\pi3-\frac u2|\frac\tau 2)}{\theta_1(\frac\pi3+\frac u2|\frac\tau 2)\theta_1(\frac\pi3-\frac u2|\frac\tau 2)}\right)^n,
  $$
  $$\frac{k(u)}{k(u+\pi)}=\left(\frac{\theta_2(\frac\pi3+\frac u2|\frac\tau 2)\theta_2(\frac\pi3-\frac u2|\frac\tau 2)}{\theta_1(\frac\pi3+\frac u2|\frac\tau 2)\theta_1(\frac\pi3-\frac u2|\frac\tau 2)}\right)^2\left(\frac{\theta_4(\frac\pi3+\frac u2|\frac\tau 2)\theta_4(\frac\pi3-\frac u2|\frac\tau 2)}{\theta_3(\frac\pi3+\frac u2|\frac\tau 2)\theta_3(\frac\pi3-\frac u2|\frac\tau 2)}\right)^{n-2}$$
  and hence
 $$\frac{h'(\pi+\frac{\pi\tau}2)}{h(\frac{\pi\tau}2)}=\frac\ti 2\frac{\theta_1'(0|\frac\tau2)}{\theta_2(0|\frac\tau2)}
 \left(\frac{\theta_2(\frac\pi3|\frac\tau 2)}{\theta_1(\frac\pi3|\frac\tau 2)}\right)^{2n}, $$
  $$\frac{k(0)}{k(\pi)}=\left(\frac{\theta_2(\frac\pi3|\frac\tau 2))}{\theta_1(\frac\pi3|\frac\tau 2)}\right)^4\left(\frac{\theta_4(\frac\pi3|\frac\tau 2)}{\theta_3(\frac\pi3|\frac\tau 2)}\right)^{2n-4}. $$
  Finally, it follows from  \cite[Lemma 9.1]{r3} that
  $$2z+1=-\left(\frac{\theta_1(\frac\pi 3|\frac\tau2)\theta_3(\frac\pi 3|\frac\tau2)}{\theta_2(\frac\pi 3|\frac\tau2)\theta_4(\frac\pi 3|\frac\tau2)}\right)^2, $$
  $$\frac{z+1}{z-1}=\left(\frac{\theta_2(\frac\pi 3|\frac\tau2)\theta_3(\frac\pi 3|\frac\tau2)}{\theta_1(\frac\pi 3|\frac\tau2)\theta_4(\frac\pi 3|\frac\tau2)}\right)^2. $$
  Combining all these identities gives the desired result.
  \end{proof}

  \section{Special polynomials}
  \label{sp}
  
In this appendix, we collect some identities related to the polynomials
$s_n$ and $\bar s_n$. In \cite[\S 5.2]{r4}, the second author identified them as
a special case of the more general rational functions $t^{(k_0,k_1,k_2,k_3)}$.  
Replacing $\zeta$ by $-z-1$ in the relevant identities gives
  \begin{subequations}\label{sst}
 \begin{align}s_n\left(\gamma^{-2}\right)&=\left(\frac{(-1)(z-1)}{2z^2(z+1)^2(2z+1)}\right)^{\frac{n(n-1)}2} t^{(n,n,0,0)}(-z-1), \\
 \bar s_{n}\left(\gamma^{-2}\right)&=
\frac{(-1)^{\frac{(n+1)(n+2)}2}(z-1)^{\frac{(n-1)(n+2)}2}}{(2z^2(2z+1))^{\frac{n(n-1)}2}(z+1)^{n^2-1}}\,t^{(n,n,1,-1)}(-z-1),\end{align}
\end{subequations}
where $\gamma$ and $z$ are related as in \eqref{gz}.

We will need further relations that can be obtained from \eqref{sst} using symmetries of the $t$-polynomials. First of all,
applying 
\cite[Prop.\ 2.3]{r4}  gives
\begin{subequations}\label{smt}
\begin{align}
\nonumber s_{-n-1}\left(\gamma^{-2}\right)&=
\frac{(-1)^{\frac{(n-1)(n-2)}2}3^n(z-1)^{\frac{n^2-5n+2}2}}{2^{\frac{n^2-n+2}2}z^{n(n-5)}(z+1)^{(n-1)(n-2)}(z+2)^2(2z+1)^{\frac{n^2-n+2}2}}\\
&\label{sstc}\quad\times
 t^{(n,n,-1,-1)}(-z-1),\\
 \nonumber \bar s_{-n-1}\left(\gamma^{-2}\right)&=
\frac{(-1)^{\frac{n(n+1)}2}3^n(z-1)^{\frac{n^2-7n+2}2}}{2^{\frac{n^2-n+2}2}z^{n(n-5)}(z+1)^{n^2-4n+2}(z+2)^2(2z+1)^{\frac{n^2-n+6}2}}\\
&\label{sstd}\quad\times
 t^{(n,n,-2,0)}(-z-1).
\end{align}
\end{subequations}
It folllows from \cite[Prop.\ 2.2]{r4} that
$$t^{(k_0,k_1,k_2,k_3)}(-z-1)=\left(\frac{(-1)^{k_1+k_3}(z+1)^{k_1+k_2-n}(z+2)^{k_2+k_3}}{z^{k_2+k_3-n}(z-1)^{k_1+k_2}}\right)^{n-1} t^{(k_0,k_3,k_2,k_1)}(z),$$
where $n=(k_0+k_1+k_2+k_3)/2$.
Applying this transformation to \eqref{sst} and \eqref{sstc} gives
\begin{subequations}\label{stfc}
\begin{align}s_n\left(\gamma^{-2}\right)&=\left(\frac{(-1)}{2(z+1)^2(z-1)(2z+1)}\right)^{\frac{n(n-1)}2} t^{(n,0,0,n)}(z), \\
 \bar s_n\left(\gamma^{-2}\right)&=\frac{(-1)^{\frac{n(n+1)}2}}{\big(2(z+1)^2(z-1)(2z+1)\big)^{\frac{n(n-1)}2}} \,t^{(n,-1,1,n)}(z), \\
\nonumber s_{-n-1}\left(\gamma^{-2}\right)&=
 \frac{(-1)^{\frac{(n-1)(n-2)}2}3^nz^{4n-2}}{(2(z-1)(2z+1))^{\frac{n^2-n+2}2}(z+1)^{(n-1)(n-2)}(z+2)^{2n-2}}\\
 &\quad\times t^{(n,-1,-1,n)}(z). \end{align}
Finally,
 combining \cite[Prop.\ 2.2]{r4} and \cite[Prop.\ 2.4]{r4}
gives
\begin{multline*}t^{(k_0,k_1,k_2,k_3)}(-z-1)=(-1)^{(k_1+k_2+n)(k_1+k_3+n)}
\frac{Y_{n-k_0}Y_{n-k_1}Y_{n-k_2}Y_{n-k_3}}{Y_{k_0}Y_{k_1}Y_{k_2}Y_{k_3}}\\
\times\left(\frac{z^{k_0+k_1-n}(z+1)^{k_1+k_2-n}(z+2)^{k_2+k_3}}{(z-1)^{k_1+k_2}(2z+1)^{k_1+k_3-n}}\right)^{n-1}
 t^{(n-k_2,n-k_1,n-k_0,n-k_3)}(z),\end{multline*}
where again $n=(k_0+k_1+k_2+k_3)/2$ and $Y_k$ are elementary coefficients satisfying
$$Y_{k+1}Y_{k-1}=2(2k+1)Y_k^2,\qquad Y_0=Y_1=1. $$
Applying this to \eqref{sstd} gives
  \begin{align}\nonumber\bar s_{-n-1}\left(\gamma^{-2}\right)&=
 \frac{(-1)^{\frac{n(n+1)}2}3^n(2n+1)z^{4n-2}}{2^{\frac{n^2-n+2}2}(z+1)^{n(n-3)}(z-1)^{\frac{n^2-n+6}2}(z+2)^{2n-2}(2z+1)^{\frac{n^2+n+2}2}}\\
&\quad\times  t^{(n+1,-1,-1,n-1)}(z).\end{align}
 \end{subequations}

\section{Infinite-lattice limit}
\label{ills}

Correlation functions for the infinite-length XYZ spin chain have been studied by various methods, see e.g.\ \cite{b4,bj,lp1,lp2,q,sh}. In this appendix, we will sketch how such correlation functions can be computed in the special case of interest to us. We have not made a detailed comparison with more general results available in the literature, but the remark at the end of \cite[\S 4.1]{bj} indicates that our expression should agree with \cite[Eq.\ (4.2)]{bj}, after a suitable identification of parameters.

In this appendix, we write the lowest eigenvalue of the Hamiltonian for a chain of length $L$  as $L\varepsilon_L$ and use $\varepsilon$ for the infinite-lattice limit $\varepsilon=\lim_{L\rightarrow\infty}\varepsilon_L$ (assuming convergence). We have shown that when $L=2n+1$, our correlations can be expressed in terms of the quantity \eqref{feg}, $$f_n=\frac{\varepsilon_L(2(\varepsilon_L)_\eta+\Gamma_\eta)}{\zeta\zeta_\eta-\Gamma_\eta}\Bigg|_{\eta=\pi/3}.$$ If we let $n\rightarrow\infty$, we find that identical expressions \eqref{caj} hold for the correlations, but with $f_n$ replaced by
\begin{equation}\label{finf}
f_\infty=\lim_{n\rightarrow \infty} f_n=\frac{\varepsilon(2\varepsilon_\eta+\Gamma_\eta)}{\zeta\zeta_\eta-\Gamma_\eta}\Bigg|_{\eta=\pi/3}.
\end{equation}
We will compute $f_\infty$ directly using Baxter's explicit expression for $\varepsilon$. We note that Baxter obtained this expression, and several of its properties, by considering the infinite-lattice limit along chains of even length. In contrast, we focus on odd lengths. Hence, we need to assume that the limits along chains of even and odd lengths lead to the same result. Although this assumption seems plausible, we have no proof that it holds.

We start from  \cite[(10.14.30)]{b5}, written in Baxter's notation as
\begin{multline}\label{feb}\varepsilon=\frac{-J_x+J_y+J_z}2\\
-\frac{\pi(J_x^2-J_y^2)^{1/2}}{2I\sqrt{k}}\sum_{m=1}^\infty\frac{x^{-m}(x^{3m}-q^{m/2})(1-q^{m/2}x^{-m})(1-x^{2m})}{(1-q^m)(1+x^{2m})}. \end{multline}
Here, $I,k,q,x$ are functions of the spin chain's anisotropy parameters. The expression for $\varepsilon$ holds if these parameters are in the so-called principal regime, defined through the inequalities
$$|J_y|<J_x<-J_z. $$
The values in other regimes follow from the fact that $\varepsilon$ is invariant if the parameters $(J_x,J_y,J_z)$ are permuted or if any two of them are multiplied by $-1$.

We parametrize the chain as in \eqref{zge}, \eqref{aip}, where we now take $\eta\in\mathbb R$ and $\tau\in \ti\mathbb R_{>0}$. One can show that $0<\zeta<1$ and hence 
$|J_z|<J_y<J_x$. Thus, we should apply \eqref{feb} with the replacements $(J_x,J_y,J_z)\mapsto(J_y,-J_z,-J_x)$. The parameters in \eqref{feb} are related to ours by
$$q^{1/2}=e^{-\pi\ti/\tau},\qquad x=e^{\ti(2\eta-\pi)/\tau} $$
and one can compute
$$\frac{\pi}{2I\sqrt{k}}=\frac{2\ti}{\tau\theta_4(0|\tau)^2},
\qquad (J_y^2-J_z^2)^{1/2}=\frac{2\theta_4(0|\tau)^2\theta_1(2\eta|2\tau)}{\theta_2(0|\tau)^2\theta_4(2\eta|2\tau)}. $$
The replacements, together with these expressions, lead us to the identity
\begin{multline}\label{bfe}
  \varepsilon=-\frac{J_x+J_y+J_z}2\\
  -\frac{4\ti\theta_1(2\eta|2\tau)}{\tau\theta_2(0|\tau)^2\theta_4(2\eta|2\tau)}\sum_{m=1}^\infty\frac{x^{-m}(x^{3m}-q^{m/2})(1-q^{m/2}x^{-m})(1-x^{2m})}{(1-q^m)(1+x^{2m})}.  
\end{multline}
The supersymmetric case $\eta=\pi/3$ corresponds to
$x^3=q^{1/2}$. In this case, all terms in the infinite series vanish and we recover \eqref{gse}, that is,
\begin{equation}\label{rsr}\varepsilon\big|_{\eta=\pi/3}=-\frac{J_x+J_y+J_z}{2}=-\frac{\zeta^2+3}{4}.
\end{equation}

Let us now consider the quantity $2\varepsilon_\eta+\Gamma_\eta\big|_{\eta=\pi/3}$. Since $\Gamma=J_x+J_y+J_z-2$, the contribution from the first term on the right-hand side of \eqref{bfe} vanishes. In the remaining terms, the derivative with respect to $\eta$ must hit the factor $x^{3m}-q^{m/2}$ to yield a non-zero contribution when $\eta=\pi/3$. It follows that
\begin{equation}\label{egd} 2\varepsilon_\eta+\Gamma_\eta\Big|_{\eta=\pi/3}
=\frac{48\,\theta_1(2\pi/3|2\tau)}{\tau^2\theta_2(0|\tau)^2\theta_4(2\pi/3|2\tau)}
\,S,\end{equation}
where
$$S=
\sum_{m=1}^\infty\frac{mq^{\frac m3}(1-q^{\frac m3})^2}{(1-q^m)(1+q^{\frac m3})}.
$$

To compute $S$ we will use the identity \cite[\S 21, Ex.\ 11]{ww}
$$\frac{\partial}{\partial u}\log\theta_4(u|\tau)
=4\sum_{m=1}^{\infty}\frac{e^{\ti\pi\tau  m}\sin(2mu)}{1-e^{2\ti\pi\tau m}}, \qquad |\operatorname{Im}(u)|<\frac{\pi}{2}\,\operatorname{Im}(\tau),$$
which implies
$$\frac{\partial^2}{\partial u^2}\log\theta_4(u|\tau)
=8\sum_{m=1}^{\infty}\frac{me^{\ti\pi\tau  m}\cos(2mu)}{1-e^{2\ti\pi\tau m}}. $$
Rewriting \eqref{egd} with the help of the identity
\begin{align*}\frac{q^{\frac m3}(1-q^{\frac m3})^2}{(1-q^m)(1+q^{\frac m3})}
&=q^m\frac{4-3(q^{\frac m3}+q^{-\frac m3})+(q^{\frac{2m}3}+q^{-\frac{2m}3})}{1-q^{2m}}\\
&=e^{-2\pi\ti m/\tau}\frac{4-6\cos\left(\frac{2\pi m}{3\tau}\right)+2\cos\left(\frac{4\pi m}{3\tau}\right)}{1-e^{-4\pi\ti m/\tau}}, \end{align*}
it follows that 
$$ S=\frac 14
\left.\frac{\partial^2}{\partial u^2}\log\frac{\theta_4\big(u|-\frac 2\tau\big)^2\theta_4\big(u+\frac{2\pi}{3\tau}|-\frac 2\tau\big)}{\theta_4\big(u+\frac \pi{3\tau}|-\frac 2\tau\big)^3}\right|_{u=0}.
$$
Applying \eqref{tmt} gives
$$\label{slg} S=\frac {\tau^2}{16}
\left.\frac{\partial^2}{\partial u^2}\log g(u)\right|_{u=0},
$$
where
$$g(u)=\frac{\theta_2\big(u|\frac\tau2\big)^2\theta_2\big(u+\frac{\pi}{3}|\frac \tau2\big)}{\theta_2\big(u+\frac \pi6|\frac \tau2\big)^3}. $$
Inserting this expression in \eqref{egd}, we obtain
\begin{equation}
  \label{edlg}
  2\varepsilon_\eta+\Gamma_\eta\Big|_{\eta=\pi/3}=\frac{3\theta_1(2\pi/3|2\tau)}{\theta_2(0|\tau)^2\theta_4(2\pi/3|2\tau)}\left.\frac{\partial^2}{\partial u^2}\log g(u)\right|_{u=0}.
\end{equation}

To proceed, we use some theta function identities. On the one hand, we have
$$\left.\frac{\partial^2}{\partial u^2}\log g(u)\right|_{u=0}=\frac{g''(0)g(0)-g'(0)^2}{g(0)^2}=\left.\frac 1{2g(0)^2}\frac{\partial^2}{\partial u^2} g(u)g(-u)\right|_{u=0}.$$ On the other hand, using \eqref{wsi}, one can write $$g(u)g(-u)=\frac{\theta_1(\pi/3|\tau/2)^2}{\theta_2(\pi/3|\tau/2)^2}\,h(u)^2-\frac{\theta_2(0|\tau/2)}{\theta_2(\pi/3|\tau/2)}\,h(u)^3, $$
where
$$h(u)=\frac{\theta_2\big(u|\frac\tau 2\big)^2}{\theta_1\big(\frac \pi3+u|\frac \tau2\big)\theta_1\big(\frac \pi3-u|\frac \tau2\big)}. $$
  Since $h$ is even, it follows that
  \begin{equation}\label{sldg}\left.\frac{\partial^2}{\partial u^2}\log g(u)\right|_{u=0}
  =\frac{h(0)h''(0)}{2g(0)^2}\left(2\frac{\theta_1(\pi/3|\tau/2)^2}{\theta_2(\pi/3|\tau/2)^2}-3\frac{\theta_2(0|\tau/2)}{\theta_2(\pi/3|\tau/2)}\,h(0)\right).\end{equation}
  Again by \eqref{wsi},
  $$h(u)-h(0)=\frac{\theta_2\big(\frac\pi3|\frac\tau2\big)^2\theta_1\big(u|\frac\tau2\big)^2}{\theta_1\big(\frac\pi3|\frac\tau2\big)^2\theta_1\big(\frac \pi3+u|\frac \tau2\big)\theta_1\big(\frac \pi3-u|\frac \tau2\big)}, $$
  which implies 
  $$h''(0)=\frac{2\theta_2\big(\frac\pi3|\frac\tau2\big)^2\theta_1'(0|\frac\tau2)^2}{\theta_1\big(\frac\pi3|\frac\tau2\big)^4}.$$
  Inserting also
  $$g(0)=\frac{\theta_2\big(0|\frac\tau2\big)^2\theta_2\big(\frac{\pi}{3}|\frac \tau2\big)}{\theta_1\big(\frac \pi 3|\frac \tau2\big)^3},
  \qquad h(0)=\frac{\theta_2(0|\frac\tau 2)^2}{\theta_1\big(\frac \pi3|\frac \tau2\big)^2} $$
  in \eqref{sldg} gives
$$
  \frac{\partial^2}{\partial u^2}\log g(u)\Bigg|_{u=0}=
  \frac{\theta_1'(0|\frac\tau2)^2}{\theta_2(0|\frac\tau2)^2}\left(
  2\frac{\theta_1\big(\frac\pi3|\frac\tau2\big)^2}{\theta_2\big(\frac\pi3|\frac\tau2\big)^2}-3\frac{\theta_2(0|\frac\tau2)^3}{\theta_1\big(\frac\pi3|\frac\tau2\big)^2\theta_2\big(\frac\pi3|\frac\tau2\big)}
  \right).$$
  Using first \eqref{tde}--\eqref{tdf} and then \eqref{zcc}, one checks that
  \begin{align*}
  2\frac{\theta_1\big(\frac\pi3|\frac\tau2\big)^2}{\theta_2\big(\frac\pi3|\frac\tau2\big)^2}-3\frac{\theta_2(0|\frac\tau2)^3}{\theta_1\big(\frac\pi3|\frac\tau2\big)^2\theta_2\big(\frac\pi3|\frac\tau2\big)}
  & =2\frac{3+\zeta}{1-\zeta}-24\frac{1+\zeta}{(3+\zeta)(1-\zeta)}\\
  &=\frac{2(\zeta^2-6\zeta-3)}{(1-\zeta)(3+\zeta)}.
  \end{align*}
We insert this expression in \eqref{edlg} and arrive at
\begin{equation}\label{egdf}  2\varepsilon_\eta+\Gamma_\eta\Big|_{\eta=\pi/3}
=6\frac{\theta_1(2\pi/3|2\tau)}{\theta_2(0|\tau)^2\theta_4(2\pi/3|2\tau)}
\frac{\theta_1'(0|\frac\tau2)^2}{\theta_2(0|\frac\tau2)^2}
\frac{(\zeta^2-6\zeta-3)}{(1-\zeta)(3+\zeta)}.
\end{equation}

  Returning to \eqref{finf}, we insert \eqref{zzege}, \eqref{rsr} and \eqref{egdf} to obtain
  $$f_\infty=-\frac 18\frac{\theta_1(2\pi/3|2\tau)\theta_1(\pi/3|\tau)\theta_2(0|\tau)^2\theta_1'(0|\tau/2)^2}{\theta_4(2\pi/3|2\tau)\theta_1'(0|\tau)\theta_2(\pi/3|\tau)^4\theta_2(0|\tau/2)^2}\frac{(\zeta^2+3)(\zeta^2-6\zeta-3)}{(1-\zeta)(3+\zeta)}. $$
  Using \eqref{t1d}--\eqref{tdf}, we rewrite this expression in terms of theta functions with nome $\tau$. Simplifying the resulting expression with the help of  \eqref{jtv} and \eqref{tdps} gives
  $$ f_\infty=-\frac14\frac{\theta_1(\pi/3|\tau)^2\theta_4(0|\tau)^2}{\theta_2(\pi/3|\tau)^2\theta_3(0|\tau)^2}\frac{(\zeta^2+3)(\zeta^2-6\zeta-3)}{(1-\zeta)(3+\zeta)}. $$
  We can then apply \eqref{zcc} and finally obtain
  \begin{equation}\label{finfres}f_\infty=-\frac{(\zeta^2+3)(\zeta^2-6\zeta-3)}{8(\zeta+1)^2}=\frac{(\gamma^2+3)(\gamma^2-3)}{(\gamma^2-1)^2},
  \end{equation}
where $\gamma=(\zeta+3)/(\zeta-1)$.

The identity \eqref{finfres} holds for $0<\zeta<1$. The expression remains valid for $\zeta=0$ by Stroganov's results \cite{st,st2}, and for $\zeta=1$, too, thanks to the triviality of the correlation functions at this point. The value in other regimes follows since $f_\infty$ is invariant under permutations of the anisotropy parameters, that is, it satisfies the same symmetries \eqref{fns} as $f_n$. This gives the final result \eqref{finfe}.
%$$ f_\infty=\begin{cases}\displaystyle\frac{(\zeta^2+3)(\zeta^2-3)}{(\zeta^2-1)^2}, & |\zeta|\geq 3,\\[3mm]
%\displaystyle\frac{(\delta^2+3)(\delta^2-3)}{(\delta^2-1)^2}=-\frac{(\zeta^2+3)(\zeta^2+6\zeta-3)}{8(\zeta-1)^2}, & -3\leq \zeta\leq 0,\\[3mm]
%\displaystyle\frac{(\gamma^2+3)(\gamma^2-3)}{(\gamma^2-1)^2}=-\frac{(\zeta^2+3)(\zeta^2-6\zeta-3)}{8(\zeta+1)^2},& 0\leq \zeta\leq 3.\\
%\end{cases} $$
We repeat that, in the parametrization \eqref{jsp}, the nearest-neighbour correlation functions for the infinite chain are given by \eqref{cf} with $f_n$ replaced by $f_\infty$. For instance,
$$C^z=1-\frac{4f_\infty}{\zeta^2+3}=\begin{cases}\displaystyle\frac{\zeta^4-6\zeta^2+13}{(\zeta^2-1)^2}, & |\zeta|\geq 3,\\[3mm]
\displaystyle\frac{(\zeta+1)(3\zeta-1)}{2(\zeta-1)^2}, & -3\leq\zeta\leq 0,\\[3mm]
\displaystyle\frac{(\zeta-1)(3\zeta+1)}{2(\zeta+1)^2},& 0\leq \zeta\leq 3.\\
\end{cases} $$
One can check that the correlation functions are twice differentiable in $\zeta$, but their third derivatives jump at the XXZ points $\zeta=0$ and $\zeta=\pm 3$.

 \end{document}